\algrenewcommand\textproc{\texttt}
\title{A Simple Representation of Tree Covering Utilizing Balanced Parentheses and Efficient Implementation of Average-Case Optimal RMQs}
\titlerunning{A Simple Tree Covering Utilizing BPs and Efficient Average-Case Optimal RMQs}
\date{}
\author{Kou Hamada}{The University of Tokyo, Japan}{zkouaaa@g.ecc.u-tokyo.ac.jp}{https://orcid.org/0009-0005-9046-9818}{}
\author{Sankardeep Chakraborty}{The University of Tokyo, Japan}{sankardeep.chakraborty@gmail.com}{}{}
\author{Seungbum Jo}{Chungnam National University, South Korea}{sbjo@cnu.ac.kr}{https://orcid.org/0000-0002-8644-3691}{}
\author{Takuto Koriyama}{The University of Tokyo, Japan}{}{}{}
\author{Kunihiko Sadakane}{The University of Tokyo, Japan}{sada@mist.i.u-tokyo.ac.jp}{https://orcid.org/0000-0002-8212-3682}{}
\author{Srinivasa Rao Satti}{Norwegian University of Science and Technology, Norway}{srinivasa.r.satti@ntnu.no}{https://orcid.org/0000-0003-0636-9880}{}
\authorrunning{K. Hamada, S. Chakraborty, S. Jo, T. Koriyama, K. Sadakane, and S.\,R. Satti}
\keywords{Hypersuccinct trees, Succinct data structures, Range minimum queries, Binary trees}
\renewcommand{\epsilon}{\varepsilon}
\newcommand{\Order}{\mathrm{O}}
\newcommand{\order}{\mathrm{o}}
\DeclareMathOperator*{\argmin}{argmin}
\newcommand{\BPb}{\mathrm{BP}_\mathrm{b}}
\newcommand{\BPo}{\mathrm{BP}_\mathrm{o}}
\newcommand{\rank}{\texttt{rank}}
\newcommand{\ranko}{\rank_{\texttt{(}}}
\newcommand{\rankc}{\rank_{\texttt{)}}}
\newcommand{\select}{\texttt{select}}
\newcommand{\selecto}{\select_{\texttt{(}}}
\newcommand{\selectc}{\select_{\texttt{)}}}
\newcommand{\access}{\texttt{access}}
\newcommand{\open}{\texttt{open}}
\newcommand{\close}{\texttt{close}}
\newcommand{\enclose}{\texttt{enclose}_{\texttt{)}}}
\newacro{BP}[BP]{Balanced-Parenthesis}
\newacro{RMQ}[RMQ]{Range Minimum Query}
\newacro{LCA}[LCA]{Lowest Common Ancestor}
\newacro{DFS}[DFS]{Depth-First Search}
\newacro{BFS}[BFS]{Breadth-First Search}
\newacro{LCP}[LCP]{Longest Common Prefix}
\begin{document}
\maketitle

\begin{abstract}
Tree covering is a technique for decomposing a tree into smaller-sized trees with desirable properties, and has been employed in various succinct data structures. However, significant hurdles stand in the way of a practical implementation of tree covering: a lot of pointers are used to maintain the tree-covering hierarchy and many indices for tree navigational queries consume theoretically negligible yet practically vast space. To tackle these problems, we propose a simple representation of tree covering using a balanced-parenthesis representation. The key to the proposal is the observation that every micro tree splits into at most two intervals on the BP representation. Utilizing the representation, we propose several data structures that represent a tree and its tree cover, which consequently allow micro tree compression with arbitrary coding and efficient tree navigational queries. We also applied our data structure to average-case optimal RMQ by Munro et al.~[ESA 2021] and implemented the RMQ data structure. Our RMQ data structures spend less than $2n$ bits and process queries in a practical time on several settings of the performance evaluation, reducing the gap between theoretical space complexity and actual space consumption. For example, our implementation consumes $1.822n$ bits and processes queries in \SI{5}{\micro\second} on average for random queries and in \SI{13}{\micro\second} on average for the worst query widths. We also implement tree navigational operations while using the same amount of space as the RMQ data structures. We believe the representation can be widely utilized for designing practically memory-efficient data structures based on tree covering. 
\end{abstract}

\newpage

\section{Introduction}

With the explosive growth of data volumes, data structures that process data quickly and memory-efficiently are of paramount interest. Data structures that asymptotically achieve the information-theoretic lower bound and support operations efficiently are called succinct data structures. Starting with the work by Jacobson~\cite{Jacobson1989Space}, many succinct solutions have been proposed for various settings.

Trees have been thoroughly studied in the context of succinct representations due to their fundamental nature and wide applicability. One major subclass of trees is ordinal trees, which are rooted trees where the children of each node are ordered. The information-theoretic lower bound of representing ordinal trees with $n$ nodes is $2n - \Theta(\log n)$\footnote{In this paper, $\log$ and $\lg$ denote the logarithm of base $e$ and $2$, respectively.} bits~\cite{Munro2001Succinct}. Researchers have devised many succinct data structures for ordinal trees, such as the level-order unary degree sequence (LOUDS)~\cite{Jacobson1989Space}, the balanced parentheses (BP)~\cite{Munro2001Succinct}, and the depth-first unary degree sequence (DFUDS)~\cite{Benoit2005DFUDS}, to name a few. 
All these representations are efficient in practice~\cite{Arroyuelo2010SuccinctTrees,Navarro2014FullyFunctional}. Another important subclass of trees is cardinal trees. Cardinal trees are rooted trees where every node has a fixed number of labeled slots and every slot has a child node or is empty. In this paper, we consider the cardinal trees whose nodes have two slots, which are called binary trees. Munro and Raman~\cite{Munro2001Succinct} employed a bijection between binary trees with $n$ nodes and ordinal trees with $n + 1$ nodes to extend their BP representation for ordinal trees to binary trees, obtaining a succinct representation of binary trees using $2n + \order(n)$ bits.

All the representations of trees described above convert trees into sequences and translate tree navigational queries into operations on the sequences. Another promising approach is tree covering~\cite{Farzan2014UniformParadigm}, which decomposes trees into parts called micro trees.
Since the decomposition is suited for look-up tables, tree covering can support a variety of queries in constant time~\cite{Davoodi2014EncodingRangeMinima,Farzan2014UniformParadigm}. Also, its hierarchical structure is beneficial in designing succinct data structures involving trees~\cite{ChakrabortyJSS21,Chakraborty2021CliqueWidth,Davoodi2014EncodingRangeMinima,Farzan2014UniformParadigm,He2014Framework,Munro2021Hypersuccinct,Tsur2018representation}. For example, hypersuccinct trees~\cite{Munro2021Hypersuccinct}, which we discuss in Sec.\,\ref{subsec:hypersuccinct_trees}, encode micro trees with a Huffman code and achieve optimal compression for various tree sources.

Although tree covering powerfully facilitates designing succinct data structures for trees, the data structures based on tree covering tend to require numerous $\order(n)$-bit indexes, such as pointers and look-up tables. While the data structure supports a wide range of tree navigational queries, the number of indexes needed to support the queries increases as well. Thus, straightforward implementation of data structures based on tree covering is unlikely to be efficient in practice. To the best of our knowledge, there is no practical implementation of succinct data structures based on tree covering. Thus, a practical design of tree covering suitable for implementation is extremely desirable.

We illustrate the need for practical tree-covering data structures by discussing their application to the \ac{RMQ} problem. Given a static array $A$ of length $n$ consisting of totally ordered objects, an RMQ data structure supports the following queries efficiently: given two indices $i, j$ with $1 \le i \le j \le n$, return $\argmin_{i \le k \le j} A[k]$, i.e., the index of the minimum in the subarray of $A$ from the $i$-th element to the $j$-th element. The problem appears as a subroutine in many real-world applications, such as auto-completion~\cite{Hsu2013Completion}, data compression~\cite{Chen2008LZ}, and document retrieval~\cite{Sadakane2007TextRetrieval}. 

Fischer and Heun~\cite{Fischer2011Space} first designed a succinct \ac{RMQ} data structure using $2n + \order(n)$ bits, achieving the worst-case optimal space complexity.
As for implementation, Ferrada and Navarro~\cite{Ferrada2017ImprovedRMQ} provided a practical implementation of a succinct \ac{RMQ} data structure by utilizing the range min-max tree~\cite{Navarro2014FullyFunctional}. Their implementation consumes $2.1 n$ bits and takes 1--\SI{3}{\micro\second} per query. Baumstark et al.~\cite{Baumstark2017Practical} also gave another competitive implementation with a faster query time of about \SI{1}{\micro\second}.

On the other hand, if we assume that the input array for \ac{RMQ} is a random permutation, the lower bound for the expected space consumption drops below $2n$ bits: this lower bound, also known as the expected effective entropy, is $1.736 n + \order(n)$ bits~\cite{Golin2016Encoding, Kieffer2009Structural}. Although the asymptotic lower bound for the worst-case space complexity remains $2n$ bits, it opens up the possibility of designing \ac{RMQ} data structures that consume less than $2n$ bits on average.

Davoodi et al.~\cite{Davoodi2014EncodingRangeMinima} proposed an \ac{RMQ} data structure that uses $1.919 n + \order(n)$ bits on average and supports constant-time queries. Munro et al.~\cite{Munro2021Hypersuccinct} designed hypersuccinct trees and applied them to Cartesian trees, obtaining an \ac{RMQ} data structure that takes $1.736 n + \order(n)$ bits on average and $2n + \order(n)$ bits in the worst case while supporting constant-time queries. They also found that their \ac{RMQ} data structure uses $2 \lg \binom{n}{r} + \order(n)$ bits when applied to an array of length $n$ with $r$ increasing runs. The average space consumption of their \ac{RMQ} data structure is asymptotically optimal in both cases.

While both RMQ data structures theoretically improve the average space consumption from the succinct solution by Fischer and Heun~\cite{Fischer2011Space}, they employ the tree-covering technique, making straightforward implementation inefficient in practice. To the best of our knowledge, there is no implementation of such an RMQ data structure, nor is there any implementation that consumes less than $2n$ bits. Therefore, a practical representation of tree covering may lead to a space-efficient RMQ data structure that is unprecedented.

\subsection{Our contribution}

Our main contribution is the proposal of a simple representation of tree covering in the BP representation for both ordinal trees and binary trees. The representation is based on the observation that every micro tree splits into at most two intervals in the BP representation. Utilizing the representation, we propose several practical designs of succinct data structures for trees and their tree covers. Also, as an application, we present an optimized design of an average-case optimal \ac{RMQ} data structure based on hypersuccinct trees. 
We also implement \ac{RMQ} data structures using our tree-covering representation. In empirical evaluations, the implementations spend less than $2n$ bits and process queries in a practical time on several settings of the performance evaluation. Furthermore, we implement tree navigational operations while spending the same amount of space as the \ac{RMQ} data structures.

The remainder of this paper is structured as follows. Sec.\,\ref{sec:preliminaries} introduces the prerequisite knowledge. Sec.\,\ref{sec:bp_repr_of_tc} delivers a simple representation of tree-covering structure in the BP representation for ordinal trees and binary trees and offers memory-efficient tree-covering data structures. Sec.\,\ref{sec:optimization_RMQ} applies our practical data-structure design to RMQ data structures and discusses further optimization. Sec.\,\ref{sec:performance_evaluation} presents the performance evaluation. Sec.\,\ref{sec:conclusion} concludes the paper and provides future directions. 

\section{Preliminaries}
\label{sec:preliminaries}

\subsection{Balanced Parentheses}
\label{subsec:balanced_parentheses}

In this section, we introduce a balanced sequence of parentheses and some operations on it~\cite{Navarro2016Compact}, which will be useful in the \ac{BP} representation of trees. A sequence of parentheses is balanced if the following conditions are satisfied:
\begin{itemize}
    \item The sequence is of length $2n$, i.e., even, and consists of $n$ opening parentheses and $n$ closing parentheses.
    \item The sequence has $n$ pairs of matching parentheses: each pair has an opening parenthesis on the left and a closing parenthesis on the right, and when considering intervals whose endpoints are matching parentheses, any two of the intervals are disjoint or one contains the other.
\end{itemize}

We define some operations on a balanced sequence of parentheses. In the following three operations, each parenthesis is represented by its index in the sequence.
\begin{itemize}
    \item Given a closing parenthesis, the \open{} operation returns the opening parenthesis that matches the given closing parenthesis.
    \item Given an opening parenthesis, the \close{} operation returns the closing parenthesis that matches the given opening parenthesis.
    \item Given a closing parenthesis, the $\enclose{}$ operation returns the closing parenthesis such that the corresponding interval tightly encloses the interval of the given closing parenthesis.
\end{itemize}
Fig.\,\ref{fig:BP_operations} shows an example of the three operations above. Note that the $\enclose{}$ operation is different from the standard \textproc{enclose} operation~\cite{Navarro2016Compact}, which takes and returns opening parentheses.

\begin{figure}[hbtp]
    \centering
    \includegraphics[width=0.4\linewidth]{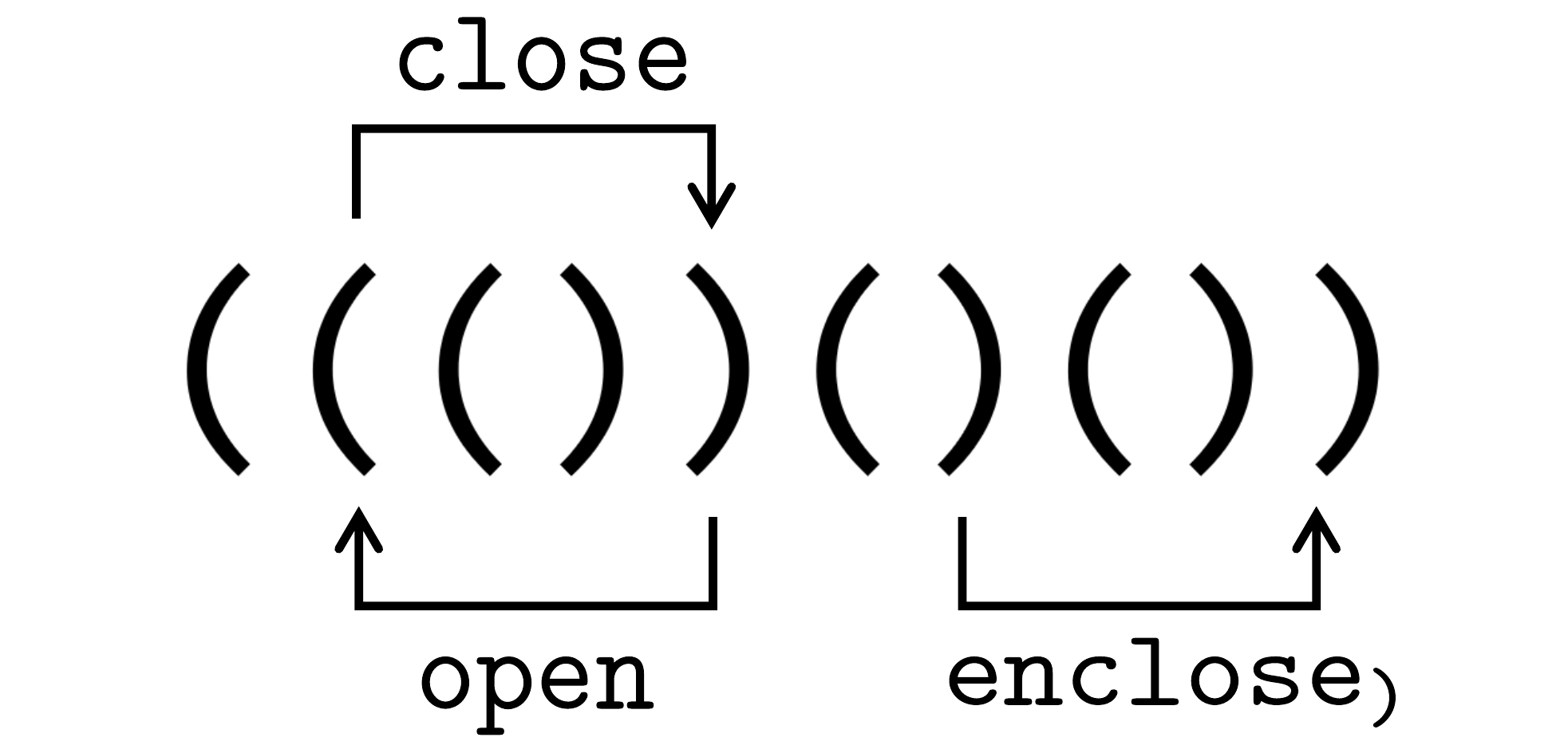}
    \caption{An example of the operations specific to a balanced sequence of parentheses.}
    \label{fig:BP_operations}
\end{figure}

We also define some useful operations that are not specific to a balanced sequence of parentheses.
\begin{itemize}
    \item Given an index of the sequence, the \access{} operation returns the parenthesis at the given index.
    \item Given an index $r$, the $\ranko$ and $\rankc$ operations respectively count the number of opening and closing parentheses among the first $r$ parentheses in the sequence.
    \item Given a number $j$, the $\selecto$ and $\selectc$ operations respectively find the $j$-th opening and closing parenthesis and return the index of the parenthesis in the sequence.
\end{itemize}

We define another operation called \textproc{rmq}, which is used to implement \textproc{lca} on the \ac{BP} representation of a binary tree. To explain \textproc{rmq}, we define $\textproc{excess}(k)$ to be $\ranko(k) - \rankc(k)$.
If the intervals formed by matching parentheses are half-open, i.e., if the intervals contain only their left endpoints and not their right endpoints, then the value $\textproc{excess}(k)$ corresponds to the number of intervals 
containing the $k$-th parenthesis. Using \textproc{excess}, we define \textproc{rmq} as follows:
\begin{itemize}
    \item Given indices $i$ and $j$, the \textproc{rmq} operation returns the index $k \in [i, j]$ that minimizes $\textproc{excess}(k)$; if there are multiple minima, it returns the index of the leftmost minimum.
\end{itemize}

We explain a succinct data structure for a balanced sequence of parentheses, called a range min-max tree~\cite{Navarro2014FullyFunctional}. Representing all the balanced sequences of $2n$ parentheses asymptotically requires $2n$ bits~\cite{Navarro2016Compact}. The range min-max tree spends $2n + \order(n)$ bits and supports a wide range of operations including the above operations. It is practically efficient in both time and space~\cite{Arroyuelo2010SuccinctTrees,Ferrada2017ImprovedRMQ}. The main idea of the range min-max tree is to split the BP into blocks and build a complete binary tree on the blocks. Each node of the complete binary tree corresponds to contiguous blocks and stores the values used to support queries. 

To support the queries discussed above, a simplified variant of the range min-max tree called the range min tree~\cite{10.1145/2656332} is sufficient. Each node stores two values: the change in and the minimum of excess.

\subsection{BP Representations of Trees}

\subsubsection{Definition and Properties}
\label{subsubsec:bp_def}

Here, we define the \ac{BP} representations of ordinal trees and binary trees~\cite{Davoodi2017SuccinctBinaryTree,Munro2021Hypersuccinct,Navarro2016Compact} and present some properties of the representations.

Before discussing the representations, we briefly describe the definition of ordinal trees and binary trees. An ordinal tree is a rooted ordered tree whose nodes have an ordered sequence of children, which can be possibly empty. A binary tree is a rooted ordered tree whose nodes have a left and right child, each of which can be empty. 
Note that, unlike ordinal trees, a node with a single left child and a node with a single right child are distinguished when regarded as binary trees. The BP representations described below reflect this difference between ordinal trees and binary trees.

\begin{definition}[\ac{BP} encoding of ordinal trees]
    \label{def:ordinal_tree_bp}
    The \ac{BP} encoding $\BPo(t)$ of an ordinal tree $t$ is defined recursively as follows: $\BPo(t) = \texttt{(} \cdot \BPo(t_1) \cdots \BPo(t_k) \cdot \texttt{)}$.
    Here, $k$ denotes the number of the children of the root, and $t_i$ denotes the subtree rooted at the $i$-th child of the root, respectively.
\end{definition}

\begin{definition}[\ac{BP} encoding of binary trees]
    \label{def:binary_tree_bp}
    The \ac{BP} encoding $\BPb(t)$ of a binary tree $t$ is defined recursively as follows:
    \begin{equation}
        \BPb(t) = 
        \begin{cases}
            \epsilon & \text{if $t$ is empty;} \\
            \texttt{(} \cdot \BPb(t_l) \cdot \texttt{)} \cdot \BPb(t_r) & \text{otherwise.}
        \end{cases}
    \end{equation}
    Here, $t_l$ and $t_r$ denote the subtrees whose roots are the left and right child of the root, respectively.
\end{definition}

\begin{remark}
    In both encodings, we correspond nodes to matching pairs of the BP sequence as follows: consider expanding the recursive definition of the BP sequence, and a node $v$ corresponds to the matching parentheses inserted when expanding the subtree rooted at $v$ during the recursion.
\end{remark}

Examples of the BP encodings are shown in Fig.\,\ref{fig:BP_example}. In both encodings, nodes correspond to matching pairs of parentheses, and subtrees correspond to balanced intervals.
Conversely, every parenthesis has a corresponding node. In particular, when discussing tree navigational queries on binary trees, we map nodes to the corresponding closing parentheses.

\begin{figure}[hbtp]
    \centering
    \begin{subfigure}{0.5\linewidth}
    \centering
    \includegraphics[width=0.95\linewidth]{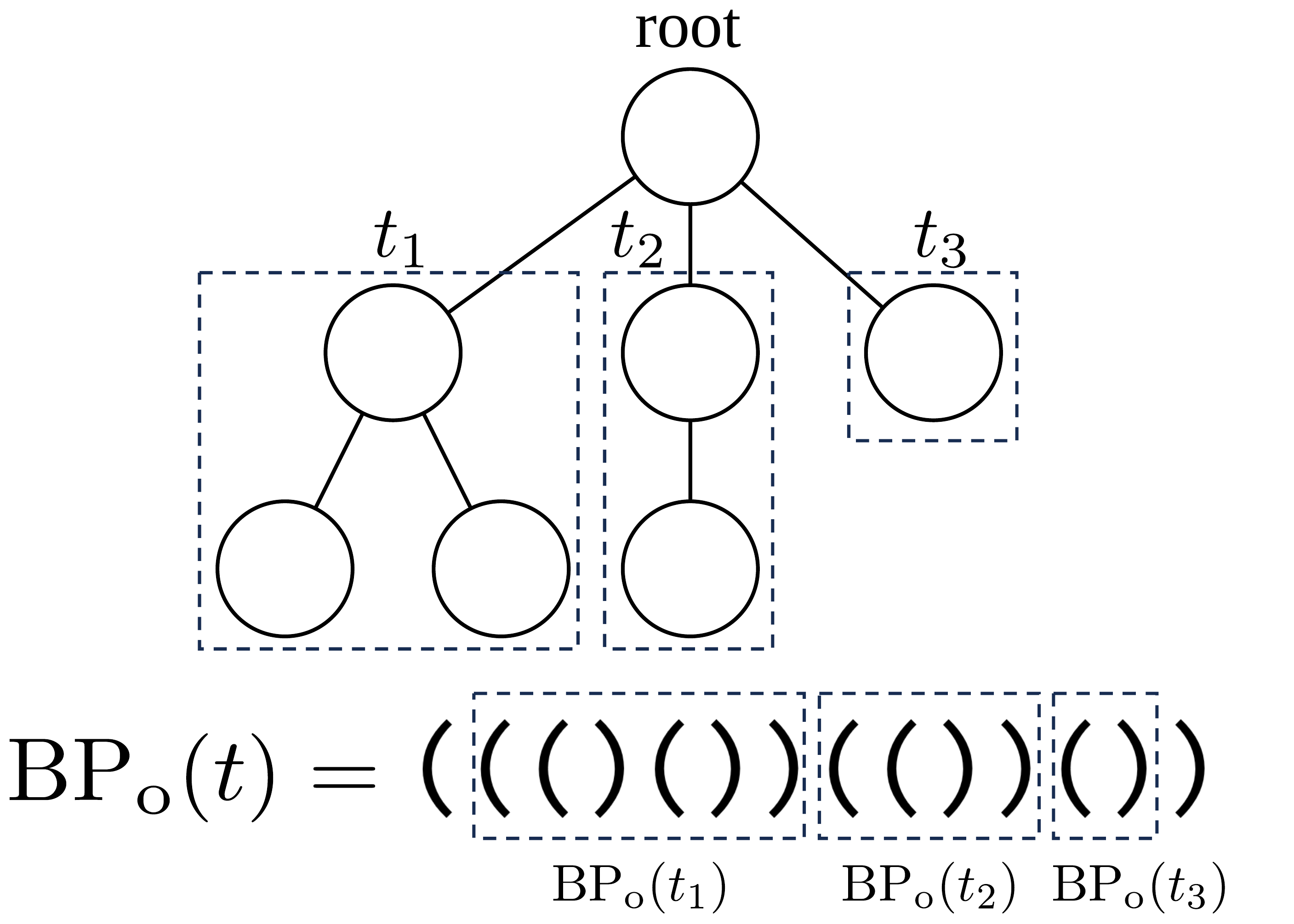}
    \caption{The BP encoding of an ordinal tree.}
    \end{subfigure}
    \hspace{0.03\linewidth}
    \begin{subfigure}{0.41\linewidth}
    \centering
    \includegraphics[width=0.95\linewidth]{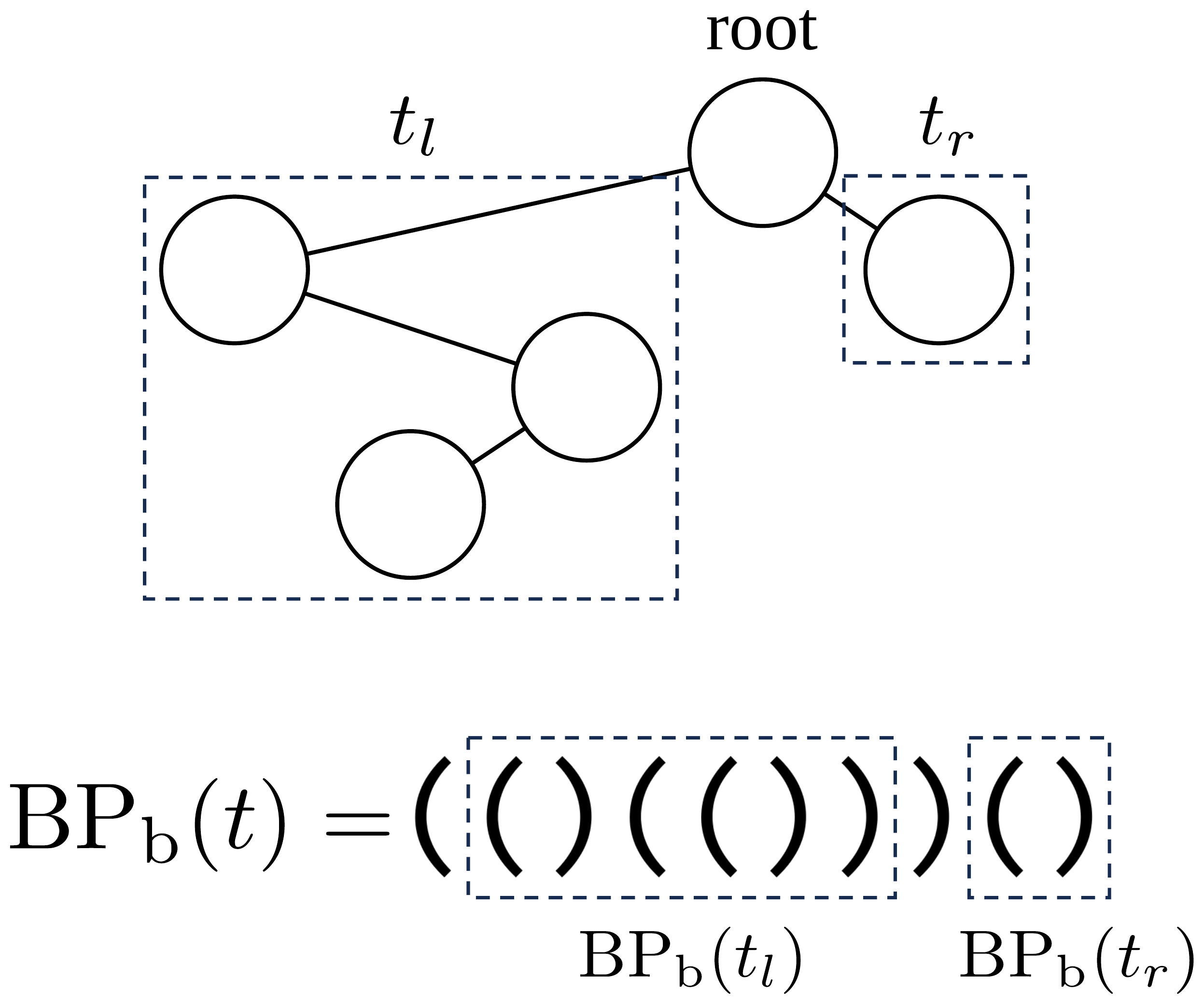}
    \caption{The BP encoding of a binary tree.}
    \end{subfigure}
    \caption{Examples of the BP encodings of ordinal trees and binary trees.}
    \label{fig:BP_example}
\end{figure}

As binary trees can be recovered from balanced sequences of parentheses, the function $\BPb(t)$ is a bijection from the set of all binary trees with $n$ nodes to the set of all the balanced sequences of $2n$ parentheses. Also, a balanced sequence of $2n$ parentheses can be mapped to an ordinal tree with $n + 1$ nodes by enclosing the sequence with a pair of parentheses and interpreting it as the $\BPo$ representation; this is a bijection from the set of all the balanced sequences of $2n$ parentheses to the set of all the ordinal trees with $n + 1$ nodes. It is worth noting that the composition of these bijections yields a known bijection between binary trees with $n$ nodes and ordinal trees with $n + 1$ nodes~\cite{Munro2021Hypersuccinct, Munro2001Succinct}.

The construction of the $\BPb$ sequence from a binary tree can be achieved by the following steps:
\begin{enumerate}
    \item Prepare an empty sequence.
    \item Visit the nodes in the \ac{DFS} order. If there are two children, visit the left child first. 
    \item For each node, append an opening parenthesis to the sequence when visiting the node for the first time; append a closing parenthesis immediately after visiting all nodes of the left subtree.
\end{enumerate}
The construction above implies the following proposition.

\begin{proposition}[Order of parentheses in BP of binary trees]
    \label{prop:order_of_paren}
    The $i$-th opening (resp. closing) parentheses in the BP sequence of a binary tree corresponds to the $i$-th node in the preorder (resp. inorder).
\end{proposition}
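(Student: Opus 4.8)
The plan is to prove both halves by a single structural induction on the binary tree, driven by the recursive definition of $\BPb$ (Definition~\ref{def:binary_tree_bp}) together with the node-to-parenthesis correspondence recorded in the remark following that definition. For a nonempty tree $t$ with root $r$, left subtree $t_l$, and right subtree $t_r$, write $\mathrm{pre}(t)$ and $\mathrm{in}(t)$ for the lists of nodes of $t$ in preorder and inorder; by definition $\mathrm{pre}(t) = r,\mathrm{pre}(t_l),\mathrm{pre}(t_r)$ and $\mathrm{in}(t) = \mathrm{in}(t_l),r,\mathrm{in}(t_r)$, where an empty subtree contributes the empty list.

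First I would isolate how the parenthesis-to-node map decomposes along the recursion $\BPb(t) = \texttt{(}\cdot\BPb(t_l)\cdot\texttt{)}\cdot\BPb(t_r)$. Each $\BPb(s)$ contains exactly one opening and one closing parenthesis per node of $s$ (immediate from the recursion), and by the remark the matching pair owned by $r$ is the leading \texttt{(} together with the \texttt{)} that directly follows the block $\BPb(t_l)$, while every other node lies in $t_l$ or $t_r$ and owns the parenthesis assigned to it inside the respective sub-block. Reading $\BPb(t)$ from left to right therefore produces the opening parentheses in the order ``$r$, then the opening parentheses of $\BPb(t_l)$ in their own order, then those of $\BPb(t_r)$ in their own order,'' and the closing parentheses in the order ``the closing parentheses of $\BPb(t_l)$ in their own order, then $r$, then those of $\BPb(t_r)$ in their own order.''

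Then the induction closes everything. For the base case, take $t$ empty (all three lists empty) or, if preferred, the single-node tree, where $\BPb(t) = \texttt{()}$ and the sole node is first in both preorder and inorder. For the inductive step, assume the statement for $t_l$ and $t_r$: the $j$-th opening parenthesis of $\BPb(t_l)$ is the $j$-th node of $\mathrm{pre}(t_l)$, the $j$-th closing parenthesis of $\BPb(t_l)$ is the $j$-th node of $\mathrm{in}(t_l)$, and likewise for $t_r$. Substituting these into the two orderings from the previous paragraph and comparing with the recursive formulas for $\mathrm{pre}$ and $\mathrm{in}$ shows that the opening parentheses of $\BPb(t)$ enumerate $r,\mathrm{pre}(t_l),\mathrm{pre}(t_r) = \mathrm{pre}(t)$ and the closing parentheses enumerate $\mathrm{in}(t_l),r,\mathrm{in}(t_r) = \mathrm{in}(t)$, which is exactly the claim for $t$.

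I do not anticipate a genuine obstacle: the argument is a direct structural induction, and the only point requiring care is the bookkeeping of the remark's correspondence across the four pieces \texttt{(}, $\BPb(t_l)$, \texttt{)}, $\BPb(t_r)$ of the recursion, together with the uniform treatment of empty subtrees (which contribute no parentheses and no nodes). As an alternative one can argue straight from the three-step construction without any induction: an opening parenthesis is appended exactly when a node is first visited by the left-first DFS, which is the definition of preorder, and a node's closing parenthesis is appended exactly after its entire left subtree has been processed and before its right subtree is entered, which is precisely the inorder recursion; so both halves of the proposition follow immediately.
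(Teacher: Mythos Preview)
Your proof is correct. The paper does not give a formal proof of this proposition at all: it simply states the three-step DFS construction of $\BPb$ and then asserts that ``the construction above implies the following proposition,'' leaving the details to the reader. Your alternative argument in the final paragraph is precisely this implicit justification made explicit. Your main structural-induction argument is a more formal route that the paper does not take; it has the advantage of being self-contained from Definition~\ref{def:binary_tree_bp} alone (without relying on the equivalence between the recursive definition and the iterative DFS construction), at the cost of a bit more bookkeeping. Either approach is fine here, and since the paper offers essentially no proof, your write-up is strictly more detailed than what appears there.
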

Prop.\,\ref{prop:order_of_paren} enables converting a node, its preorder, and its inorder to one another. The details are discussed in the next section.

\subsubsection{Tree Navigational Operations}
\label{subsec:bp_query}

In this section, we present how to support tree navigational operations on binary trees using the \ac{BP} operations described in Sec.\,\ref{subsec:balanced_parentheses}. Here, we only consider binary trees and the $\BPb$ representation.

We map nodes to the corresponding closing parentheses since it works well with inorder and \textproc{lca}. Also, we assume the BP is enclosed with a pair of parentheses as sentinels. These sentinels work in two cases: when \access{} is called with $0$ or $2n+1$, and when $\enclose{}$ is called on the parentheses such that the value of excess is zero. We assume the sentinels do not change the behavior of the other operations such as $\rank{}$ and $\select{}$.

\begin{algorithm}[hbtp]
\caption{Tree navigational operations that are possible on the $\BPb$ representation.}
\label{alg:navigation_on_bp}
\begin{multicols}{2}
\begin{algorithmic}[1]
\Function{preorder}{$v$}
\State \Return $\ranko(\Call{open}{v})$
\EndFunction

\Statex

\Function{preorderselect}{$i$}
\State \Return $\Call{close}{\selecto(i)}$
\EndFunction

\Statex

\Function{inorder}{$v$}
\State \Return $\rankc(v)$
\EndFunction

\Statex

\Function{inorderselect}{$i$}
\State \Return $\selectc(i)$
\EndFunction

\Statex

\Function{root}{\mbox{}}
\State \Return \close(1)
\EndFunction

\Statex

\Function{parent}{$v$}
\State $p \gets \Call{open}{v}$
\If{$p = 1$}
\State \Return $-1$
\EndIf
\If{$\Call{access}{p - 1} = \text{``\texttt{(}''}$}
\State \Return $\Call{close}{p - 1}$
\Else
\State \Return $p - 1$
\EndIf
\EndFunction

\Statex

\Function{leftchild}{$v$}
\State $p \gets \Call{open}{v}$
\If{$\Call{access}{p + 1} = \text{``\texttt{)}''}$}
\State \Return $-1$
\EndIf
\State \Return $\Call{close}{p + 1}$
\EndFunction

\Statex

\Function{rightchild}{$v$}
\If{$\Call{access}{v + 1} = \text{``\texttt{)}''}$}
\State \Return $-1$
\EndIf
\State \Return $\Call{close}{v + 1}$
\EndFunction

\Statex

\Function{isleaf}{$v$}
\State \Return $\Call{access}{v - 1} = \text{``\texttt{(}''}$ and $\Call{access}{v + 1} = \text{``\texttt{)}''}$
\EndFunction

\Statex

\Function{childlabel}{$v$}
\State $p \gets \Call{open}{v}$
\If{$p = 1$}
\State \Return ``root''
\ElsIf{$\Call{access}{p - 1} = \text{``\texttt{(}''}$}
\State \Return ``left''
\Else
\State \Return ``right''
\EndIf
\EndFunction

\Statex

\Function{leftmostdesc}{$v$}
\State $p \gets \Call{open}{v}$
\While{$\Call{access}{p} = \text{``\texttt{(}''}$}
\State $p \gets p + 1$
\EndWhile
\State \Return $p$
\EndFunction

\Statex

\Function{rightmostdesc}{$v$}
\State \Return $\enclose(v) - 1$
\EndFunction

\Statex

\Function{subtreesize}{$v$}
\State \Return $(\enclose(v) - \Call{open}{v}) / 2$
\EndFunction

\Statex

\Function{isancestor}{$u$, $v$}
\State \Return $\Call{open}{u} \le v < \enclose(u)$
\EndFunction

\Statex

\Function{lca}{$u$, $v$}
\If{$u > v$}
\State Swap $u$ and $v$.
\EndIf
\State \Return \Call{rmq}{$u$, $v$}
\EndFunction
\end{algorithmic}
\end{multicols}
\end{algorithm}

Alg.\,\ref{alg:navigation_on_bp} summarizes how to achieve tree navigational operations on the \ac{BP} representation~\cite{Navarro2016Compact}. They return $-1$ if the corresponding node does not exist. The first four functions utilize Prop.\,\ref{prop:order_of_paren}. The next six functions from \textproc{root} to \textproc{childlabel} can be derived naturally from the definition of the BP representation. The next four functions from \textproc{leftmostdesc} to \textproc{isancestor} can be verified by using the fact that the subtree rooted at $v$ corresponds to a half-open interval $[\Call{open}{v}, \enclose(v) )$ on the \ac{BP} representation. The reason the implementation of \textproc{lca} works is as follows. The \ac{LCA} can be characterized as the only common ancestor whose inorder is between the inorders of $u$ and $v$. Also, $w \coloneqq \Call{rmq}{u, v}$ is indeed such a common ancestor; the inorder of $w$ is between the inorders of $u$ and $v$, and the interval corresponding to the subtree rooted at $w$ contains both $u$ and $v$. Alternatively, Ferrada and Navarro~\cite{Ferrada2017ImprovedRMQ} proved the correctness of \textproc{lca} by leveraging the bijection between binary trees and ordinal trees~\cite{Munro2001Succinct}. 

The implementation of $\Call{leftmostdesc}{v}$ uses a simple scan and the execution time depends on the depth from $v$ to the leftmost descendant. Replacing the implementation with $\rankc{}$ and $\selectc{}$ operations makes the execution time independent of depth.

\subsection{The Farzan--Munro Algorithm}
\label{subsec:Farzan_Munro}

The Farzan--Munro tree-covering algorithm~\cite{Farzan2014UniformParadigm} decomposes an ordered tree into smaller-sized trees, called \textit{micro trees}. This decomposition is useful when designing succinct data structures involving trees~\cite{ChakrabortyJSS21,Chakraborty2021CliqueWidth,Davoodi2014EncodingRangeMinima,Farzan2014UniformParadigm,He2014Framework,Munro2021Hypersuccinct,Tsur2018representation}.

The algorithm builds micro trees bottom-up by \ac{DFS} and runs in linear time. A positive integer parameter $B$ is used in the algorithm, which determines the approximate size of the micro trees. Fig.\,\ref{fig:treecover1} shows an example of the tree cover of an ordinal tree. Also, the properties of tree covers are summarized in Prop.\,\ref{prop:originalTC}.

\begin{figure}[hbtp]
\centering
\includegraphics[width=0.7\linewidth]{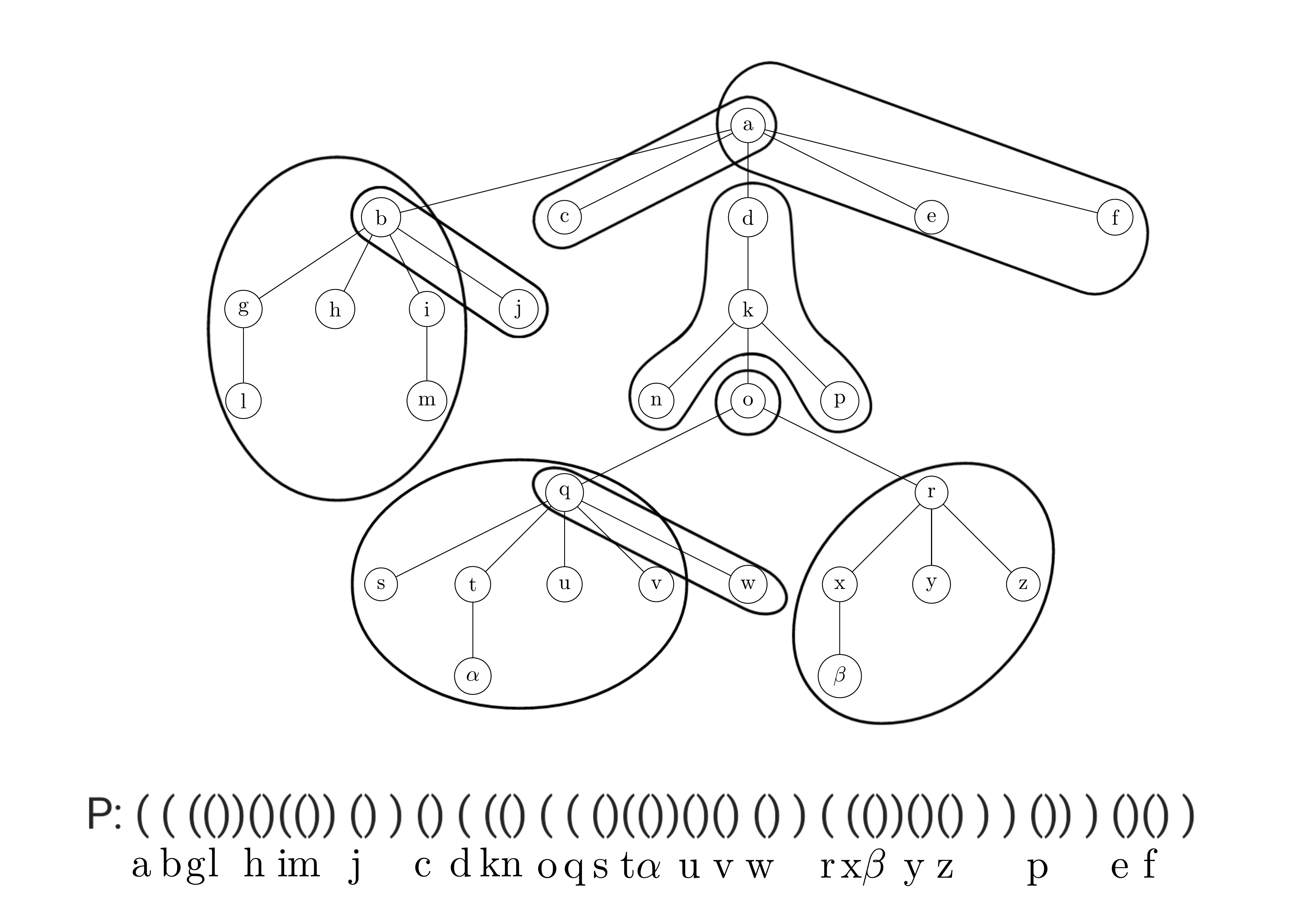}
\caption{The tree cover of a tree, with parameter $B = 5$.}
\label{fig:treecover1}
\end{figure}

\begin{proposition}
[Properties of Tree Covers \cite{Farzan2014UniformParadigm}]\label{prop:originalTC}
For an ordinal tree with $n$ nodes and a parameter $B \ge 1$, the Farzan--Munro algorithm produces a tree cover satisfying the following requirements.
\begin{itemize}
    \item The number of micro trees is $\Order(n/B)$.
    \item Each micro tree has less than $2B$ nodes.\footnote{The original paper~\cite{Farzan2014UniformParadigm} states that the micro-tree size is at most $2B$, but the algorithm does not yield a component of size $2B$.}
    \item Each micro tree has at most one outgoing edge, apart from those from the root of the micro tree.
\end{itemize}
\end{proposition}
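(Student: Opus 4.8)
The plan is to reproduce the Farzan--Munro decomposition directly as a recursive bottom-up procedure on the DFS traversal, and then verify each of the three bullets of Prop.\,\ref{prop:originalTC} by an accounting argument. First I would set up the recursion: process the tree by a postorder DFS, and at each node $v$ maintain the set of ``pending'' nodes in the subtree of $v$ that have not yet been packed into a completed micro tree. When a child $c$ of $v$ is finished, its subtree either has already been entirely packed (because it grew large enough) or it returns a small bundle of pending nodes that must be merged upward. At $v$ we combine $v$ itself with the pending bundles coming from its children; whenever the accumulated size reaches the threshold $B$ we ``cut'' and emit a micro tree, keeping the remainder pending. The key design choice — and the one that makes the size bounds work — is the rule for how pending bundles from distinct children are allowed to be combined: a child's pending bundle is merged into the component being grown at $v$ only if doing so keeps the size below $2B$; otherwise the child's bundle is closed off on its own as a separate micro tree. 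This is exactly where the ``at most one outgoing edge apart from the root'' property is forced: a component grown at $v$ collects a contiguous prefix of $v$'s children's bundles plus possibly a partial bundle, so all of its ``escaping'' edges go to a single later child subtree, except for the edge back up to $v$'s parent which emanates from the component's root.

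The main steps, in order, are: (1) formalize the recursion and the cutting rule as above, making precise what a ``pending bundle'' is (a connected subtree containing the current node, hanging off the parent by one edge); (2) prove that every emitted component has size in the range $[\,?\,,\,2B)$ — the upper bound because we never merge when it would reach $2B$, and because a single node plus one sub-bundle of size $<B$ plus one more node is still $<2B$ with the right bookkeeping; (3) prove the count bound $\Order(n/B)$ by a charging argument: most emitted micro trees have size $\ge B$ (charge them to their $\ge B$ nodes, giving $\Order(n/B)$ of them), and the ``small'' leftover micro trees — those emitted with size $<B$ — can be charged to branching nodes or to the number of children that contributed separate bundles, whose total is also $\Order(n/B)$ because each such small component ``uses up'' a distinct edge of the tree in a way that can be amortized; (4) verify the single-outgoing-edge property from the contiguity observation in the previous paragraph.

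The hard part will be step (3): bounding the number of undersized micro trees. The naive recursion can leave a size-$o(B)$ remainder at many nodes, and one must argue that these cannot proliferate. The resolution is the standard Farzan--Munro trick of being careful about \emph{when} a small remainder is allowed to survive versus when it must be force-merged: a genuinely small leftover can persist only at a node with at least two ``heavy'' directions (so it is essentially charged to a branching vertex in a contracted tree that has $\Order(n/B)$ leaves), and at every other node the remainder is absorbed into the parent's component. Making this charging precise — identifying the right auxiliary contracted tree and showing each of its nodes pays for only $\Order(1)$ micro trees — is the crux; the size bounds (step 2) and the single-edge property (step 4) are then short structural consequences. Since this proposition is quoted verbatim from \cite{Farzan2014UniformParadigm}, an acceptable alternative is simply to cite that paper and only restate the algorithmic guarantees, but the above is how I would reconstruct the proof if required.
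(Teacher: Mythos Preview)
The paper does not prove this proposition at all: it is stated in the preliminaries as a cited result from \cite{Farzan2014UniformParadigm}, with no accompanying argument. Your closing remark --- that ``an acceptable alternative is simply to cite that paper and only restate the algorithmic guarantees'' --- is exactly what the paper does, so that alternative is the one that matches.

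Your reconstruction sketch is a reasonable outline of the Farzan--Munro argument and identifies the genuinely delicate point (bounding the number of undersized components via a charging to branching vertices of a contracted tree). It is not needed here, but nothing in it is wrong as a high-level plan; if you were actually asked to reprove the result, the main thing to tighten would be step~(2), where ``a single node plus one sub-bundle of size $<B$ plus one more node is still $<2B$'' is too loose as written --- the actual bound comes from the rule that a child's pending bundle has size $<B$ and you close the current component before adding a bundle that would push it to $\ge 2B$, so the worst case is $(B-1)+1+(B-1)=2B-1<2B$.
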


Theoretically, the parameter $B$ is often set to $\lceil(\lg n) / 8\rceil$. This is because the number of possible micro trees can be bounded by $\Order(\sqrt{n})$ and look-up tables regarding micro trees consume negligible space. In this paper, we assume $B$ is super-constant with respect to $n$.

When the Farzan--Munro algorithm is applied to binary trees, the micro trees become disjoint. Fig.\,\ref{fig:tree_covering_15} shows an example of the tree cover of a binary tree. The tree cover of a binary tree satisfies Prop.\,\ref{prop:binary_tree_decomposition}.

\begin{figure}[hbtp]
    \centering
    \includegraphics[width=0.5\linewidth]{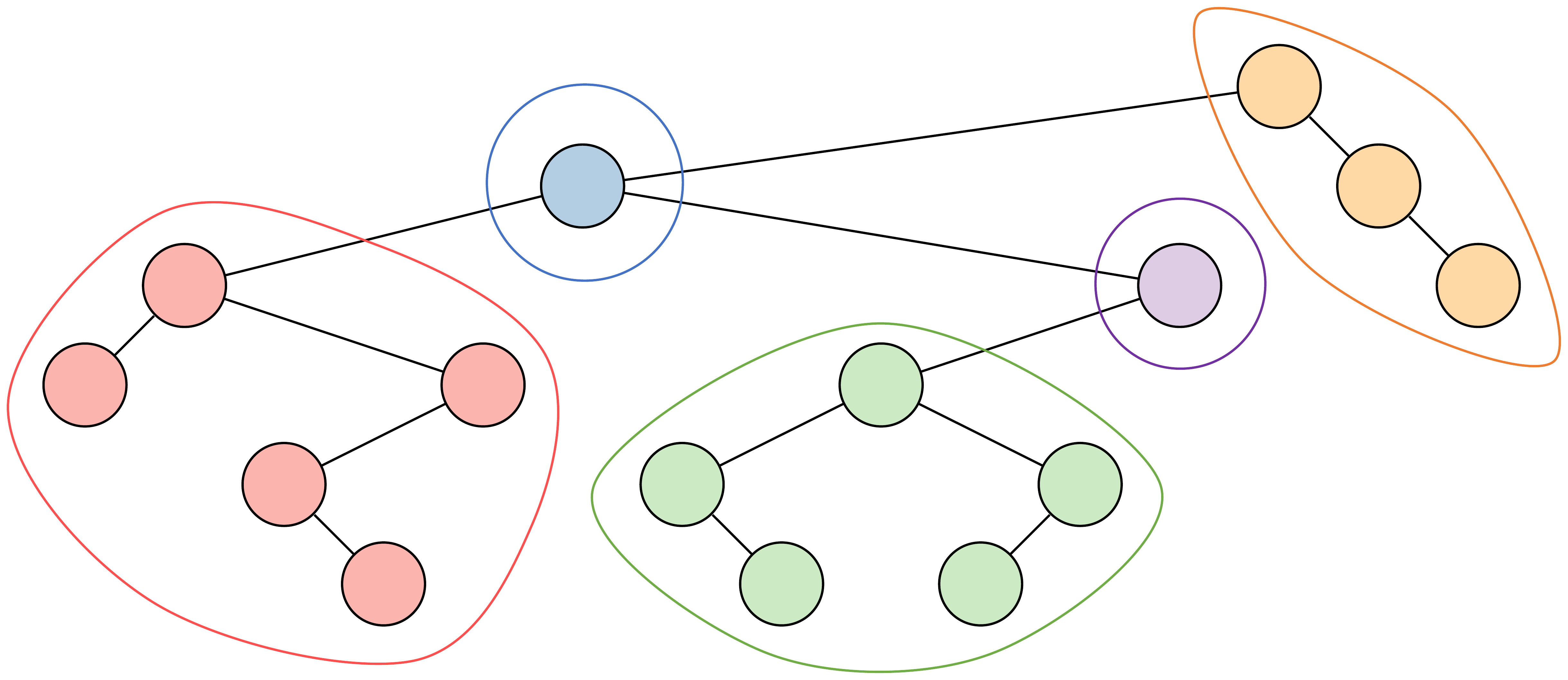}
    \caption{An example of the tree cover of a binary tree with $B = 4$.}
    \label{fig:tree_covering_15}
\end{figure}

\begin{proposition}[Properties of Binary Tree Decomposition~\cite{Farzan2014UniformParadigm, Munro2021Hypersuccinct}]
    \label{prop:binary_tree_decomposition}
    The binary tree decomposition obtained by the Farzan--Munro algorithm satisfies the following additional properties:
    \begin{enumerate}
        \item Contracting the micro trees into single nodes gives a binary tree, which we call \textit{a top-tier tree}.
        \item If a micro tree has two outgoing edges toward children, it consists of a single node.
    \end{enumerate}
\end{proposition}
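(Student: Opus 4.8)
The plan is to split the two assertions and prove a uniform bound on the number of outgoing edges first, then Part~2, and finally deduce Part~1 from it. Throughout, write $M$ for a micro tree and $r$ for its root, i.e.\ the node of $M$ closest to the global root; this is well defined because every micro tree produced by the Farzan--Munro algorithm~\cite{Farzan2014UniformParadigm} on a binary tree is a connected subtree. I would first observe that $M$ has at most two outgoing child-edges. If $M=\{r\}$ this is clear, since a binary-tree node has at most two children. If $|M|\ge 2$, then for any other node of $M$ the path from $r$ to it starts with an edge to a child of $r$ that also lies in $M$; hence at least one child of $r$ lies in $M$, so $r$ has at most one child outside $M$, i.e.\ at most one outgoing edge leaves $r$. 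Together with Prop.\,\ref{prop:originalTC} (at most one outgoing edge apart from those leaving the root), $M$ has at most two outgoing edges in total.

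For Part~2 I would argue by contradiction: assume $M$ has two outgoing child-edges but $|M|\ge 2$. By the bound just shown, exactly one of the two edges leaves $r$ and the other leaves a non-root node $w\in M$ — two edges leaving non-root nodes would contradict Prop.\,\ref{prop:originalTC}, and two edges leaving $r$ would force $M=\{r\}$ by connectedness. Let $c_{\mathrm{out}}\notin M$ be the child of $r$ on the first edge; since $|M|\ge 2$, by connectedness $r$ also has a child $c_{\mathrm{in}}\in M$, and as the tree is binary these are $r$'s only two children, so $M\setminus\{r\}$ is a nonempty connected subtree rooted at $c_{\mathrm{in}}$. Now consider the bottom-up phase of the Farzan--Munro algorithm at the moment it processes $r$: the subtree rooted at $c_{\mathrm{out}}$ has been fully decomposed and still contributes a nonempty unfinalized (leftover) component (it contains at least $c_{\mathrm{out}}$), and the subtree rooted at $c_{\mathrm{in}}$ likewise contributes a nonempty leftover component, namely the one that is to become $M\setminus\{r\}$ (nonempty, as it contains $c_{\mathrm{in}}$ and $w$). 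So $r$ is a \emph{branching node}: both of its child subtrees still carry an open component when $r$ is reached. The structural fact I would invoke — citing the analysis of the Farzan--Munro algorithm specialized to binary trees~\cite{Farzan2014UniformParadigm,Munro2021Hypersuccinct}, or re-deriving the relevant case of its greedy rule — is that such a node is isolated into its own micro tree rather than merged with either open child component. That forces $M=\{r\}$, contradicting $|M|\ge 2$, and proves Part~2.

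Part~1 then follows quickly. By the first step, contracting the micro trees yields a tree in which every node has at most two children, so it only remains to pin down a consistent left/right labeling. If $M$ has two outgoing child-edges, Part~2 says $M$ is a single node $v$, and the two child micro trees are the ones containing the left child and the right child of $v$ in the original binary tree, which they inherit as their labels. If $M$ has exactly one outgoing child-edge, that edge is a left-child edge or a right-child edge of the original binary tree, and I label the unique child micro tree accordingly; if $M$ has none, it is a leaf of the contracted tree. Hence the contraction is a binary tree — the top-tier tree.

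The hard part is the structural input used in the middle step: that on a binary tree the Farzan--Munro algorithm never merges a branching node with one of its still-open child components but places it alone. The counting guarantees in Prop.\,\ref{prop:originalTC} by themselves do not rule out a micro tree of size $\ge 2$ with one outgoing edge at its root and one at a non-root node, so Part~2 genuinely relies on this feature of the greedy construction; making it rigorous means unfolding the algorithm's case analysis, which the present section only sketches, so in the write-up I would either reproduce that portion of the Farzan--Munro argument or cite it precisely.
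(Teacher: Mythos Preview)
The paper does not supply its own proof of this proposition; it states the result and cites \cite{Farzan2014UniformParadigm,Munro2021Hypersuccinct} for it. So there is no in-paper argument to compare against, and your task was effectively to reconstruct a proof from the cited sources.

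Your outline is sound and, crucially, you are honest about where the real content lies. The bound ``at most two outgoing child-edges'' is correct and cleanly argued from connectedness plus Prop.\,\ref{prop:originalTC}. Your derivation of Part~1 from Part~2 and the two-edge bound is also fine, including the left/right labeling (and indeed the paper later revisits this labeling issue around Theorem~\ref{thm:top_tier_tree_BP}). The only substantive step is the one you flag yourself: the claim that a branching node---one whose two child subtrees both carry a nonempty unfinalized component when the bottom-up sweep reaches it---is placed alone rather than merged. As you note, Prop.\,\ref{prop:originalTC} alone does not give this; it is a feature of the greedy rule in the Farzan--Munro algorithm and must be extracted from the case analysis in \cite{Farzan2014UniformParadigm} (or from the restatement in \cite{Munro2021Hypersuccinct}). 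One small caution in your contradiction setup: you assert that the subtree rooted at $c_{\mathrm{out}}$ ``still contributes a nonempty unfinalized component'' when $r$ is processed; strictly speaking this also depends on how the algorithm handles finalized components at a child, so when you unfold the algorithm be sure that step is justified rather than assumed. With that caveat, your plan matches what a direct proof from the cited references would look like.
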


\subsection{Hypersuccinct Trees}
\label{subsec:hypersuccinct_trees}

Hypersuccinct trees~\cite{Munro2021Hypersuccinct} asymptotically achieve average-case optimal compression for trees from various sources. Here, we present some of their results in binary trees as we later implement the data structure and benchmark it. Applying it to Cartesian trees of random permutations requires $1.736n + \order(n)$ bits on average, which is asymptotically optimal. Also, Cartesian trees of arrays of length $n$ with $r$ increasing runs can be encoded using $2 \lg \binom{n}{r} + \order(n)$ bits. Application of these results to the RMQ problem gives average-case optimal RMQ data structures for those input arrays.

Hypersuccinct trees first utilize the Farzan--Munro algorithm~\cite{Farzan2014UniformParadigm} described in Sec.\,\ref{subsec:Farzan_Munro} to decompose a tree into micro trees, which are subsequently compressed with a Huffman code. Since the micro trees are the most dominant part of the space, compression with a Huffman code reduces the leading term in space complexity for various tree sources.

To explain the data structures of hypersuccinct binary trees, we define portals. To recover the original tree from the decomposed micro trees, it is insufficient just to store the micro trees and the top-tier tree; the connection between nodes of different micro trees cannot be recovered. Thus, we also store where the roots of the child micro trees are initially placed, called \textit{portals}. Each micro tree may contain at most two portals; one is to the left-child micro tree and the other is to the right-child micro tree. Fig.\,\ref{fig:portals} shows an example of possible portal positions. The number of possible portal positions is one more than the number of nodes. The index of the portal from left to right is called the portal rank.

\begin{figure}[tb]
   \centering
  \includegraphics[width=0.30\linewidth]{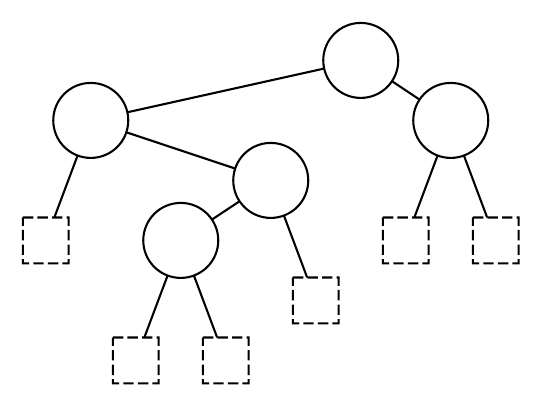}
   \caption{Possible portal positions in a micro tree, which are indicated by dashed squares.}
   \label{fig:portals}
\end{figure}

Thus, a hypersuccinct binary tree consists of micro trees, a top-tier tree, and portals. We consider the space consumption of these items when applied to Cartesian trees of random permutations. The parameter $B$ is set to $\lceil (\lg n) / 8 \rceil$ to make the data structure succinct. 

The micro trees are dominant in space consumption: Huffman codes spend $1.736n + \Order(n (\log B) / B)$ bits, while enumerating the BP representation of each code in a table consumes $\Order(2^{4B} B)$ bits, i.e., $\Order(\sqrt{n} \log n)$ bits.
The top-tier tree and the portals both consume $\order(n)$ bits. The top-tier tree has $\Order(n / B)$ nodes by Prop.\,\ref{prop:originalTC}, so representing the tree by the BP representation requires $\Order(n / B)$ bits. As for the portals, since each micro tree has less than $2B$ nodes by Prop.\,\ref{prop:originalTC}, it has at most $2B$ possible portal positions. Thus, each portal can be represented in $\lceil \lg 2B \rceil$ bits. Since the number of portals is twice the number of micro trees, the portals use $\Order(n (\log B) / B)$ bits.

For later discussion of implementation, we present a sketch of the proof by Munro et al.~\cite{Munro2021Hypersuccinct} that a Huffman code achieves optimal space consumption on average for random permutations. They first introduce a prefix-free code called a \textit{depth-first arithmetic code}. The code encodes a micro tree by listing the left subtree sizes in the \ac{DFS} order and encoding them with arithmetic coding. They show that the use of the depth-first arithmetic code in hypersuccint binary trees achieves asymptotically optimal compression. Then, since a Huffman code is optimal among all the prefix-free codes, space consumption of hypersuccinct binary trees remains optimal if the depth-first arithmetic code is replaced with a Huffman code.

While the depth-first arithmetic code also achieves optimal space consumption, they did not use the depth-first arithmetic code in the data-structure design. This is not surprising since the depth-first arithmetic code takes a longer time to decode and lacks universality for tree sources.
However, we employ the depth-first arithmetic code later in Sec.\,\ref{subsubsec:depth_first_arithmetic_code}, since using the code dispenses with tables for decoding, which critically reduces space consumption in practice.

\section{BP Representation of Tree Covering}
\label{sec:bp_repr_of_tc}

\subsection{Motivation}

As discussed in Sec.\,\ref{subsec:Farzan_Munro}, the Farzan--Munro tree-covering algorithm decomposes a tree into micro trees with desirable properties. Since the decomposition helps speed up the time complexity of queries and reduce space complexity, many succinct data structures involving trees employ this technique~\cite{ChakrabortyJSS21,Chakraborty2021CliqueWidth,Davoodi2014EncodingRangeMinima,Farzan2014UniformParadigm,He2014Framework,Munro2021Hypersuccinct,Tsur2018representation}.

However, it is difficult to simply and compactly represent micro trees and their connection in the succinct data structures based on tree covering. The original representation by Farzan and Munro~\cite{Farzan2014UniformParadigm} employs many pointers using $\order(n)$ bits, but it will practically consume enormous space in total.

In this section, we propose a simple representation of tree covering that leverages the BP sequence. At the core of the representation is that every micro tree corresponds to at most two intervals on the BP representation. This property motivates us to use the index of the BP sequence while preserving the tree-covering structure. We then give data-structure designs that isolate micro trees as a sequence and enable compressing micro trees with an arbitrary encoding. 

In the following sections, we first follow the above argument for ordinal trees by modifying the tree covers. We then show that a similar argument holds for binary trees. We also improve space consumption by exploiting properties specific to binary trees. 

\subsection{Ordinal Trees}
\label{subsec:ordinal_trees}

\subsubsection{Modified tree cover}
\label{subsubsec:modified_tree_cover}

We slightly modify the definition of the tree cover
and give its simple representation using multi-type parentheses sequences.
We then show how to obtain the original tree cover from the modified tree cover.

First, we compute the original tree cover of an ordinal tree $t$, using the tree cover algorithm of Farzan and Munro~\cite{Farzan2014UniformParadigm}.
Let $v$ be a root of a micro tree $\mu$ consisting of multiple nodes. Then, we create a dummy node $w$ and hang it from $v$.
We change the parent of children of $v$ in $\mu$ as $w$. 
We call the resulting tree $t'$. The modified tree cover is obtained by splitting $\mu$ into two micro trees: a singleton micro tree that contains only $v$ and a micro tree that contains $w$ and its children in $\mu$.

The modified tree cover obtained by the procedure above consists of disjoint micro trees. As in the case of binary trees, we define a \textit{top-tier tree} as a tree obtained by contracting the micro trees of the modified tree cover.
The theorem below describes the properties of the modified tree cover.
\begin{theorem}\label{th:newTC}
For an ordinal tree with $n$ nodes and a parameter $B \ge 1$, we can obtain a modified tree cover satisfying the following requirements.
\begin{enumerate}
    \item The number of micro trees is $\Order(n/B)$.
    \item Each micro tree either consists of a single node with an arbitrary number of child micro trees (called a singleton micro tree), or it consists of less than $2B$ nodes and has at most one child micro tree (called a non-singleton micro tree).
\end{enumerate}
\end{theorem}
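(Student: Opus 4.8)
The plan is to derive each property of the modified tree cover directly from the corresponding property of the original Farzan--Munro tree cover (Prop.~\ref{prop:originalTC}), tracking exactly which micro trees get split and what happens to their edges. First I would set up notation: let $\mathcal{C}$ be the original tree cover of $t$ with parameter $B$, and for each micro tree $\mu \in \mathcal{C}$ whose root $v_\mu$ has more than one node, introduce the dummy node $w_\mu$ as in the construction, obtaining $t'$ and the modified cover $\mathcal{C}'$ in which $\mu$ is replaced by the singleton $\{v_\mu\}$ and the ``lowered'' micro tree $\mu' := (\mu \setminus \{v_\mu\}) \cup \{w_\mu\}$ rooted at $w_\mu$. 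Micro trees of $\mathcal{C}$ that are already single nodes are left untouched (they are automatically singleton micro trees in the new terminology). I would first observe that $\mathcal{C}'$ is a partition of the nodes of $t'$ into connected subtrees, so it is a genuine decomposition and the notion of top-tier tree makes sense.

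For property~1, each original micro tree is replaced by at most two micro trees, so $|\mathcal{C}'| \le 2|\mathcal{C}| = \Order(n/B)$ by the first bullet of Prop.~\ref{prop:originalTC}; the number of nodes of $t'$ is $n + |\mathcal{C}'|/2$-ish, still $\Order(n)$, but this is not needed for the count. For property~2, I would argue by cases. An untouched micro tree of $\mathcal{C}$ was a single node, hence a singleton micro tree with an arbitrary number of child micro trees --- fine. A lowered micro tree $\mu'$ has exactly the nodes of $\mu$ minus the root plus one dummy, so it has at most (number of nodes of $\mu$) $< 2B$ nodes by the second bullet of Prop.~\ref{prop:originalTC}; and I must show it has at most one child micro tree. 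The key is the third bullet of Prop.~\ref{prop:originalTC}: $\mu$ has at most one outgoing edge apart from edges leaving the root $v_\mu$. After the modification, every node of $\mu$ other than $v_\mu$ becomes a node of $\mu'$ (with $w_\mu$ inheriting $v_\mu$'s children in $\mu$ as its children), so the outgoing edges of $\mu'$ are exactly the outgoing edges of $\mu$ that did \emph{not} leave $v_\mu$ --- of which there is at most one. Hence $\mu'$ is a non-singleton micro tree with at most one child micro tree. Finally the new singleton $\{v_\mu\}$ has children: the dummy $w_\mu$ plus whatever micro trees hung off $v_\mu$ in the original cover; all of these are child micro trees, an arbitrary number, which is exactly what the ``singleton micro tree'' clause permits.

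I would also note the one subtlety to check explicitly: that lowering does not disturb the other micro trees or create new adjacencies that violate the decomposition --- but since we only re-root edges that were internal to $\mu$ or incident to $v_\mu$, and $v_\mu$ stays in its own component, the adjacency structure \emph{between} distinct micro trees of $\mathcal{C}$ is preserved except that an edge from some ancestor micro tree into $v_\mu$ now enters the singleton $\{v_\mu\}$, and edges from $v_\mu$'s children-in-$\mu$ region are untouched because they stay within $\mu'$. The main obstacle is bookkeeping rather than depth: one must be careful that ``at most one outgoing edge apart from those from the root'' in Prop.~\ref{prop:originalTC} translates precisely to ``at most one child micro tree'' for $\mu'$ after the root has been replaced by a dummy, and that an outgoing edge of a micro tree in the Farzan--Munro sense corresponds one-to-one with a child in the top-tier/contracted tree. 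Once that correspondence is pinned down, properties~1 and~2 follow immediately, and I would close by remarking that $\mathcal{C}'$ consists of pairwise disjoint micro trees, paralleling the binary-tree situation of Prop.~\ref{prop:binary_tree_decomposition} and justifying the subsequent BP-interval representation.
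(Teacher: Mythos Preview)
Your proposal is correct and takes essentially the same approach as the paper: the paper states Theorem~\ref{th:newTC} immediately after describing the modification procedure and gives no explicit proof, relying on the reader to see that the two properties follow from Prop.~\ref{prop:originalTC} in precisely the way you spell out. Your write-up simply makes explicit the bookkeeping the paper leaves implicit (the at-most-doubling of the micro-tree count, the $|\mu'|=|\mu|<2B$ size bound, and the translation of ``at most one outgoing edge apart from the root'' into ``at most one child micro tree'' for the lowered $\mu'$).
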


\begin{figure}[tb]
\centering

\begin{subfigure}{.55\textwidth}
\includegraphics[width=1\linewidth]{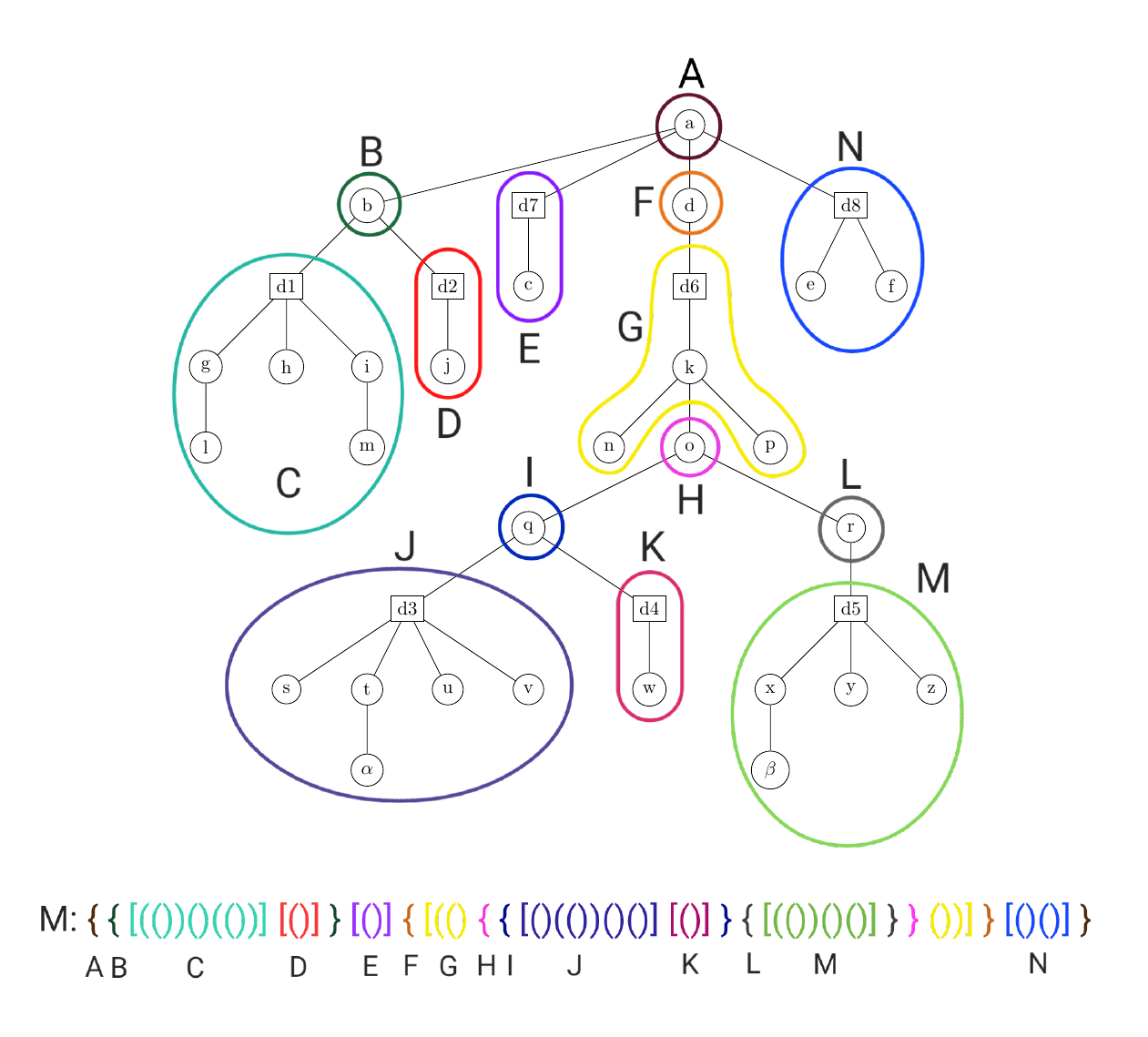}
\caption{Modified tree cover.}
\label{fig:treecover1a}
\end{subfigure}
\hspace{0.02\textwidth}
\begin{subfigure}{.4\textwidth}
\includegraphics[width=1\linewidth]{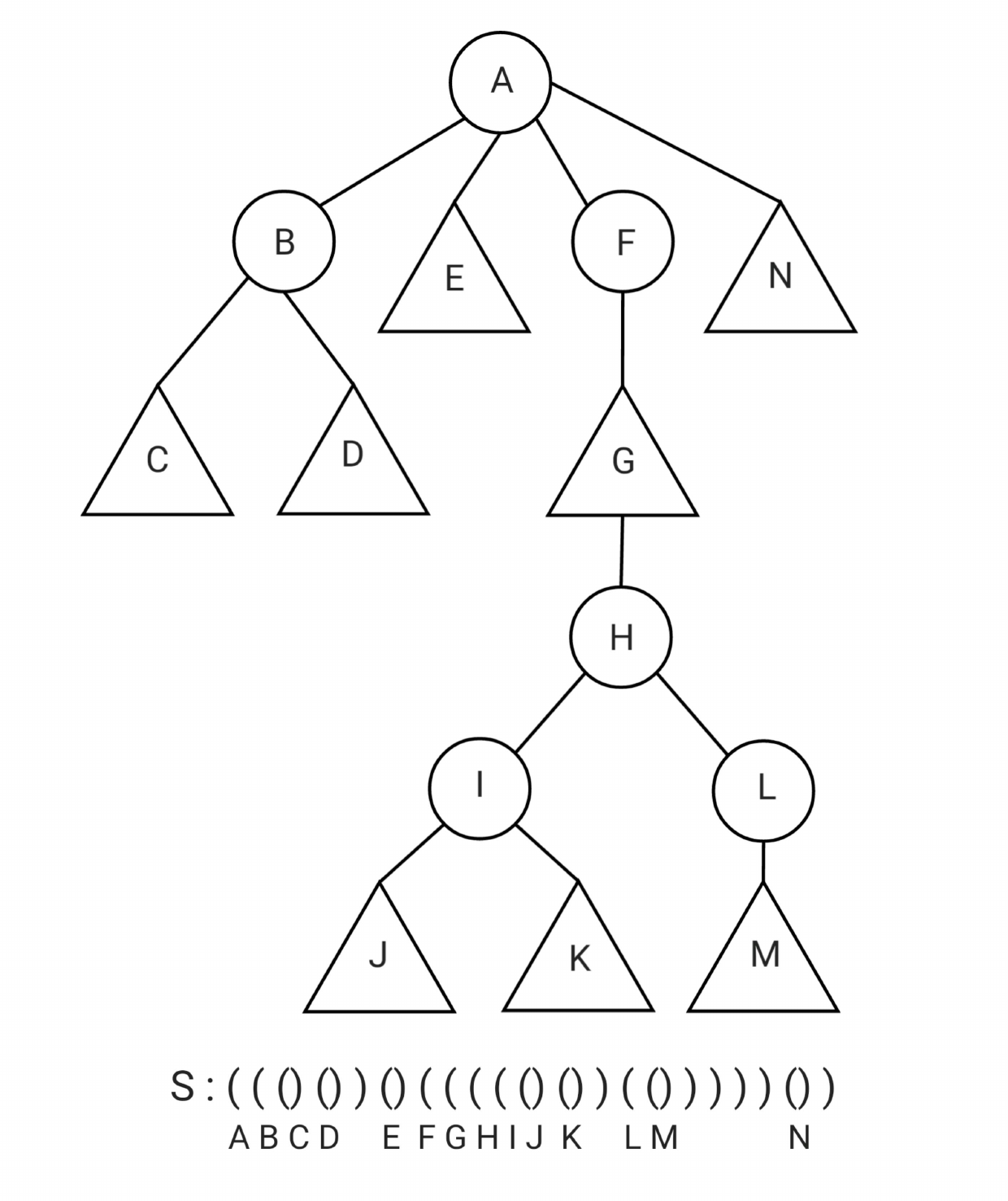}
\caption{Top-tier tree.}
\label{fig:treecover1b}
\end{subfigure}
\caption{The modified tree cover and the top-tier tree of the tree in Fig.\,\ref{fig:treecover1}.}
\end{figure}

We show how to represent the modified tree cover of a modified tree $t'$ using multi-type parentheses sequence.
We use three types: {\tt ()}, {\tt \{\}}, and {\tt []}.
Normal parentheses {\tt ()} represent non-root nodes of micro trees.
Curly braces {\tt \{\}} represent root nodes of singleton micro trees.
Square brackets {\tt []} represent root nodes of non-singleton micro trees, i.e., the dummy nodes inserted in $t$.
We simply construct a multi-type parentheses sequence $M$ during a \ac{DFS} of $t'$. 

This multi-type parentheses sequence $M$ has the following good properties.
\begin{enumerate}
    \item If we remove all normal parentheses from $M$ and convert curly braces and square brackets into normal parentheses, it coincides with the BP sequence $S$ of the top-tier tree.
    \item If we remove all square brackets from $M$ and convert curly braces into normal parentheses, it coincides with the BP sequence $P$ of $t$.
    \item For any micro tree $\mu$, its BP representation is cut into at most two parts. 
\end{enumerate}

Let us briefly explain why the third property holds. There are two possible cases why a micro-tree BP splits: an outgoing edge emanates from a non-root node or another micro tree hangs to the micro-tree root and is between two children of the micro-tree root. The former case occurs at most once by Prop.\,\ref{prop:originalTC}. Due to the greedy strategy of the Farzan--Munro algorithm, the latter case rarely happens: it can be verified by examining the detail of the algorithm that the latter case occurs only when the micro-tree root has only one heavy child, which then implies that the micro tree has no other outgoing edge. Thus, the latter case occurs at most once and both the former and latter cases do not simultaneously happen.

The third property also holds for the original BP sequence $P$ if we consider the partition of $P$ into micro-tree BP sequences and apply the conversion described in the second property. This is equivalent to excluding the micro-tree root when considering the BP representation of a non-singleton micro tree. We assume this implicitly when we consider a micro-tree BP of $P$. Otherwise, a micro tree can correspond to more than two intervals of $P$. Also, for simplicity, we arbitrarily split a single interval corresponding to a whole micro tree into two intervals so that every micro tree corresponds to two intervals; we call the resulting two intervals \textit{chunks}.

We observe that any micro tree in the modified tree cover either has a dummy node as its root or is a singleton micro tree. Let $\mu$ be a non-singleton micro tree in the modified tree cover. The original tree cover is obtained by simply merging $\mu$ with the parent node of the dummy root node of $\mu$.  

Note that the sequence $M$ leads to a succinct representation of the tree $t$ and its tree cover.
Because the length of $M$ is $2n+\Order(n/B)$ and the number of curly braces and square brackets is $\Order(n/B)$, we can store $M$ in $2n+\Order(n (\log B) / B + n (\log \log n) / \log n)$ bits by using sparse bitvectors~\cite{RamanRS07}.

\subsubsection{Practical Designs}
Although the sequence $M$ leads to an intuitive and succinct representation of ordinal trees, there is still room for simplification in practice. In what follows, Theorem~\ref{thm:TC_index} first presents a practical design of tree-covering indexes for ordinal trees.  Theorem~\ref{thm:replace_with_rmm} then discusses the replacement of the indexes to handle tree navigational queries. Theorem~\ref{thm:TC_index_modified} finally describes another design that enables compressing micro trees with an arbitrary encoding, which is suitable for designing succinct representations for subclasses of ordinal trees.

We first describe the role of curly braces and square brackets when viewing $M$ as a representation of the original tree cover:
\begin{enumerate}
    \item Curly braces and square brackets together delimit the original BP sequence $P$ based on the original tree cover.
    \item The difference between curly braces and square brackets is whether the micro-tree BP is complete or not; a micro-tree BP enclosed with square brackets lacks the micro-tree root and needs to be enclosed with a pair of parentheses to obtain the complete micro-tree BP.
\end{enumerate}

Thus, instead of inserting curly braces and square brackets into $P$, it suffices to store where to delimit the sequence $P$ and whether each micro-tree BP is complete or not. Therefore, we present a practical succinct representation of ordinal trees based on this approach.
\begin{theorem}\label{thm:TC_index}
Let $\mu_1, \dots, \mu_m$ be the micro trees in the \ac{DFS} order on the top-tier tree.
The following indexes consume $\order(n)$ bits in total and represent the tree cover if combined with the BP sequence $P$.
\begin{itemize}
    \item A sparse bitvector $V$ of the same length as $P$ which marks the starting position of each chunk of $P$ with $1$. It handles rank and select operations in constant time and spends $\Order(n (\log B) / B + n (\log \log n) / \log n)$ bits~\cite{RamanRS07}.
    \item A data structure that represents the BP sequence $S$ of the top-tier tree. It handles basic operations on the BP sequence $S$ in constant time and spends $\Order(n / B)$ bits~\cite{Navarro2014FullyFunctional}.
    \item A boolean array $F$ whose $i$-th element indicates whether the BP of $\mu_i$ in $P$ is complete or not, i.e., whether $\mu_i$ consists of a single node or not. It consumes $\Order(n / B)$ bits.
\end{itemize}
\end{theorem}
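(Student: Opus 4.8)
The plan is to verify two things: first, that the three listed indexes together with $P$ suffice to reconstruct the original tree cover (correctness), and second, that their total space is $\order(n)$ (space bound). The space bound is immediate from the stated per-component bounds: $V$ costs $\Order(n(\log B)/B + n(\log\log n)/\log n)$, and both $S$ and $F$ cost $\Order(n/B)$; since $B$ is assumed super-constant in $n$, each term is $\order(n)$, and so is their sum. The substance is therefore in the correctness argument.

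For correctness, I would proceed as follows. The sequence $P$ is exactly the BP encoding of $t$ (Definition~\ref{def:ordinal_tree_bp}), so from $P$ alone we recover the tree $t$ itself. It remains to recover the partition of the nodes of $t$ into micro trees of the \emph{original} tree cover. Recall from Sec.\,\ref{subsubsec:modified_tree_cover} that (after the convention of excluding the dummy micro-tree root and arbitrarily cutting whole-micro-tree intervals in two) every micro tree corresponds to exactly two chunks of $P$, and these chunks tile $P$ consecutively in the \ac{DFS}/preorder of the top-tier tree $\mu_1,\dots,\mu_m$. The bitvector $V$ marks the left endpoint of each chunk, so $\texttt{rank}$/$\texttt{select}$ on $V$ lets us enumerate the chunk boundaries in $P$; pairing consecutive chunks two at a time recovers, for each $i$, the pair of intervals of $P$ belonging to $\mu_i$. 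The remaining ambiguity is that a pair of intervals of $P$ does not by itself tell us whether $\mu_i$ is a singleton micro tree (whose node is a single matching pair in $P$) or a non-singleton micro tree (whose BP in $P$ is ``incomplete'' — it lacks the micro-tree root and must be wrapped in one extra pair of parentheses to become a genuine micro-tree BP). This is precisely the one bit per micro tree recorded in $F$. Finally, to situate the micro trees relative to one another — i.e., which micro tree is the parent of which in the covering hierarchy — we use the BP sequence $S$ of the top-tier tree: by property~1 of $M$ in Sec.\,\ref{subsubsec:modified_tree_cover}, $S$ is obtained from $M$ by deleting normal parentheses and relabelling braces/brackets as parentheses, hence $S$ has one matching pair per micro tree, in the same \ac{DFS} order as the chunk pairs delimited by $V$; navigating $S$ gives the ancestor/child-micro-tree structure, and combined with the portal-free modified cover this reconstructs how the micro trees attach in $t$. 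Putting these together: $P$ gives $t$; $V$ gives the chunk decomposition, hence the modified micro trees; $F$ upgrades each incomplete micro-tree BP to a complete one (by adding the missing root pair); and then, as observed at the end of Sec.\,\ref{subsubsec:modified_tree_cover}, the original tree cover is recovered by merging each non-singleton micro tree with the parent of its dummy root — an operation guided by $S$. This recovers the original tree cover exactly.

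I expect the main obstacle to be making the ``$V$ recovers the chunk decomposition'' step airtight, because it rests on the earlier (somewhat delicate) claim that each micro tree splits into at most two intervals of $P$ together with the convention that pads a single interval into two. I would want to state carefully that the chunks are listed in $P$ in a canonical order matching $\mu_1,\dots,\mu_m$ — concretely, that the chunks of $\mu_i$ all precede those of $\mu_{i+1}$ only after the appropriate interleaving is accounted for — and in fact the cleaner route is to argue via $M$: deleting square brackets and converting curly braces to parentheses yields $P$ (property~2 of $M$), so the positions of the former braces/brackets in $M$ map to chunk boundaries in $P$, and these are exactly what $V$ marks. Phrasing the argument through $M$ rather than re-deriving the chunk structure from scratch keeps the proof short and lets me invoke the already-established properties of $M$ directly. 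The rest — the space computation and the singleton/non-singleton disambiguation via $F$ — is routine.
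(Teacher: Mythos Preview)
Your space computation is fine, but the correctness argument has a real gap: the claim that ``pairing consecutive chunks two at a time recovers, for each $i$, the pair of intervals of $P$ belonging to $\mu_i$'' is false. Chunks of different micro trees are \emph{nested}, not concatenated. If $\mu_j$ is a child micro tree of $\mu_i$ in the top-tier tree, then in $P$ the chunks of $\mu_j$ sit \emph{between} the left and right chunks of $\mu_i$; so the chunk immediately following the left chunk of $\mu_i$ is the left chunk of $\mu_j$, not the right chunk of $\mu_i$. You half-notice this in your obstacle paragraph, but your proposed fix (route through $M$) still does not say how to pair up the two chunks of a single micro tree.

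The missing idea is that $S$ does this pairing for you, not merely the hierarchy. The $r$-th chunk of $P$ (in left-to-right order, which $V$ gives via $\texttt{rank}_1/\texttt{select}_1$) corresponds to the $r$-th parenthesis of $S$; the left and right chunks of one micro tree correspond to a \emph{matching} pair in $S$. So given a position $i$ in $P$, compute $r=\texttt{rank}_1(V,i)$ to get the chunk index, read its extent via $\texttt{select}_1(V,r)$ and $\texttt{select}_1(V,r+1)$, then call $\texttt{open}$ or $\texttt{close}$ on $S$ at position $r$ to find the index of the other chunk of the same micro tree. Only then do you consult $F$ (indexed via $\texttt{rank}_{\texttt{(}}$ on $S$ at the left-chunk position) to decide whether the micro-tree root must be merged in. In your write-up, $S$ is relegated to recovering parent/child relations among micro trees; in fact its primary role here is this chunk-pairing via matching parentheses, and once you use it that way the rest of your argument goes through.
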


Given a position $i$ of $P$, we can determine the micro tree containing $i$ as follows. First, the index $r$ of the chunk can be obtained by $\rank_{1}(V, i)$. Then, the range of the chunk on $P$ is from $\select_{1}(V, r)$ to $\select_{1}(V, r + 1)$. To fully restore the micro tree, we also need to retrieve the other chunk that is a part of the same micro tree, and the index of the other chunk can be obtained by using either \open{} or \close{} operation on $S$ and $r$. Finally, we refer to the array $F$ to see if the micro-tree root needs to be merged. The index of $F$ can be obtained by $\rank_{(}$ operation on the index of the left chunk of the micro tree.

One advantage of the indexes in Theorem~\ref{thm:TC_index} is that they are compatible with the indexes of the BP representation, especially with the range min-max tree~\cite{Navarro2014FullyFunctional}. We can replace $V$ with the range min-max tree by modifying the algorithm: instead of fixed-length blocks in the original algorithm, we use variable-length blocks so that the leaves of the range min-max tree correspond to the chunks of $P$. Thereby, the data structure can handle BP operations on $P$ and thus tree navigational queries on $t$.

\begin{theorem}\label{thm:replace_with_rmm}
    Consider the data structures in Theorem~\ref{thm:TC_index}. Tree navigational queries can be supported by replacing the sparse bitvector $V$ with a range min-max tree that uses variable-length blocks.
\end{theorem}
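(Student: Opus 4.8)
\emph{Proof plan.} The plan is to build, in place of $V$, a range min-max tree over the sequence $P$ itself whose block boundaries are placed exactly at the chunk boundaries, and then observe two things: such a structure still encodes everything $V$ did, so the decoding of the tree cover described after Theorem~\ref{thm:TC_index} goes through unchanged; and, being a range min-max tree over the full \ac{BP} sequence of $t$, it supports all the \ac{BP} primitives on $P$ and hence all \ac{BP}-based tree navigational queries on $t$. Concretely, I would partition $P$ into variable-length blocks whose left endpoints are the starting positions of the chunks recorded by $V$; if the chosen $B$ is so large that a chunk may exceed $\Theta(\log n)$ parentheses, I additionally cut each over-long chunk into pieces of $\Theta(\log n)$ parentheses and keep a plain bitvector over the resulting blocks marking which block starts a chunk. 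A complete binary tree is then built on the blocks, each node storing the change in excess and the minimum excess over its range, exactly as in the standard range min / range min-max tree; the only deviation from the original algorithm is the location of block boundaries, which affects neither the tree shape nor the stored quantities.

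The space accounting is then routine: there are $\Order(n/B)$ chunks by Theorem~\ref{thm:TC_index}, hence $\order(n)$ blocks, so the auxiliary bitvector and all per-node data occupy $\order(n)$ bits — the same order as the $V$ it replaces — and $P$ is still stored only once; combined with $S$ and $F$ the total auxiliary space remains $\order(n)$. To see that $V$ is subsumed, note that for a position $i$ of $P$ the range min-max tree locates the block, hence (via the chunk-start marks) the chunk, containing $i$, which yields $\rank_{1}(V,i)$; the symmetric downward search yields $\select_{1}(V,r)$, the starting position of the $r$-th chunk. Thus every step of the procedure following Theorem~\ref{thm:TC_index} that queried $V$ can instead query the range min-max tree, so the tree cover is still represented, and $S$ and $F$ are kept unchanged solely for identifying and completing micro trees.

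For the navigational queries themselves, I would invoke the standard functionality of the range min-max tree over $P$: it supports $\access$, $\ranko$, $\rankc$, $\selecto$, $\selectc$, $\open$, $\close$, enclosing-parenthesis queries, $\textproc{excess}$, and $\textproc{rmq}$ on $P$ — in constant time in the theoretical regime, since each block is $\Order(\log n)$ bits and in-block queries reduce to table look-ups or word operations. Because $P = \BPo(t)$, these primitives yield the usual \ac{BP}-based ordinal-tree operations on $t$ (parent, $i$-th child, next sibling, degree, depth, subtree size, \ac{LCA}, level ancestor, and so on), in direct analogy to the binary-tree routines in Alg.\,\ref{alg:navigation_on_bp} transposed to the $\BPo$ convention; no structure beyond the range min-max tree over $P$ is needed for navigation.

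The main obstacle I foresee is the first step: reconciling the requirement that the range min-max tree's leaves correspond to chunks with the requirement that in-block operations run in constant time. When $B$ grows faster than $\log n$ this forces the extra subdivision-and-marking above, and one must check that it neither inflates the $\order(n)$ space bound nor breaks the constant-time in-block primitives. Once the block structure is fixed, verifying that enclosing-parenthesis queries and $\textproc{rmq}$ carry over is routine, since the stored excess-change and min-excess values are computed exactly as in the standard range min-max tree.
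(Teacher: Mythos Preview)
Your proposal is correct and follows essentially the same approach as the paper: build a range min-max (or range min) tree over $P$ with leaves aligned to the chunks, storing per-node excess change, minimum excess, and block size (the paper records this explicitly as $s$ in the tuple $(e,m,s)$, which is what you rely on implicitly to locate blocks), and then note that the stored sizes simulate $\rank_1/\select_1$ on $V$ while the standard range min-max machinery over $P$ supplies all the \ac{BP} primitives needed for navigation. Your extra subdivision of over-long chunks is a refinement the paper simply sidesteps by working in the regime where chunks are already $\Order(\log n)$ long.
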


\begin{figure}[tb]
\begin{center}
\includegraphics[scale=0.5]{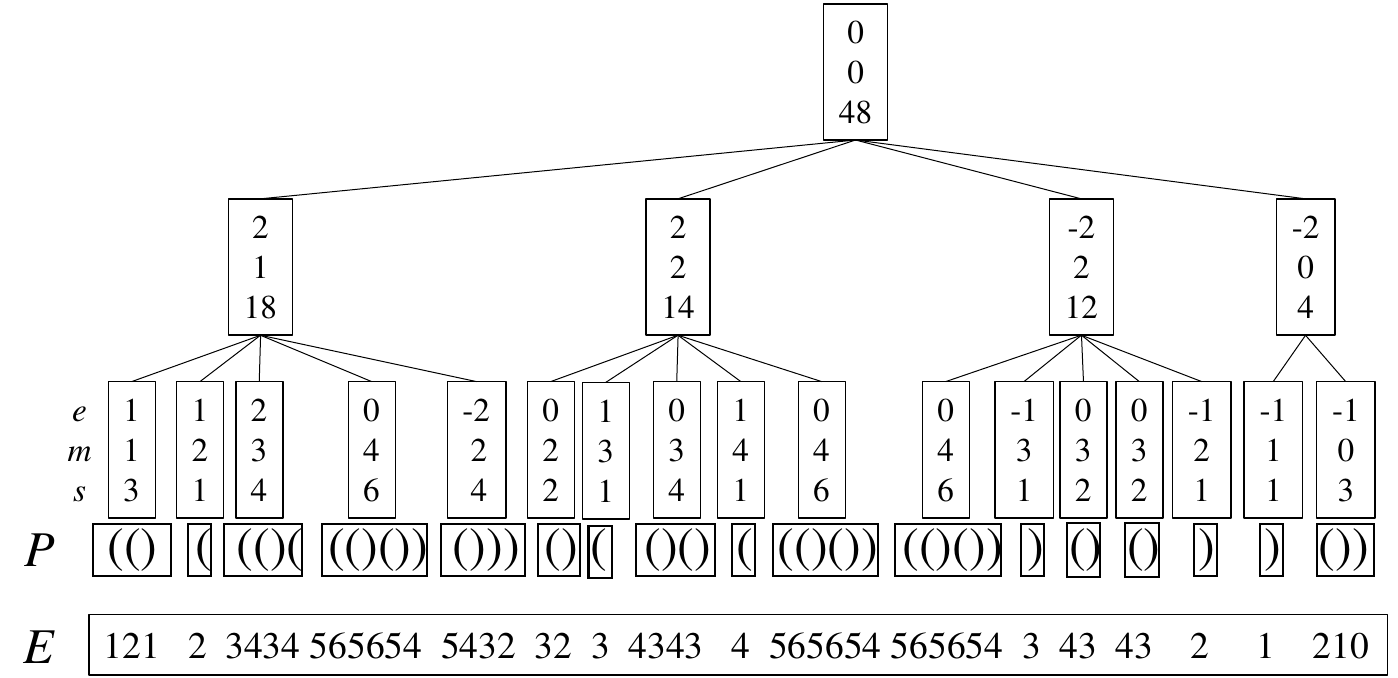}
\caption{The range min tree of a BP sequence.}
\label{fig:treecover2}
\end{center}
\end{figure}

Fig.\,\ref{fig:treecover2} gives an example of the range min tree~\cite{10.1145/2656332}, which is a simplified variant of the range min-max tree. For each chunk, we store the local excess value $e$, that is, the number of opening parentheses minus the number of closing parentheses in the chunk. We also store the minimum excess value $m$ in the chunk.
In addition to these two values, we store the size $s$ of the chunk because chunks have different lengths. The leaves of the range min tree store tuples $(e, m, s)$. An internal node of the range min tree also stores the tuple for the BP sequence made by concatenating those for the children. It is shown~\cite{Navarro2014FullyFunctional} that, if the total length of the BP sequence for all the chunks is $m = {\rm polylog}(n)$, the range min-max tree can be stored in $\Order(m (\log\log n)/\log n)$ bits and any basic tree operation can be done in constant time.  This can be easily extended for the case that chunks are of variable length of $\Order(\log n)$. Since the $\rank$ and $\select$ operations on $V$ can be simulated by using $s$ on the range min-max tree, we can retrieve micro trees as in Theorem~\ref{thm:TC_index}.

Another advantage of the indexes in Theorem~\ref{thm:TC_index} is that micro trees can be compressed with an arbitrary encoding by the following modified scheme. 

\begin{theorem}\label{thm:TC_index_modified}
Let $\mu_1, \dots, \mu_m$ be the original micro trees in the \ac{DFS} order on the top-tier tree. The function $D(\mu_i)$ represents the code of the micro tree $\mu_i$. 

Suppose we already have the indexes $V$ and $S$ described in Theorem \ref{thm:TC_index}. The sparse bitvector $V$ can be replaced with the range min-max tree to support tree navigational queries as in Theorem~\ref{thm:replace_with_rmm}.
Then, the storage of $P$ can be replaced with a variable-length cell array of codes $D(\mu_1), \dots, D(\mu_m)$ supporting random access.
Note that $F$ is unnecessary since it can be obtained by checking if $\mu_i$ consists of a single node or not. 
\end{theorem}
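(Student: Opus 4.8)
The plan is to reduce the whole statement to a single subroutine --- producing, on demand, the bits of one chunk of $P$ without $P$ being stored --- and then to observe that, with that subroutine available, the simulation of Theorem~\ref{thm:replace_with_rmm} goes through verbatim, so tree navigational queries on $t$ are still supported, the original tree cover is still recoverable, and $F$ becomes redundant. The data I keep are: the BP sequence $S$ of the top-tier tree; the range min-max tree built on the chunks of $P$, whose leaves store the tuples $(e,m,s)$ exactly as in Theorem~\ref{thm:replace_with_rmm}; and, in place of $P$, the concatenation $D(\mu_1)D(\mu_2)\cdots D(\mu_m)$ stored as a variable-length cell array, i.e. together with a standard rank/select-style layer over codeword boundaries giving random access to the $i$-th codeword (whose overhead is $\order(n)$ bits since $m=\Order(n/B)$).

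First I would make explicit the chunk-to-micro-tree correspondence already implicit in Theorem~\ref{thm:TC_index}: the chunks of $P$ are in bijection with the parentheses of $S$, the two chunks of a micro tree are a matching pair in $S$ with the left (earlier) chunk sitting at the opening parenthesis, and $\mu_i$ --- the $i$-th micro tree in the DFS order of the top-tier tree --- is the one whose opening parenthesis in $S$ is the $i$-th. Hence from a chunk index $r$ one recovers its micro-tree index $i$ with a single $\ranko$ on $S$ (preceded by an \open on $S$ if $r$ is a closing parenthesis), and the same parenthesis of $S$ tells us whether $r$ is the left or the right chunk of $\mu_i$.

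The subroutine itself: retrieve $D(\mu_i)$ by random access and decode it to the ordinal tree $\mu_i$. Let $\sigma_i$ be the string that $\mu_i$ contributes to $P$: if $\mu_i$ is a singleton then $\sigma_i=$ \texttt{()}; otherwise $\sigma_i$ is $\BPo(\mu_i)$ with its outermost pair deleted, i.e. $\BPo$ of the ordered forest of children of the dummy root of $\mu_i$. By the third ``good property'' of $M$ --- together with the convention that a micro tree forming a single interval of $P$ is split arbitrarily, which we fix once and for all as ``first parenthesis versus the rest'' --- the two chunks of $\mu_i$ are a prefix and the complementary suffix of $\sigma_i$, and the cut position equals the length of the left chunk, which is precisely the $s$-value stored at the corresponding leaf of the range min-max tree. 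So the requested chunk can be output, every basic operation on $P$ can be simulated as in Theorem~\ref{thm:replace_with_rmm}, and tree navigational queries on $t$ follow as before. The array $F$ is then unnecessary: $F[i]$ only records whether $\mu_i$ is a single node, which we read off from $D(\mu_i)$ after decoding (equivalently, whether $\sigma_i=$ \texttt{()}); and the original, unmodified tree cover is reconstructed as in Sec.~\ref{subsubsec:modified_tree_cover} by merging each non-singleton $\mu_i$ into the parent of its dummy root, which is located with $\ranko$ and \open/\close on $S$.

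The step I expect to be the main obstacle is the clean justification that the two chunks of $\mu_i$ really are a prefix and a suffix of $\sigma_i$ --- that the ``hole'' created in $\mu_i$'s footprint in $P$ by a hanging child micro tree occurs at a single position and is not broken across several pieces. This rests on the structural facts behind the third property of $M$ (from Prop.~\ref{prop:originalTC} and Theorem~\ref{th:newTC}: a non-singleton micro tree has at most one outgoing edge and it leaves from a non-root node, and the greedy Farzan--Munro strategy rules out a simultaneous split at the micro-tree root), which must be spelled out carefully; once they are, the remaining steps --- prefix/suffix extraction and $\ranko$/\open bookkeeping on $S$ --- are routine.
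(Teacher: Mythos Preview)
Your proposal is correct and follows essentially the same approach as the paper: recover a chunk of $P$ by locating its micro tree via $S$, decoding the stored code $D(\mu_i)$ to obtain the micro-tree BP contribution $\sigma_i$, and splitting $\sigma_i$ using the stored chunk lengths (from $V$ or the $s$-values in the range min-max tree). The paper's own justification is a one-paragraph sketch; your version spells out the bookkeeping in much greater detail, and the ``main obstacle'' you flag is exactly the third good property of $M$ already argued in Sec.~\ref{subsubsec:modified_tree_cover}, so no new work is needed there.
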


We can recover $P$ from the indexes: for every micro tree, we can identify where it is placed on $P$ as in Theorem~\ref{thm:TC_index}, and we can retrieve the micro-tree BP by accessing the corresponding code in a similar way as accessing $F$ in Theorem~\ref{thm:TC_index}. Also, random access on $P$ can be efficiently supported by this method.

One application of Theorem~\ref{thm:TC_index_modified} is optimal compression of unordered trees. The optimal compression of unordered trees needs to fit into $1.564 n + \order(n)$ bits~\cite{Otter1948NumberOfTrees}. Farzan and Munro~\cite{Farzan2014UniformParadigm} achieve this by creating a look-up table $T_k$ that lists all the possible micro-tree shapes with $k$ nodes and encoding each micro tree into its size $s$ and its index $o$ of the table $T_s$. This technique is compatible with the scheme in Theorem~\ref{thm:TC_index_modified}, giving another succinct representation of unordered trees with fewer indexes.

Another application is hypersuccinct trees~\cite{Munro2021Hypersuccinct}, which achieves optimal compression for various tree distributions by encoding micro trees with a Huffman code. Using a Huffman code in Theorem~\ref{thm:TC_index_modified}, we obtain a practical design of hypersuccinct trees.

\subsection{Binary Trees}

\subsubsection{Extension from Ordinal Trees to Binary Trees}

There are several differences to consider when applying the BP representation of tree covering to binary trees. First, the micro trees do not share nodes when the Farzan--Munro algorithm is applied to binary trees. Thus, we do not need to modify tree covering as in Sec.\,\ref{subsubsec:modified_tree_cover}.
Also, as introduced in Sec.\,\ref{subsubsec:bp_def}, the BP representation of a binary tree differs from that of ordinal trees. Thus, how the tree-covering hierarchy appears in the BP sequence should be again investigated. 

We note the advantages of using the BP representation for binary trees. First, since the representation distinguishes a single left child and a single right child, we do not need to tailor our scheme to full binary trees. Also, we can directly handle queries involving inorder values on the BP sequence, which is especially beneficial when applying to Cartesian trees to implement average-case optimal \ac{RMQ} data structures.

In particular, we expect RMQ data structures using hypersuccinct binary trees~\cite{Munro2021Hypersuccinct} to be more efficient in practice. While the RMQ data structure achieves optimal space consumption for random permutations and processes queries in constant time, there are some problems in practice. First, it handle queries involving inorder by reduction to several queries~\cite{Davoodi2014EncodingRangeMinima}, which will make \ac{RMQ} implementation inefficient in both time and space. Also, it requires many $\order(n)$-bit indexes and will consume a significant amount of space. Thus, we need to incorporate the tree-covering structure into the BP sequence of binary trees so that the data structure becomes more efficient in both time and space.

In the remainder of this section, we first show that every micro tree corresponds to at most two intervals on the BP representation, which enables adopting the data structures in Theorems~\ref{thm:TC_index}--\ref{thm:TC_index_modified} to binary trees. We then introduce a \textit{split rank}, which corresponds to portals but is more versatile and compact. Finally, we discuss the data-structure design utilizing the observations.

\subsubsection{Top-Tier Tree}
Here, we discuss that each micro-tree BP appears as one or two intervals in the BP representation of the original tree, and replacing each one or two intervals with a pair of parentheses gives the BP of the top-tier tree. Fig.\,\ref{fig:top_tier_schematic} summarizes the overview.

\begin{figure}[tb]
    \centering
    \includegraphics[width=0.9\columnwidth]{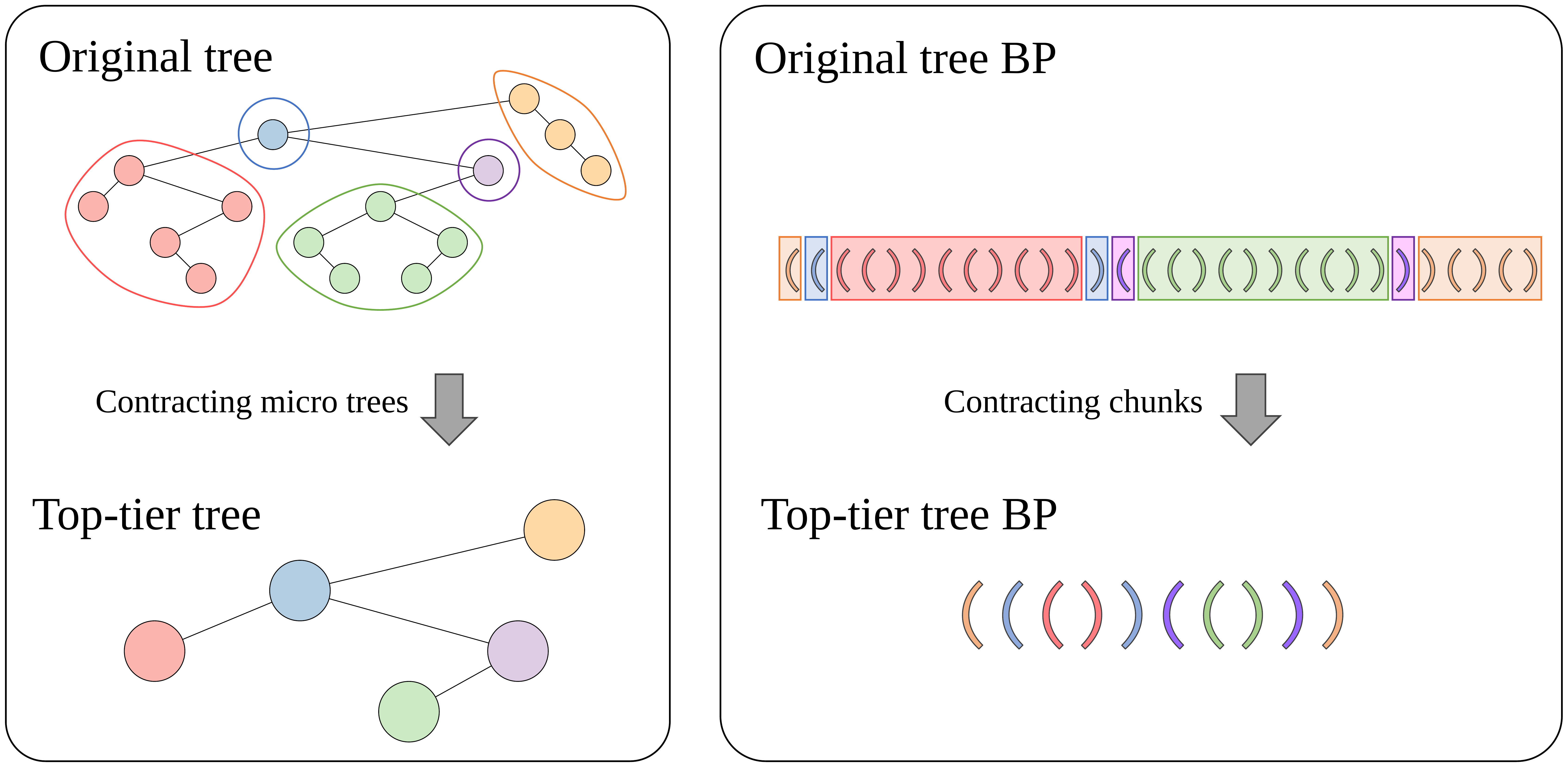}
    \caption{Overview of the relationship between the original tree and the top-tier tree. Each micro tree corresponds to one or two intervals of the BP sequence. Just as contracting micro trees yields the top-tier tree, so contracting chunks into parentheses yields the BP representation of the top-tier tree.}
    \label{fig:top_tier_schematic}
\end{figure}

\begin{proposition}
    \label{prop:binary_chunks_at_most_two}
    Each micro tree corresponds to one or two intervals of the BP representation of the original tree.
\end{proposition}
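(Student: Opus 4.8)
The plan is to work directly with the recursive definition of $\BPb$ and to reduce the statement to the two combinatorial properties of the Farzan--Munro decomposition already recorded in Prop.\,\ref{prop:originalTC} and Prop.\,\ref{prop:binary_tree_decomposition}.

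First I would record the basic structural fact: by Def.\,\ref{def:binary_tree_bp}, the subtree rooted at a node $w$ of $t$ occupies a contiguous balanced interval $I(w)$ of $\BPb(t)$, and these intervals are laminar (nested when one node is an ancestor of the other, disjoint otherwise). Fix a micro tree $\mu$ with root $v$, and let $r_1,\dots,r_k$ be the roots of the child micro trees of $\mu$, i.e.\ the heads of the outgoing edges of $\mu$. Since the micro trees are disjoint in the binary-tree case, $\mathrm{subtree}(v)$ is the disjoint union of the node set of $\mu$ together with the sets $\mathrm{subtree}(r_1),\dots,\mathrm{subtree}(r_k)$, which are themselves pairwise disjoint because the $r_i$ are pairwise incomparable in $t$. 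Consequently the set of parentheses contributed by the nodes of $\mu$ is exactly $I(v)\setminus\bigcup_{i=1}^{k} I(r_i)$: a contiguous interval with $k$ pairwise-disjoint sub-intervals removed, hence a union of at most $k+1$ intervals. It therefore suffices to bound $k$ and to locate the removed sub-intervals precisely.

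Next I would bound $k$. By Prop.\,\ref{prop:originalTC}, $\mu$ has at most one outgoing edge that does not emanate from $v$; and by Prop.\,\ref{prop:binary_tree_decomposition}(2), if $\mu$ has two outgoing edges toward children then $\mu$ is a single node, so $k\le 2$, with $k=2$ only when $\mu=\{v\}$. If $k=0$, the parentheses of $\mu$ form the single interval $I(v)$. If $k=1$, deleting the sub-interval $I(r_1)\subsetneq I(v)$ from $I(v)$ leaves at most two intervals. If $k=2$, then $\mu=\{v\}$ and its two children's subtrees $t_l,t_r$ are both child-micro-tree subtrees; Def.\,\ref{def:binary_tree_bp} gives $\BPb(t_v)=\texttt{(}\cdot\BPb(t_l)\cdot\texttt{)}\cdot\BPb(t_r)$, so the only parentheses of $\mu$ are the opening parenthesis of $v$ (at the start of $I(v)$) and its matching closing parenthesis (immediately after the block $\BPb(t_l)$), and these occupy two singleton intervals. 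In every case the parentheses of $\mu$ form one or two intervals, as claimed; when only one results, one may split it arbitrarily into two chunks exactly as in the ordinal-tree discussion.

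I do not anticipate a real obstacle: once the identity ``parentheses of $\mu$ $=I(v)\setminus\bigcup_i I(r_i)$'' is established, the conclusion is forced by the cited propositions. The one point demanding care is the singleton-with-two-children case, where the two removed blocks $I(r_1)$ and $I(r_2)$ are separated by the single closing parenthesis of $v$, so that the micro tree genuinely splits into two (degenerate) intervals rather than one; this is the binary-tree counterpart of the ``a micro tree hangs to the micro-tree root between two of its children'' situation in the ordinal case, whereas the $k=1$ subcase in which the outgoing edge leaves a non-root node is the counterpart of ``an outgoing edge emanates from a non-root node''. Verifying that no further splitting can occur is precisely what Prop.\,\ref{prop:originalTC} and Prop.\,\ref{prop:binary_tree_decomposition} buy us.
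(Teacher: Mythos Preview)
Your proposal is correct and takes essentially the same approach as the paper: form the interval $I(v)$ of the subtree rooted at the micro-tree root, delete the sub-intervals $I(r_i)$ of the child micro trees, and use Prop.\,\ref{prop:binary_tree_decomposition} to handle the two-child case via the singleton observation. Your write-up is more explicit (you spell out the $k=2$ computation from Def.\,\ref{def:binary_tree_bp} and justify $k\le 2$ separately), but the argument is the same.
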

   
\begin{proof}
    Since a subtree corresponds to an interval of the BP, the intervals of a micro tree can be obtained by first considering an interval that corresponds to a subtree rooted at the micro-tree root and then removing sub-intervals that correspond to subtrees rooted at the roots of left-child and right-child micro trees. If the number of the child micro trees is at most one, the number of the intervals is at most two, thus satisfying the proposition; if the number of the child micro trees is two, then Prop.\,\ref{prop:binary_tree_decomposition} implies that the micro tree consists of a single node and its BP consists of two parentheses, which immediately shows the proposition.
\end{proof}

For simplicity, if a micro tree corresponds to a single interval, we virtually split it into two intervals so that every micro tree corresponds to two intervals, which we call \textit{chunks}.
\begin{definition}
    \label{def:chunks}
    A single interval representing a whole micro tree is split into two intervals: a left interval is defined to be the original interval, followed by a right interval of zero width. Thus, every micro tree corresponds to two intervals, which we call \textit{chunks}.
\end{definition}

The reason for this extreme division is that it works well with a variant of the top-tier tree and split ranks, which are discussed later. 
Where to split a single interval does not affect the theorem below.

\begin{theorem}[BP of top-tier tree]
    \label{thm:top_tier_tree_BP}
    Contracting each left and right chunk into a left and right parenthesis yields the BP of the top-tier tree.
\end{theorem}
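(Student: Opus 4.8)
The plan is to prove the identity by structural induction on the tree cover, peeling off the micro tree $\mu_0$ that contains the root of $t$. Write $\phi$ for the map of the statement: reading $\BPb(t)$, it replaces the left chunk of every micro tree by a single \texttt{(} and its right chunk by a single \texttt{)}; the goal is to show $\phi(\BPb(t)) = \BPb(T)$, where $T$ is the top-tier tree. The base case is when $t$ is a single micro tree: then $T$ is a single node, $\BPb(t)$ is one interval, so by Definition~\ref{def:chunks} its left chunk is all of $\BPb(t)$ and its right chunk is empty, giving $\phi(\BPb(t)) = \texttt{()} = \BPb(T)$.

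For the inductive step I would first establish a decomposition lemma refining Proposition~\ref{prop:binary_chunks_at_most_two}: writing $C_1, C_2$ for the left and right chunks of $\mu_0$,
\[
\BPb(t) = C_1 \cdot \BPb(t^L) \cdot C_2 \cdot \BPb(t^R),
\]
where $t^L$ and $t^R$ are the subtrees of $t$ hanging from $\mu_0$ that become, respectively, the left and right subtrees of the image of $\mu_0$ in $T$, each possibly empty (with empty BP string). The proof is a case analysis on $\mu_0$ driven by the recursion $\BPb(t) = \texttt{(}\BPb(t_l)\texttt{)}\BPb(t_r)$ of Definition~\ref{def:binary_tree_bp}. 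If $\mu_0$ is a single node it has at most two child micro trees, and expanding the recursion once gives the claimed form with $C_1 = \texttt{(}$, $C_2 = \texttt{)}$ when the root of $t$ has a left child, and with $C_1 = \texttt{()}$ and $C_2$ the zero-width chunk of Definition~\ref{def:chunks} otherwise. If $\mu_0$ has at least two nodes, Proposition~\ref{prop:binary_tree_decomposition} forces it to have at most one outgoing edge, which in the inductive step is exactly one; unfolding the recursion of $\BPb$ along the path from the root of $\mu_0$ to the endpoint of that edge inside $\mu_0$ shows that the parentheses belonging to $\mu_0$ occupy either one contiguous interval of $\BPb(t)$---with the escaping subtree's BP immediately following it, so $C_2$ is the zero-width chunk, $t^L$ is empty, and the escaping micro tree is the right child---or two intervals with the escaping subtree's BP sandwiched strictly between them, so $t^R$ is empty and the escaping micro tree is the left child. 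Comparing with Definition~\ref{def:chunks} yields the displayed identity together with the identification of $t^L$ and $t^R$.

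Granting the lemma, the induction closes: since $\phi$ operates within each chunk and within each embedded $\BPb(\cdot)$-block independently,
\[
\phi(\BPb(t)) = \phi(C_1)\cdot\phi(\BPb(t^L))\cdot\phi(C_2)\cdot\phi(\BPb(t^R)) = \texttt{(}\cdot\phi(\BPb(t^L))\cdot\texttt{)}\cdot\phi(\BPb(t^R)).
\]
The micro trees of $t$ lying inside $t^L$ (and inside $t^R$) again form a binary-tree decomposition as in Proposition~\ref{prop:binary_tree_decomposition}, with strictly fewer micro trees, so the induction hypothesis gives $\phi(\BPb(t^L)) = \BPb(T^L)$ and $\phi(\BPb(t^R)) = \BPb(T^R)$, where $T^L$ and $T^R$ are the top-tier trees of those induced decompositions---that is, the left and right subtrees of the image of $\mu_0$ in $T$. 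Hence $\phi(\BPb(t)) = \texttt{(}\BPb(T^L)\texttt{)}\BPb(T^R) = \BPb(T)$ by Definition~\ref{def:binary_tree_bp}.

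The main obstacle is the case analysis inside the decomposition lemma, and specifically the orientation bookkeeping: one must verify that the escaping subtree's BP lands strictly between the two chunks of $\mu_0$ exactly when its micro tree is the left child of $\mu_0$ in $T$, and strictly after the right chunk (with $C_2$ zero-width) exactly when it is the right child. Because the escaping edge may leave from a non-root node of $\mu_0$, this is not simply a matter of reading off whether the edge uses a left- or a right-child slot; it depends on finer structure of where the edge attaches within $\mu_0$, and matching it against the inherited left/right structure of $T$ is the delicate point. Proposition~\ref{prop:binary_tree_decomposition}, in particular the fact that a micro tree with two outgoing edges toward children must be a single node, is what keeps this analysis finite, since it rules out the configurations that would split $\mu_0$'s BP into three or more fragments.
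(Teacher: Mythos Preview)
Your inductive argument is sound and proves the statement, but it follows a genuinely different route from the paper. The paper argues edge by edge: it uses the characterization that in a $\BPb$ sequence $c$ is a child of $p$ iff the opening parenthesis of $c$ sits immediately to the right of one of the two parentheses of $p$, and then checks that for every top-tier edge $(\mu_p,\mu_c)$ the first parenthesis of $\mu_c$'s left chunk abuts a chunk of $\mu_p$ in $\BPb(t)$, so the adjacency survives contraction. This is shorter than your induction and avoids the decomposition lemma entirely, though it leaves the balancedness of the contracted string implicit; your recursion $\phi(\BPb(t)) = \texttt{(}\,\phi(\BPb(t^L))\,\texttt{)}\,\phi(\BPb(t^R))$ gives balancedness for free.

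The obstacle you flag, however, dissolves once you see the remark the paper places right after its proof: Proposition~\ref{prop:binary_tree_decomposition} does not fix whether a single child in the top-tier tree is a left or a right child, and Theorem~\ref{thm:top_tier_tree_BP} is what \emph{defines} that orientation. So in your Case~C there is no pre-existing left/right structure of $T$ to match against; declaring the escaping micro tree to be the left child exactly when $\BPb(t')$ is sandwiched between $C_1$ and $C_2$, and the right child when $C_2$ has zero width, \emph{is} the definition. With that worry removed, your decomposition lemma needs only the structural fact you already sketch --- the removed subtree interval is contiguous and cannot be a prefix because the root of $t$ lies in $\mu_0$ --- and the induction closes without any delicate bookkeeping.
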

\begin{proof}    
    It suffices to show that the edges of the top-tier tree also appear in the contracted BP. Note that the BP representation has an edge between a parent $p$ and a child $c$ if and only if the opening parenthesis corresponding to $c$ is immediately right next to either of the matching parentheses corresponding to $p$. 
    
    We take an arbitrary edge of the top-tier tree. Let $\mu_{p}$ and $\mu_{c}$ be the parent micro tree and the child micro tree that are connected by the edge. Then, one node of $\mu_p$ and one node of $\mu_c$ are connected in the original tree; let $p$ and $c$ be the connected node from $\mu_p$ and $\mu_c$, respectively. Note that $c$ must be the root of $\mu_c$. Then, in the BP representation of the original tree, the opening parenthesis of $c$ is right next to either of the matching parentheses of $p$. Since the opening parenthesis of the micro-tree root $c$ is the first element of $\BPb(\mu_c)$, it belongs to the left chunk. Thus, the left chunk of $\mu_c$ is right next to a chunk of $\mu_p$. After contracting chunks, the left parenthesis of $\mu_c$ is right next to either of the parentheses of $\mu_p$, which indicates that the contracted BP has an edge between the parent $\mu_p$ and the child $\mu_c$.
\end{proof}

Note that the above theorem defines whether a single child in the top-tier tree is a left- or right-child, which is not discussed in the definition of the top-tier tree. It also shows that the original BP can be obtained by properly replacing pairs of matching parentheses in the top-tier tree BP with the corresponding micro-tree BP sequences.

Although not used in the implementation, we discuss a variant of the top-tier tree. This requires splitting a single interval at the right end when defining a chunk: it defines the left chunk as the original interval followed by the right chunk of zero width. Then, if a micro tree has a single right child, its right chunk is empty. Thus, even if we change the single right child to the single left child, it only moves in the top-tier tree BP the parentheses corresponding to zero-width chunks. Therefore, Theorem \ref{thm:top_tier_tree_BP} still holds if we appropriately move the chunks and use a variant of a top-tier tree that defines all the single children as left children.

This variant has the advantage that it may be further compressed: it is a Motzkin tree, which hypersuccinct binary trees encode using $1.585n + \order(n)$ bits~\cite{Munro2021Hypersuccinct}. A drawback of this variant is that it prevents efficient execution of $\enclose{}$ in the data structure, which is discussed in Sec.\,\ref{subsubsec:minimal_design}.

\subsubsection{Split Rank}
Theorem \ref{thm:top_tier_tree_BP} enables recovery of the original BP by appropriately replacing each matching pair of parentheses in the top-tier tree BP with the micro-tree BP.
To achieve the appropriate replacement, we need to remember where to split each micro-tree BP as in Theorem~\ref{thm:TC_index_modified}.
As for binary trees, we find a versatile and compact value that determines where to cut a micro-tree BP, named a \textit{split rank}. It naturally corresponds to portals and is useful for queries involving inorder, while consuming one bit less than naively storing where to split a micro-tree BP. In addition, storing split ranks consumes only half as much space as storing portals, since only one split rank is needed per micro tree, whereas two portals are needed per micro tree. We start with a lemma that enables more efficient storage than naively maintaining where to split the micro-tree BP.

\begin{lemma}
    \label{lem:right_chunk_begins_with_close}
    Non-empty right chunks begin with closing parentheses.
\end{lemma}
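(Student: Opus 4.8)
The plan is to analyze what a non-empty right chunk can look like, using the structure of micro trees established in Proposition~\ref{prop:binary_chunks_at_most_two} and its proof. Recall that the two chunks of a micro tree arise by taking the interval of $\BPb$ corresponding to the subtree rooted at the micro-tree root $c$ and removing the sub-intervals corresponding to the subtrees rooted at the roots of the child micro trees. By Definition~\ref{def:chunks}, a right chunk is non-empty only when the micro tree genuinely splits into two pieces, which happens precisely when some child micro tree's subtree interval is strictly inside (and not a prefix of) the micro tree's own interval — i.e. the excised sub-interval does not reach the left end of the micro-tree interval.

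First I would set up the recursive picture from Definition~\ref{def:binary_tree_bp}: for a non-empty binary tree $t$, $\BPb(t) = \texttt{(} \cdot \BPb(t_l) \cdot \texttt{)} \cdot \BPb(t_r)$. Walking down from the micro-tree root following this recursion, the portion of the sequence that lies strictly to the right of an excised child-micro-tree interval is always of the form ``$\cdot \texttt{)} \cdot \BPb(\text{something})$'' or ``$\cdot \BPb(t_r)$'' obtained after closing off some ancestor's left subtree — in every case, the first symbol immediately following a completed balanced sub-block is either a closing parenthesis (closing some node that is an ancestor, within the micro tree, of the excised child) or the opening parenthesis of a right subtree. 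I would argue that the latter cannot occur as the start of the right chunk: if the symbol right after the excised interval were an opening parenthesis, it would be the opening parenthesis of the right child of the node whose left subtree was just excised; but then that right child, together with everything after it up to the close of the micro-tree root, would still be part of the micro tree, and crucially the excised interval would have been a \emph{left child} micro tree — and the remaining material would extend to the right end, making the "split point" effectively trivial in a way that contradicts how the split is chosen. More carefully: the right chunk begins exactly at the position immediately following the last symbol of the (rightmost relevant) excised child-micro-tree interval, and that excised interval is itself a maximal balanced block $\BPb(\mu_{\text{child}})$ sitting inside $\BPb(\mu)$; the symbol following a maximal balanced block inside this recursive structure, when it is still inside the enclosing micro-tree interval, is forced to be a closing parenthesis.

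The cleanest way to carry this out is to invoke the proof of Proposition~\ref{prop:binary_chunks_at_most_two} directly: the right chunk is what remains of the micro-tree interval after removing a prefix (the left chunk) which ends with the excised child subtree's interval. Since that excised interval ends with a closing parenthesis (every $\BPb(\cdot)$ of a non-empty tree ends with \texttt{)}, and an empty child micro tree excises nothing), and since what follows inside $\BPb(\mu)$ after a fully-closed sub-block is — by the grammar $\texttt{(}\cdot\BPb(t_l)\cdot\texttt{)}\cdot\BPb(t_r)$ — the closing parenthesis \texttt{)} of the parent node $p\in\mu$ whose left subtree contained that child (because the child micro tree root sits in the left subtree of $p$, as otherwise $p$'s own opening parenthesis would lie to the left of the excised block and $p$ would have been separated off). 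Hence the first symbol of a non-empty right chunk is a closing parenthesis.

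I expect the main obstacle to be ruling out the case where the symbol immediately after the excised block is an \emph{opening} parenthesis (the start of a right subtree). This is exactly the situation where the micro-tree root $c$ has its \emph{right-child} micro tree split off, while additional micro-tree material lies after it. I would handle this by appealing to the structure in the proof of Proposition~\ref{prop:binary_chunks_at_most_two} together with Proposition~\ref{prop:binary_tree_decomposition}: if the micro tree has two outgoing child-micro-tree edges it is a single node (so both chunks are trivial and the right chunk is empty), and if it has exactly one child micro tree, that child's subtree interval, once removed, leaves a remainder whose leading symbol is the closing parenthesis of the nearest enclosing micro-tree node — the opening-parenthesis scenario would force a second outgoing edge or would mean the excised interval was a suffix (hence the "left chunk" was the whole thing and the right chunk empty, contradicting non-emptiness). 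Pinning down this case analysis cleanly, rather than hand-waving "what comes after a balanced block," is the crux, and I would do it by an explicit induction on the path in $\mu$ from its root down to the child-micro-tree root, tracking at each step which branch of the grammar rule $\texttt{(}\cdot\BPb(t_l)\cdot\texttt{)}\cdot\BPb(t_r)$ is taken.
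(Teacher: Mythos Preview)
Your approach is correct but takes a different route from the paper. You argue syntactically: by induction on the path from the micro-tree root down to the child-micro-tree root, tracking which branch of the rule $\texttt{(}\cdot\BPb(t_l)\cdot\texttt{)}\cdot\BPb(t_r)$ is taken, you conclude that the symbol immediately following any complete subtree interval (when one exists inside the enclosing interval) must be a closing parenthesis. That argument goes through, and in fact your worried ``opening-parenthesis scenario'' never arises---if the excised interval sits in a right-subtree slot, the next symbol is determined one level up, and the recursion terminates either at a left-subtree slot (yielding \texttt{)}) or at the end of the micro-tree interval (yielding an empty right chunk). So the case analysis you flag as the main obstacle is a non-issue once the induction is set up.

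The paper's proof is shorter and more conceptual: it reinserts the child-micro-tree root as a single leaf node into $\mu$, which amounts to inserting \texttt{()} between the two chunks; then if the right chunk began with \texttt{(}, that parenthesis would encode a right child of the inserted leaf, contradicting that it is a leaf. This avoids the path induction entirely and appeals only to the local meaning of adjacency in $\BPb$. Your approach buys a purely syntactic proof that never leaves the grammar, at the cost of a longer case analysis; the paper's buys brevity by translating back to tree structure for one step.
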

\begin{proof}
    Since the micro-tree BP splits into two intervals, there is a child micro-tree BP between the two intervals. By adding the root of the child micro tree as a new leaf to the current micro tree, we can identify where the micro-tree BP can be split. Adding the leaf to the micro tree corresponds to inserting a pair of parentheses between the chunks. If the right chunk begins with an opening parenthesis, the node represented by the opening parenthesis becomes the right child of the node indicated by the inserted pair, contradicting the fact that the inserted pair is a leaf.
\end{proof}
By the above lemma, the number of closing parentheses in the left chunk suffices to recover where to split the micro-tree BP. We define the quantity plus one as a \textit{split rank} so that it corresponds to the portal ranks, as discussed in Prop.\,\ref{prop:split_and_portal}.

\begin{definition}[Split rank]
    The \textit{split rank} of a micro tree is defined to be the number of closing parentheses in the left chunk plus one.
\end{definition}

Calling $\selectc{}$ with the split rank returns the starting index of the right chunk. Assuming that \select{} operation returns the last index plus one when searching for a non-existent key, it also works well when the micro-tree BP appears as a single interval in the original tree BP. Also, the split rank corresponds to the rank of the leftmost portal.
\begin{proposition}
    \label{prop:split_and_portal}
    The split rank equals the rank of the portal to the leftmost child if it exists.
\end{proposition}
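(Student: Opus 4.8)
The plan is to show both directions of the identification between split rank and the rank of the leftmost portal. Recall the setup: a micro tree $\mu$ corresponds to two chunks in $\BPb(t)$ — a left chunk $L$ and a (possibly empty) right chunk $R$ — obtained by removing from the interval of $\mu$'s root the sub-intervals of the child micro trees. The split rank is $r+1$ where $r$ is the number of closing parentheses in $L$. A portal is a position among the $|\mu|+1$ slots between consecutive nodes of $\mu$ in preorder-layout (the slots where a child micro tree's root is attached), and the leftmost portal, if it exists, is the attachment point of the left-child micro tree of $\mu$ in the top-tier tree. So I must argue that when this leftmost child exists, its portal rank — counted as the index of the slot from left to right — is exactly $r+1$.

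First I would make precise how a portal slot is read off from $\BPb(\mu)$. A micro-tree BP $\BPb(\mu)$ is a balanced sequence of $2|\mu|$ parentheses; the $|\mu|+1$ candidate portal positions correspond to the $|\mu|+1$ ``gaps'' where a child subtree can be spliced in. By Prop.~\ref{prop:order_of_paren} (applied within $\mu$), inserting a new leaf as the left child of the node whose preorder rank we want, or as a right child, corresponds to inserting a matching pair at a specific gap; the key point established in the proof of Lemma~\ref{lem:right_chunk_begins_with_close} is that the split point of $\mu$'s BP into $L\cdot R$ is exactly the gap where one would insert the root of the intervening child micro tree as a new leaf. So the portal rank of that leftmost child equals ``the index of the gap at which $\BPb(\mu)$ is cut into $L$ and $R$.''

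Next I would count gaps. Number the portal slots $1, 2, \dots, |\mu|+1$ from left to right. The claim is that the slot sitting immediately after the left chunk $L$ has index $1 + (\text{number of closing parentheses in }L)$. This follows because portal slots are in bijection with inorder positions of $\mu$ together with the ``before everything'' slot: in the $\BPb$ encoding a node's closing parenthesis marks its inorder position (Prop.~\ref{prop:order_of_paren}), and a gap after $k$ closing parentheses is precisely the gap following the $k$-th node in inorder, i.e.\ portal slot number $k+1$. Since $L$ contains exactly $r$ closing parentheses, the split gap is slot $r+1$, which is by definition the split rank. If $\mu$ has no left-child micro tree — either no children at all, so $R$ is the empty chunk and $r$ is the total number of closing parentheses $=|\mu|$, giving split rank $|\mu|+1$, the ``past the end'' slot; or $\mu$ has only a right child — then there simply is no leftmost portal to compare against and the statement is vacuous, consistent with the convention that $\selectc$ with the split rank returns the last index plus one.

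I expect the main obstacle to be the bookkeeping in the second step: pinning down exactly which gap in $\BPb(\mu)$ the top-tier tree's left-child edge maps to, and confirming it is the \emph{left}-child portal rather than merely \emph{a} portal. This hinges on the fact, noted after Theorem~\ref{thm:top_tier_tree_BP}, that the top-tier tree's notion of left vs.\ right child is induced by the relative order of chunks in the BP sequence: the left-child micro tree is the one whose left chunk appears first among $\mu$'s non-child positions, hence it is the micro tree spliced at the gap between $L$ and $R$ (there being at most one such intervening micro tree by Prop.~\ref{prop:binary_chunks_at_most_two} and Prop.~\ref{prop:binary_tree_decomposition}, since a micro tree with two children is a singleton whose $L$ and $R$ are both empty and whose split rank is $1$, matching the leftmost portal slot of a single node). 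Once the ``left chunk ends at portal slot $r+1$'' correspondence is stated cleanly, the rest is immediate from the definitions.
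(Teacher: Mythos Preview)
Your core argument---identifying portal slots with inorder gaps and counting them via closing parentheses in the left chunk---is exactly the mechanism the paper uses (it phrases it as inserting the child-micro-tree root as a leaf and reading off its inorder, but that is the same computation). Two points need correction.

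First, in the two-children case you write that the singleton's chunks $L$ and $R$ are both empty. They are not: the singleton's own parentheses remain after excising the two child subtrees, so $L = \texttt{(}$ and $R = \texttt{)}$. Your conclusion (split rank $=1=$ leftmost portal rank) is still correct because $L$ contains zero closing parentheses, but the stated reason is wrong.

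Second, and more substantively, you read ``leftmost child'' as ``left child in the top-tier tree'' and declare the single-right-child case vacuous. The paper does not: it treats ``leftmost child'' as the leftmost among whatever children exist, so when $\mu$ has a single child micro tree and the right chunk is empty (meaning, by Def.~\ref{def:chunks} and Theorem~\ref{thm:top_tier_tree_BP}, that this child is a \emph{right} child in the top-tier tree), the statement is not vacuous and must be checked. Here the child sits at the last portal slot $|\mu|+1$, and the split rank is also $|\mu|+1$ since $L$ is all of $\BPb(\mu)$. Your own gap-counting argument handles this case without change---the zero-width $R$ sits at the end of $L$, so the ``gap between $L$ and $R$'' is slot $|\mu|+1$---but you dismissed it instead of carrying it through. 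The paper's proof makes this case explicit as its third paragraph.
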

\begin{proof}
    If the micro tree has two child micro trees, the micro tree consists of a single node by Prop.\,\ref{prop:binary_tree_decomposition}. Thus, both values equal one and the proposition follows. 
    
    We first consider the case when the micro tree has a single child micro tree and the right chunk is non-empty. Then, we can utilize the same settings as the proof of Lemma~\ref{lem:right_chunk_begins_with_close}: we consider adding the child micro-tree root to the current micro tree as a new leaf, i.e., inserting a pair of parentheses between the chunks. Then, the micro-tree-local rank of the inserted closing parenthesis equals the split rank, thus the new leaf inorder equals the split rank as well. Also, as illustrated in Fig.\,\ref{fig:portals}, the inorder of the new leaf is equal to the portal rank. 

    We then discuss the case when the micro tree has a single child micro tree and the right chunk is empty. Then, the child micro tree needs to be at the last position of the possible portals to avoid the BP of the parent micro tree split in the middle. Thus, both the split rank and the portal rank equal the number of nodes in the micro tree plus one. This finishes the proof.
\end{proof}

\subsubsection{Counterparts of Practical Designs}
In what follows, we consider the counterparts of Theorems~\ref{thm:TC_index}--\ref{thm:TC_index_modified}. 
In the counterpart of Theorem~\ref{thm:TC_index}, we do not need $F$ since micro trees are disjoint. Thus, it just marks the starting points of the chunks with $V$, which is a natural way to store the tree cover. We also note that when $V$ is used to indicate the starting positions, we should not adopt the split positions in Def.\,\ref{def:chunks} since the starting points may collide and $V$ needs to be replaced with a multiset. Here, we can adopt the rule of cutting the intervals at some internal positions so that every chunk has a nonzero width and the start points become distinct.

Theorems~\ref{thm:replace_with_rmm} and \ref{thm:TC_index_modified} directly apply to binary trees.
It is now possible to use zero-width chunks. The data structures can exploit both the BP sequence and the tree-covering structure to efficiently handle tree navigational queries. For example, they support efficient inorder conversion by using the BP sequence, which is conventionally solved by reduction to several queries~\cite{Davoodi2014EncodingRangeMinima} in tree covering. They also enable efficient depth query that is difficult with the $\BPb$ representation but is easy if we use the tree-covering structure.

\subsubsection{Minimal Design}
\label{subsubsec:minimal_design}

Here, we present a more efficient design that leverages the hierarchical structure of the BP sequence and the limited use of BP operations due to tailoring to binary trees.
To support tree navigational queries, Theorem~\ref{thm:TC_index_modified} employs the range min-max tree built on $P$ to support the BP operations on $P$. However, if we limit the BP operations to ones necessary for the tree navigational queries, we can simplify the data structure. In particular, we can still support the tree navigational queries if we remove the minimum excess value $m$ in the nodes of the range min-max tree built on $P$, i.e., if we replace the range min-max tree with prefix sum data structures~\cite{RamanRS07}.

\begin{theorem}
    \label{thm:minimal_design}
    Among the data structures in Theorem~\ref{thm:TC_index_modified}, we can replace the range min-max tree constructed on $P$ with two arrays $O$ and $C$ which stores the number of opening and closing parentheses in each chunk and supports efficient prefix sum queries. More formally, the following data structures are sufficient to support the tree navigational queries in Alg.\,\ref{alg:navigation_on_bp}.
    \begin{itemize}
        \item A variable-length cell array of codes $D(\mu_1), \dots, D(\mu_m)$ supporting random access.
        \item A data structure that represents the BP sequence $S$ of the top-tier tree. It handles basic operations on the BP sequence $S$ in constant time and spends $\Order(n / B)$ bits~\cite{Navarro2014FullyFunctional}.
        \item Two arrays $O$ and $C$ of the same length as the number of chunks, defined as follows: the $i$-th element of $O$ and $C$ stores the number of opening and closing parentheses in the $i$-th chunk, respectively. The arrays $O$ and $C$ support queries involving prefix sum~\cite{RamanRS07}.
    \end{itemize}
\end{theorem}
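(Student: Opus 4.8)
The plan is to revisit each tree navigational operation in Alg.\,\ref{alg:navigation_on_bp} and show that its implementation, which is phrased in terms of the global BP operations \open{}, \close{}, \access{}, $\rank{}$, $\select{}$, and $\enclose{}$ on $P$, can be carried out using only (i) random access to the codes $D(\mu_i)$ to reconstruct individual micro-tree BP sequences, (ii) the constant-time data structure on the top-tier BP sequence $S$, and (iii) prefix sums on the arrays $O$ and $C$. First I would set up the coordinate bookkeeping: using $O$ and $C$, a prefix sum over their sum $O[i]+C[i]$ gives the starting position in $P$ of the $i$-th chunk, and conversely a predecessor/search query on these prefix sums maps a global position of $P$ to the chunk index (equivalently $\rank_1(V,\cdot)$ and $\select_1(V,\cdot)$ are simulated). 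Given a chunk index, Theorem~\ref{thm:top_tier_tree_BP} identifies the corresponding parenthesis of $S$, so \open{}/\close{} on $S$ retrieves the sibling chunk of the same micro tree, and decoding $D(\mu_i)$ plus the split rank (which, by Lemma~\ref{lem:right_chunk_begins_with_close}, is recoverable from the micro-tree BP together with the number of closing parentheses in the left chunk, i.e.\ from $C$) lets us locate the micro-tree-local position of any global index and vice versa. The global $\ranko$ and $\rankc$ at a position of $P$ decompose as a prefix sum over complete chunks (supplied by $O$, $C$) plus a local count inside one decoded micro tree; $\selecto$ and $\selectc$ are the inverse, found by first locating the chunk via prefix sums on $O$ or $C$ and then scanning the decoded micro tree.

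The substantive point is that the range min-max tree on $P$ was used only for operations that query the \emph{minimum} excess — namely $\enclose{}$ and, through \textproc{rmq}, \textproc{lca} — so I must argue these too can be served without a min structure on $P$. Here the key observation is that every such query in Alg.\,\ref{alg:navigation_on_bp} is ``local'' in a precise sense: inspecting the algorithm, $\enclose{}$ is only ever called on a closing parenthesis $v$ (in \textproc{rightmostdesc}, \textproc{subtreesize}, \textproc{isancestor}) and what we need is the matching structure of the \emph{enclosing} interval, while $\rmq$ is called only inside \textproc{lca}. For $\enclose{}$, I would argue by the two-level decomposition: the enclosing parenthesis of $v$ either lies in the same micro tree as $v$ — in which case we find it by decoding that micro tree and running $\enclose{}$ locally, then translating back via the chunk offsets — or it lies in an ancestor micro tree, in which case the relevant information is an $\enclose{}$-type query on the top-tier BP $S$, which we are allowed to answer in constant time; the split rank tells us how the two chunks of the intervening micro tree interleave so the global coordinate is recovered correctly. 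For \textproc{lca}, I would use the characterization already given in Sec.\,\ref{subsec:bp_query}: $\textproc{lca}(u,v)$ is the unique common ancestor whose inorder lies between those of $u$ and $v$; since ancestors of a node in the original binary tree are obtained by alternately climbing within a micro tree (via local BP operations on the decoded code) and jumping to the parent micro tree in $S$ (via \textproc{parent} on the top-tier tree), and since inorder comparisons reduce to $\rankc{}$ which we have shown is supported, the LCA can be located by this two-level walk — either it is the first common micro tree on the two ancestor paths in the top-tier tree (found with the top-tier structure), or, if $u$ and $v$ share a micro tree, it is a local \textproc{lca} computed inside the decoded micro-tree BP.

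Concretely, the steps in order: (1) show $O,C$ prefix sums simulate $\rank_1(V,\cdot)$/$\select_1(V,\cdot)$ and chunk-boundary arithmetic on $P$; (2) show that from a chunk index and $S$ we obtain the sibling chunk and thus the full micro tree, and that the split rank needed for reassembly is a function of $C$ and the decoded code; (3) show global $\rank$/$\select$/$\access$ on $P$ reduce to a prefix-sum term over complete chunks plus a local computation in one decoded micro tree; (4) handle \open{}/\close{} on $P$ by the same local-plus-top-tier split; (5) handle $\enclose{}$ and hence \textproc{rightmostdesc}, \textproc{subtreesize}, \textproc{isancestor} by the case analysis ``enclosing parenthesis in the same micro tree'' vs.\ ``in an ancestor micro tree'', the latter served by $S$; (6) handle \textproc{lca} via the two-level ancestor walk and the inorder-betweenness characterization. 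I expect step (5)–(6) to be the main obstacle: one must verify carefully that dropping the minima $m$ from the $P$-level structure never loses information, i.e.\ that every minimum-excess query that the original algorithm issues on $P$ is in fact confined to a single decoded micro tree or collapses to a query on $S$, and that the coordinate translation through the interleaved chunks (governed by the split rank) is done without off-by-one errors when a chunk is empty.
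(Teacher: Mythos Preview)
Your decomposition matches the paper's almost operation-for-operation: the same local-plus-top-tier split for \open{}/\close{}/\access{}, the same two-case analysis for $\enclose{}$, and the same use of $O,C$ prefix sums for $\rank{}$/$\select{}$. One representational difference worth noting is that the paper drops global positions on $P$ entirely and stores each parenthesis as a pair (chunk index, chunk-local index); this avoids repeatedly translating back and forth through prefix sums and makes the exceptional cases (\textproc{subtreesize} as a difference of inorders, \textproc{isancestor} as a lexicographic comparison of pairs) cleaner, but your scheme is equivalent.

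There is, however, a genuine gap in your \textproc{lca} argument. You reduce to two cases: (i) $u$ and $v$ lie in the same micro tree, handled locally; (ii) otherwise, find the LCA \emph{micro tree} via the top-tier structure. But having found the LCA micro tree in case (ii) you still have to locate the LCA \emph{node} inside it, and your write-up stops short of this. The paper splits (ii) further. If the LCA micro tree differs from both $u$'s and $v$'s micro trees, then it has two outgoing edges toward children, and Prop.\,\ref{prop:binary_tree_decomposition} forces it to be a singleton, so the node is determined. If instead the LCA micro tree coincides with, say, $u$'s micro tree, then you must identify the node of that micro tree emitting the portal toward $v$'s side (via the split rank, cf.\ Prop.\,\ref{prop:split_and_portal}) and take the local LCA of that node with $u$. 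Your ``two-level ancestor walk'' phrasing hints at iteratively climbing, which would not be constant time; the paper's four-step case analysis (its Prop.\,\ref{prop:lca_in_minimal}) is what you need to make step~(6) go through with the stated data structures.
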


There are two main roles of $O$ and $C$. One is to convert a node, its preorder, and its inorder to one another. The other is to indicate the length of each chunk, which can be obtained by adding the elements of $O$ and $C$.
Note that the values of $e$ and $s$ in each chunk in the range min-max tree can be easily recovered from the values of $O$ and $C$.

\begin{remark}
    \label{rem:actual_impl}
    We mainly use the above mechanism in the later implementation, but we modify the mechanism to further reduce the space consumption: it removes the array $O$ and replaces the array $C$ with the array of split ranks in the \ac{DFS} order of the top-tier tree and sparse sampling of the elements of $C$.
    By removing $O$, we need to give up converting a node and its preorder to each other. On the other hand, since we can find where a micro-tree BP splits on the BP sequence $P$ by using the number of micro-tree nodes and the split rank, we can still recover the length of each chunk.
    Also, the elements of $C$ can be efficiently recovered from the array of split ranks. In particular, the number of closing parentheses in the left chunk equals its split rank minus one by definition, and that in the right chunk can be obtained by subtracting that of the left chunk from the number of micro-tree nodes. Thus, we can support the conversion of a node and its inorder to each other.
\end{remark}

In this minimal design described in Theorem~\ref{thm:minimal_design}, each parenthesis is represented as a pair of indices: an index of a chunk and a chunk-local index. In what follows, we show that the BP operations given in Sec.\,\ref{subsec:balanced_parentheses} can be efficiently executed.

We first discuss the possible operations by using only the main three components. The following queries can be efficiently supported: \access{}, \open{}, \close{}, and $\enclose{}$. Using queries on the top-tier tree BP, we can convert the index of the micro-tree and the index of the chunk to each other. Thus, we can retrieve the micro-tree BP, which enables \access{}, \open{}, and \close{}. It also enables $\enclose{}$: the only difficult case is when the operation needs to search outside the micro-tree BP, but in that case, the enclosing parenthesis belongs to the right chunk that encloses the micro-tree BP, which can be retrieved by the $\enclose{}$ operation on the top-tier tree BP. 

Although we can also implement the \textproc{rmq} operation by using \textproc{rmq} on the top-tier tree, it is much more straightforward to directly implement \textproc{lca} by using the tree-covering property. The following proposition presents the steps needed to calculate \ac{LCA}.

\begin{proposition}
    \label{prop:lca_in_minimal}
    The following steps show how to calculate $\Call{lca}{u, v}$.
    \begin{enumerate}
        \item If $u$ and $v$ belong to the same micro tree, we calculate the \ac{LCA} within the micro tree and return the corresponding pair of indices.
        \item Otherwise, we compute the \ac{LCA} micro tree of the microtrees to which $u$ and $v$ belong.
        \item If the LCA micro tree is different from the micro trees to which $u$ and $v$ belong, then the LCA micro tree consists of a single node by Prop.\,\ref{prop:binary_tree_decomposition}. Thus, we can identify the corresponding node and return its pair of indices. 
        \item Otherwise, we can assume that the micro tree of $u$ coincides with the LCA micro tree without loss of generality. Then, we take the node that emits the portal to the subtree containing $v$, calculate the LCA of the taken node and $u$, and return its pair of indices. We can utilize Prop.\,\ref{prop:split_and_portal} to find the node that emits the portal.
    \end{enumerate}
\end{proposition}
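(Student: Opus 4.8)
The plan is to check that the four steps return $\textproc{lca}(u,v)$ and that each is realizable with the three components of Theorem~\ref{thm:minimal_design}, where we freely use the translation between a micro-tree-local position and its $(\text{chunk index},\ \text{chunk-local index})$ pair (this is available because the split point of every micro-tree BP is recorded by the split rank, cf.\ Lemma~\ref{lem:right_chunk_begins_with_close}, while the two chunk indices of a micro tree are read off from the top-tier tree BP $S$). Two structural facts drive everything. Fact~(i): each micro tree is a connected subtree of the original tree, so the only node of a micro tree whose parent lies outside it is the micro-tree root, and the micro trees meeting the root path of a node $x$ are exactly the ancestors-or-self of $\mu_x$ in the top-tier tree, with the top-tier edge from $\mu_x$ to its parent realized in the original tree by an edge from $\mathrm{root}(\mu_x)$ to a node of the parent micro tree. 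Fact~(ii): by Prop.~\ref{prop:binary_tree_decomposition}, a micro tree with two child micro trees is a single node.

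For Step~1, if $\mu_u = \mu_v =: \mu$ then the unique $u$--$v$ path lies inside the connected subtree $\mu$ and contains $\ac{LCA}(u,v)$, so $\ac{LCA}(u,v)\in\mu$ and we may answer with a local \textproc{lca} query on the retrieved micro-tree BP, translated to an index pair. Otherwise let $\nu = \textproc{lca}(\mu_u,\mu_v)$ in the top-tier tree (computed by \textproc{lca} on $S$), let $w = \ac{LCA}(u,v)$, and let $\mu_w$ be the micro tree containing $w$. By Fact~(i), $\mu_w$ is a common ancestor-or-self of $\mu_u$ and $\mu_v$, hence an ancestor-or-self of $\nu$.

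I then split on whether $\nu\in\{\mu_u,\mu_v\}$. If not (Step~3), then $\mu_u$ and $\mu_v$ sit in subtrees of two distinct top-tier children of $\nu$, so $\nu$ has two child micro trees; by Fact~(ii), $\nu$ is a single node $z$, and the roots of those child micro trees are the left and right children of $z$, whose subtrees contain $u$ and $v$ respectively; hence $\ac{LCA}(u,v)=z$, which in particular confirms that $\mu_w=\nu$ is a single node, and we return the index pair of $z$. If instead $\nu=\mu_u$ (the case $\nu=\mu_v$ being symmetric), then $\mu_v$ lies in the subtree of a top-tier child $c$ of $\mu_u$, and by Fact~(i) the edge $\mu_u\to c$ comes from an original-tree edge leaving some node $p\in\mu_u$; thus $v$ is a descendant of $p$, while $\mathrm{root}(\mu_u)$ is an ancestor of both $u$ and $v$, so the lowest common ancestor lies inside $\mu_u$. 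Inside $\mu_u$ the root paths of $u$ and of $v$ are $u\to\cdots\to\mathrm{root}(\mu_u)$ and $p\to\cdots\to\mathrm{root}(\mu_u)$, whose lowest common node is the micro-tree-local $\ac{LCA}$ of $u$ and $p$; hence $w$ equals that node. To realize this I would find $c$ by walking up from $\mu_v$ in the top-tier tree to the child of $\mu_u$, recover the portal rank of $c$ inside $\mu_u$ from the split rank via Prop.~\ref{prop:split_and_portal}, identify $p$ as the node of $\mu_u$ emitting that portal, and finish with a local \textproc{lca} query on the micro-tree BP of $\mu_u$.

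The main obstacle is the case analysis around $\nu$, and specifically the Step~4 claim that $w$ is the micro-tree-local $\ac{LCA}$ of $u$ and the portal-emitting node $p$: this needs Fact~(i) stated precisely enough to pin down from which node of $\mu_u$ the path toward $v$ leaves, and then the correspondence between that node and the stored split rank via Prop.~\ref{prop:split_and_portal}. The degenerate sub-cases --- $\mu_u$ having an empty right chunk, or $\nu=\mu_u$ itself being a single node so that $p=u$ and $\ac{LCA}(u,v)=u$ --- should be verified to fall out of the same reasoning, and the remaining conversions between micro-tree-local positions and $(\text{chunk},\text{offset})$ pairs are routine given the components of Theorem~\ref{thm:minimal_design}.
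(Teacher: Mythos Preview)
Your proposal is correct and follows essentially the same line of reasoning the paper relies on; in fact the paper does not give a separate proof of this proposition at all---the justification is folded into the statement itself via the citations to Prop.~\ref{prop:binary_tree_decomposition} (for Step~3) and Prop.~\ref{prop:split_and_portal} (for Step~4), and the rest is treated as self-evident from the tree-covering structure. Your write-up simply makes explicit the two structural facts (connectedness of micro trees, and the single-node property when there are two child micro trees) and carries out the case analysis the paper leaves to the reader, so there is nothing to contrast.
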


Thus, the remaining operations are the \rank{} and \select{} operations. For these operations, we first scan chunk-wise by using the arrays $O$ and $C$. We then inspect the chunk by retrieving the micro-tree BP sequence.

Using the BP operations discussed above, we describe how to implement tree navigational operations listed in Alg.\,\ref{alg:navigation_on_bp}. Since we can calculate the pair of indices that expresses an adjacent parenthesis, we can simulate the statements $v \pm 1$ in Alg.\,\ref{alg:navigation_on_bp}. Thus, by using the BP operations, most of the operations can be naturally implemented in our mechanism as well.

In what follows, we discuss some exceptional operations that cannot be implemented naturally. First, \textproc{subtreesize} subtracts between two indices to count the number of parentheses in the subtree, which cannot be implemented in our mechanism. We can instead use the formula $\Call{subtreesize}{v} = \Call{rightmostdesc}{v} - \Call{leftmostdesc}{v} + 1$ to avoid subtraction between indices. Also, the implementation of \textproc{isancestor} needs some consideration. To check if a pair of indices is between two pairs of indices, we need to compare two pairs of indices lexicographically: the pair with the larger chunk index is defined to be larger; if the chunk indices are equal, the pair with the larger chunk-local index is defined to be larger; otherwise, the pairs are equal.

\subsubsection{Design Comparison}

Here, we compare the data-structure designs in Theorems~\ref{thm:TC_index_modified} and \ref{thm:minimal_design}. The advantage of the data structure in Theorem~\ref{thm:minimal_design} is that it is space-efficient compared to the other design. This is especially crucial when we use a Huffman code to encode micro trees and a small value of $B$, as we later discuss in Sec.\,\ref{sec:space_opt_motivation}. 

The advantage of the data structure in Theorem~\ref{thm:TC_index_modified} is that it can easily support some extra BP operations that are difficult with the data structure in Theorem~\ref{thm:minimal_design}. For example, the data structure in Theorem~\ref{thm:TC_index_modified} supports \textproc{fwdsearch} and \textproc{bwdsearch}, the generalized operations of \open{}, \close{}, and $\enclose{}$~\cite{Navarro2016Compact, Navarro2014FullyFunctional}. Also, storing other values allows operations of finding the leftmost maximum of \textproc{excess} and counting the minima of \textproc{excess} within the given range. Although these operations are not required when implementing tree navigational queries in the $\BPb$ representation, they are needed to implement some tree navigational queries in the $\BPo$ representation.

As an example where the additional BP operations are required, let us describe a limited extension of the representation of binary trees to ordinal trees. The $\BPo$ sequences of certain subclasses coincide with the $\BPb$ sequences of compressible subclasses of binary trees. For example, LRM-trees~\cite{Barbay2012LRMTrees} of random permutations correspond to Cartesian trees of random permutations, and ordinal trees with $r$ leaves correspond to Cartesian trees of arrays with $r$ increasing runs. Since we can optimally compress these subclasses of binary trees with hypersuccinct binary trees, we can also optimally compress the corresponding ordinal trees by regarding the $\BPo$ sequence as the $\BPb$ sequence.

Here, we have three options to encode these subclasses of ordinal trees: (a) just applying Theorem~\ref{thm:TC_index_modified} to the ordinal tree, (b) converting the ordinal tree into a binary tree and applying Theorem~\ref{thm:TC_index_modified}, and (c) converting the ordinal tree into a binary tree and applying Theorem~\ref{thm:minimal_design}. 
The space consumption is expected to decrease in the order of (a), (b), and (c). However, (c) does not support tree navigational queries in the $\BPo$ representation. Thus, Theorem~\ref{thm:TC_index_modified} applied to binary trees is expected to be advantageous in this case.

\section{Optimization of Hypersuccinct RMQ}
\label{sec:optimization_RMQ}

\subsection{Space Optimization}

\subsubsection{Motivation}
\label{sec:space_opt_motivation}

Here, we consider the space consumption of the data structures. We first show that there is a difficult trade-off when optimizing space by choosing $B$. We then describe how to tackle the trade-off in the subsequent sections.

We first discuss the implementation of Huffman coding. The implementation uses canonical Huffman codes~\cite{Navarro2016Compact, Schwartz1964Canonical}, which are fast to decode and space-efficient. 
Decoding a canonical Huffman code requires storing a table of the corresponding BP sequences, which consumes $\Order(2^{4B} B)$ bits when encoding micro trees. The reason for taking $B = \lceil (\lg n) / 8 \rceil$ is to suppress the term to $\order(n)$. 

However, $B$ is significantly small when applied to a tree of practical size: if $n = 10^9$, we have $(\lg n) / 8 \sim 3.7$. We can increase $B$ to $(\lg n) / (4 + \epsilon)$ for an arbitrarily small positive number $\epsilon$ to keep the space complexity asymptotically negligible. However, it is still only up to $B = 7$ for $n = 10^9$. Later numerical experiments also confirm that taking a value of $B$ larger than $(\lg n) / (4 + \epsilon)$ leads to non-negligible space consumption. 

Since we consume $\Order(n (\log B) / B)$ bits for storing split ranks and $\Order(n / B)$ bits for the top-tier tree BP, taking a small value of $B$ also leads to a significant consumption of space. When choosing $B$ to optimize space consumption, there is a trade-off between terms $\Order(2^{4B} B)$ and $\Order(n (\log B) / B)$, which are both theoretically negligible but practically significant.

\subsubsection{Encoding Pairs of Split Ranks and Micro Trees}
\label{sec:encoding_pairs_theory}

In this section, we describe a trick to reduce the space consumption of storing the split ranks to address the trade-off: it is effective in practice to encode pairs of split ranks and micro trees with Huffman codes, as presented in Sec.\,\ref{subsec:encoding_pairs_experiment}. This section shows that the asymptotic code length remains the same.

The original algorithm of hypersuccinct binary trees encode micro trees $\mu_1, \dots, \mu_m$ with a Huffman code $H$. The data structure consumes $\sum_{i = 1}^{m} |H(\mu_i)| + \Order(n (\log B) / B)$ bits in total~\cite{Munro2021Hypersuccinct}. Here, we show that the total code length is asymptotically not worsened and remains optimal if we incorporate split ranks into micro trees. 

\begin{theorem}
    Suppose we have a sequence of micro trees $\mu_1, \dots, \mu_m$ and the corresponding split ranks $s_1, \dots, s_m$. Let $H$ be a Huffman code of micro trees in the original scheme and $H'$ be a Huffman code of pairs of micro trees and split ranks. Then, the total code length of $H'$ can be bounded as
    \begin{equation}
        \sum_{i = 1}^{m} |H'((\mu_i, s_i))| \le \sum_{i = 1}^{m} |H(\mu_i)| + \Order\left(\frac{n \log B}{B}\right).
    \end{equation}
\end{theorem}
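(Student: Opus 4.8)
The plan is to compare $H'$ against a suboptimal prefix-free code for the pairs $(\mu_i, s_i)$, exploiting the fact that a Huffman code has the minimum total length among all prefix-free codes. Specifically, I would construct a concatenation code $G$ that encodes the pair $(\mu_i, s_i)$ by first emitting $H(\mu_i)$ and then emitting a fixed-length encoding of $s_i$. This is prefix-free: given a bitstream, we decode $H(\mu_i)$ using the prefix-free property of $H$, which tells us exactly where that codeword ends, and then we read off the next $\lceil \lg(2B) \rceil$ bits (or similar) for $s_i$. Since $H'$ is Huffman-optimal for the pair distribution, $\sum_i |H'((\mu_i, s_i))| \le \sum_i |G((\mu_i, s_i))| = \sum_i |H(\mu_i)| + m \lceil \lg(2B) \rceil$.

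The remaining step is to bound $m \lceil \lg(2B)\rceil$. By Theorem~\ref{th:newTC} (or Prop.\,\ref{prop:originalTC} for the binary case), the number of micro trees is $m = \Order(n/B)$, and every split rank is at most the number of nodes in its micro tree plus one, which is less than $2B$ since each micro tree has fewer than $2B$ nodes. Hence each split rank fits in $\lceil \lg(2B) \rceil = \Order(\log B)$ bits, and $m \lceil \lg(2B) \rceil = \Order((n/B) \log B) = \Order(n (\log B)/B)$, which is exactly the claimed error term. Combining with the inequality above yields the theorem.

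One subtlety worth addressing explicitly is that the fixed-length field for $s_i$ must have a length that the decoder can determine in advance; using the global bound $\lceil \lg(2B) \rceil$ (a function of $B$ and hence of $n$ only) rather than a per-micro-tree bound keeps the code genuinely prefix-free without needing to know the micro-tree size before decoding the length field. An alternative, which I would mention only if needed, is to interleave: decode $H(\mu_i)$ first, which reveals $\mu_i$ and therefore its node count $k_i$, and then read $\lceil \lg(k_i + 1)\rceil$ bits for $s_i$; this is also prefix-free and gives a slightly tighter bound, but the looser version already suffices for the asymptotic claim.

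I do not anticipate a genuine obstacle here — the argument is a standard ``Huffman beats any explicit prefix-free code'' comparison. The only thing to be careful about is making the concatenated code $G$ manifestly decodable (prefix-free) so that the optimality of $H'$ can legitimately be invoked, and citing the right earlier results ($m = \Order(n/B)$ and micro-tree size $< 2B$) to collapse the overhead into $\Order(n(\log B)/B)$. No new ideas beyond what is already in the excerpt are required.
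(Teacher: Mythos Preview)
Your proposal is correct and follows essentially the same approach as the paper: construct an explicit prefix-free code for the pairs by concatenating $H(\mu_i)$ with an encoding of $s_i$, then invoke Huffman optimality to bound $H'$. The only cosmetic difference is that the paper encodes $s_i$ with the Elias gamma code $\gamma(s_i)$ (using $2\lfloor \lg s_i \rfloor + 1$ bits) rather than your fixed-length $\lceil \lg(2B) \rceil$-bit field, but both choices are prefix-free and yield the same $\Order(n(\log B)/B)$ overhead.
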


\begin{proof}
    To prove the theorem, it is sufficient to construct a prefix-free code $D'$ that encodes the pairs of micro trees and split ranks and achieves the same upper bound in the theorem. Then, the optimality of Huffman codes finishes the proof.
    
    To construct a prefix-free code $D'$, we utilize Elias gamma code~\cite{Elias1975Universal}. Elias gamma code $\gamma$ is a prefix-free code that encodes a positive integer $n$ using $2 \lfloor \lg n \rfloor + 1$ bits. The code $\gamma(n)$ is defined as a binary sequence obtained by prepending to the binary representation of $n$ a sequence of zeros that is one shorter than the length of the representation.

    Then, we define $D'$ as $D'((\mu_i, s_i)) = H(\mu_i) \cdot \gamma(s_i)$. Since both $H$ and $\gamma$ are prefix-free, the code $D'$ is also prefix-free. Thus, by the optimality of Huffman codes, we have $\sum_{i = 1}^{m} |H'((\mu_i, s_i))| \le \sum_{i = 1}^{m} |D'((\mu_i, s_i))|$. Also, by summing the equation $|D'((\mu_i, s_i))| = |H(\mu_i)| + 2 \lfloor \lg s_i \rfloor + 1$, it follows that $\sum_{i = 1}^{m} |D'((\mu_i, s_i))| \le \sum_{i = 1}^{m} |H(\mu_i)| + \Order(n (\log B) / B)$. 
\end{proof}

Although the total code length asymptotically remains the same, it significantly reduces the total code length in practice, as shown in Sec.\,\ref{subsec:encoding_pairs_experiment}. 

\subsubsection{Depth-First Arithmetic Code}
\label{subsubsec:depth_first_arithmetic_code}

Although the trick described in the previous section significantly reduces the space consumption of storing the split ranks and mitigates the trade-off, the BP representation of the top-tier tree still consumes $\Order(n / B)$ bits, triggering another trade-off between terms $\Order(2^{4B} B)$ and $\Order(n / B)$.

To eliminate the trade-offs between the space-complexity terms, we can employ a \textit{depth-first arithmetic code}~\cite{Munro2021Hypersuccinct} instead of a Huffman code.
As discussed in Sec.\,\ref{subsec:hypersuccinct_trees}, the code is a tool used in Munro et al.~to show that the Huffman code achieves optimal space consumption. Although the depth-first arithmetic code takes a longer time to decode and lacks universality for tree sources, the depth-first arithmetic code does not need a decoding table and achieves optimal space consumption for random permutations. Thus, by employing the depth-first arithmetic code, we can take an arbitrarily large value of $B$ to reduce space consumption. The drawback of taking a large value of $B$ is that the decoding time becomes long: the average code length of each micro tree is $\Order(B)$ bits and decoding takes time linear in code length. However, we later experimentally show that it remains efficient if we take a large value of $B$. 

\subsection{Time Optimization}

\subsubsection{Branch Pruning and Breadth-First Arithmetic Code}

Sec.\,\ref{subsubsec:depth_first_arithmetic_code} utilizes a depth-first arithmetic code to overcome the trade-offs between the $\order(n)$-bit indexes. Here, we show a branch pruning for \ac{LCA} computation within a micro tree and propose replacing the \ac{DFS} order with the \ac{BFS} order for better pruning.

The decoding procedure of the depth-first arithmetic code simulates \ac{DFS} and obtains the left subtree sizes in the \ac{DFS} order. Thus, when computing the \ac{LCA} on a micro tree, we can check if the current node is the \ac{LCA} while decoding the code; if one of the two nodes appears in the left subtree and the other in the right subtree, the current node is the LCA. This allows pruning the search and terminating the decoding procedure midway.

As a trick for optimizing \ac{RMQ}, we can replace the \ac{DFS} with \ac{BFS} in the design of the depth-first arithmetic code: we call the code a \textit{breadth-first arithmetic code}. Since \ac{BFS} is expected to visit the LCA earlier than \ac{DFS}, decoding is also likely to be terminated sooner by using the \ac{BFS} order. The later numerical experiment in Sec.\,\ref{subsec:depth_first_breadth_first_comparison} matches this expectation. 

\subsubsection{Chunk-Local Ranks}

We discuss another optimization of the hypersuccinct binary trees when the application is limited to RMQ: all the queries required are \textproc{inorder}, \textproc{inorderselect}, and \textproc{lca}. These queries do not refer to the indices of the opening parentheses. Thus, when storing a pair of indices to refer to a closing parenthesis, we can store the chunk-local rank of the closing parenthesis instead of the chunk-local index as the second element. 

The advantage of storing the rank of the closing parenthesis is that we do not need to read the micro-tree BP when executing \textproc{inorder} and \textproc{inorderselect}, which saves decoding the code of the micro tree. When using the chunk-local index, we first scan chunk-wise to find the desired chunk and then read the chunk to find the chunk-local index from the rank of the closing parenthesis within it, as in the $\selectc$ operation; the second part can be skipped if we directly use the closing parenthesis rank.

We note that this optimization is very well-suited for the \textproc{lca} operation because the \textproc{lca} operation requires decoding the micro-tree BP at most once. In particular, among the steps described in Prop.\,\ref{prop:lca_in_minimal}, Step 3 does not decode the micro tree at all since the LCA lies in a micro tree consisting of a single node. Also, when assuming the random permutations and the random inputs of the RMQ, Step 3 is most likely to return the \ac{LCA}. 

Therefore, decoding of the micro-tree BP is performed at most once for each \ac{RMQ} and is mostly skipped for random permutations and random inputs. This is tremendously beneficial when using arithmetic codes and taking a large value of $B$.

\section{Performance Evaluation}
\label{sec:performance_evaluation}

\subsection{Implementation Details}

In this section, we give details of implementation. We implemented the mechanism discussed in Theorem~\ref{thm:minimal_design} and Remark~\ref{rem:actual_impl}. 
For convenience, we restate the implemented data structure in a self-contained manner.
\begin{itemize}
    \item A variable-length cell array of micro-tree codes $D(\mu_1), \dots, D(\mu_m)$ supporting random access.
    \item A data structure that represents the BP sequence $S$ of the top-tier tree. It handles basic operations on the BP sequence $S$ in constant time and spends $\Order(n / B)$ bits~\cite{Navarro2014FullyFunctional}.
    \item An integer array whose $i$-th element contains the split rank of $\mu_i$.
    \item An integer array that stores sparse sampling of the number of closing chunks.
\end{itemize}

Thus, our implementation of hypersuccinct binary trees supports queries in Alg.\,\ref{alg:navigation_on_bp} except for \textproc{preorder} and \textproc{preorderselect}. Also, we prepared \ac{RMQ} data structures utilizing hypersuccinct binary trees and the techniques discussed in Sec.\,\ref{sec:optimization_RMQ}.
We used a simple variant of the range min tree with $\Order(\log n)$ query time. While we employ the design above when we use a depth-first arithmetic code and a breadth-first arithmetic code, we need to modify the design a little when we use a Huffman code: we also need a table for decoding, and we can incorporate an array of split ranks into Huffman codes as discussed in Sec.\,\ref{sec:encoding_pairs_theory}.

We implemented the data structures in C++17, compiled it with GCC 11.4.0, and optimized it with the \texttt{-O3} option. We evaluated the program on a laptop equipped with an 11th Gen Intel(R) Core(TM) i5-1135G7 @ \SI{2.40}{\giga\hertz} with \SI{16}{\giga\byte} memory. 

We explain the parameters which are needed to discuss the later results.
When storing Huffman codes in variable-length cells, we need to store the length of each cell to support random access. However, it consumes $\Order(n (\log B) / B)$ bits, which is significant as the value of $B$ is small. Therefore, we store a constant number of codes together in a single variable-length cell to reduce space consumption. Each micro tree can be retrieved by decoding the previous codes in the same cell and then the code. The number of codes stored in a single cell is basically chosen to be $16$ in implementation; we change the value when evaluating how much space consumption can be reduced. On the other hand, we store every code length when using arithmetic coding.

Also, we need to choose the constant regarding the sparse sampling of closing parentheses: we need to decide the number of chunks to be bundled into a single unit, on which prefix sums are calculated and stored. The constant is basically taken to be $16$; we change the value when evaluating how much space consumption can be reduced.

We then discuss some parameters that are fixed throughout the performance evaluation. The implementation uses the range min-max tree~\cite{Navarro2014FullyFunctional} to represent the top-tier tree. The range min-max tree regards a constant number of bits as a leaf. To reduce the space consumption, we set the constant to $1024$, which is slightly larger than the compromise value between time and space proposed by Ferrada and Navarro~\cite{Ferrada2017ImprovedRMQ}. Optimizing this constant to fit each application is left as future work.

Implementation of variable-length cell arrays uses three-level decomposition. In two-level decomposition, we store element lengths in blocks and then create superblocks that store the prefix sums sampled for a constant number of blocks. In three-level decomposition, superblocks are again decomposed into two blocks as in the dense-pointer technique~\cite{Navarro2016Compact}. These constants are both chosen to be $32$ in implementation. It took less than \SI{0.1}{\micro\second} on average to retrieve a prefix sum, which is negligible in time consumption.

\subsection{Encoding Pairs}
\label{subsec:encoding_pairs_experiment}

In this section, we briefly describe the effect of encoding pairs of micro trees and split ranks with a Huffman code. We set the length of the permutation to $10^9$ and the tree-covering parameter $B$ to $8$. This choice of $B$ minimized the space consumption whether encoding pairs or only micro trees. 

When encoding only the micro trees and separately storing the split ranks as a sequence, the total space consumption of the RMQ data structure was $3.031 n$ bits; the Huffman code sequence, the decoding table, and the sequence of split ranks consumed $1.646 n$ bits, $0.010 n$ bits, and $0.724 n$ bits, respectively. Thus, storage of the micro trees and the split ranks spent $2.380 n$ bits in total. 

When encoding the pairs of micro trees and split ranks, the total space consumption of the RMQ data structure decreased to $2.471 n$ bits; the Huffman code sequence and the decoding table consumed $1.777 n$ bits and $0.043 n$ bits, respectively. Thus, storage of the micro trees and the split ranks spent $1.820 n$ bits in total.
Thus, encoding pairs of micro trees and split ranks reduced the space consumption by more than $0.5 n$ bits in this case.

\subsection{Random RMQ Using Huffman Codes}

\subsubsection{Tree-Covering Parameter}

In this section, we describe the optimization of RMQ using Huffman codes for random permutations.
We first discuss how the value of the parameter $B$ determines the space consumption and the query time when micro trees and split ranks are encoded with Huffman coding. We created the RMQ data structure on a random permutation of length $10^9$ and prepared $10^6$ random queries.

Fig.\,\ref{fig:optimize_B_Huffman} shows the space consumption and the average query times when changing the value of $B$ from $2$ to $11$. The space consumption of the data structure was minimized when $B = 8$, which is approximately $(\lg n) / 4 \sim 7.47$, consuming $2.471 n$ bits; this was considerably larger than the theoretical asymptotic space usage of $1.736n$ bits and was also larger than the worse-case space consumption of $2n$ bits. Considering that the fast and succinct RMQ implementation~\cite{Baumstark2017Practical, Ferrada2017ImprovedRMQ} spends $2.1n$ bits and processes a query in microseconds, this implementation using a Huffman code consumed much more space and processed a query approximately ten times slower.

\begin{figure}[hbtp]
    \centering
    \includegraphics[width=\columnwidth]{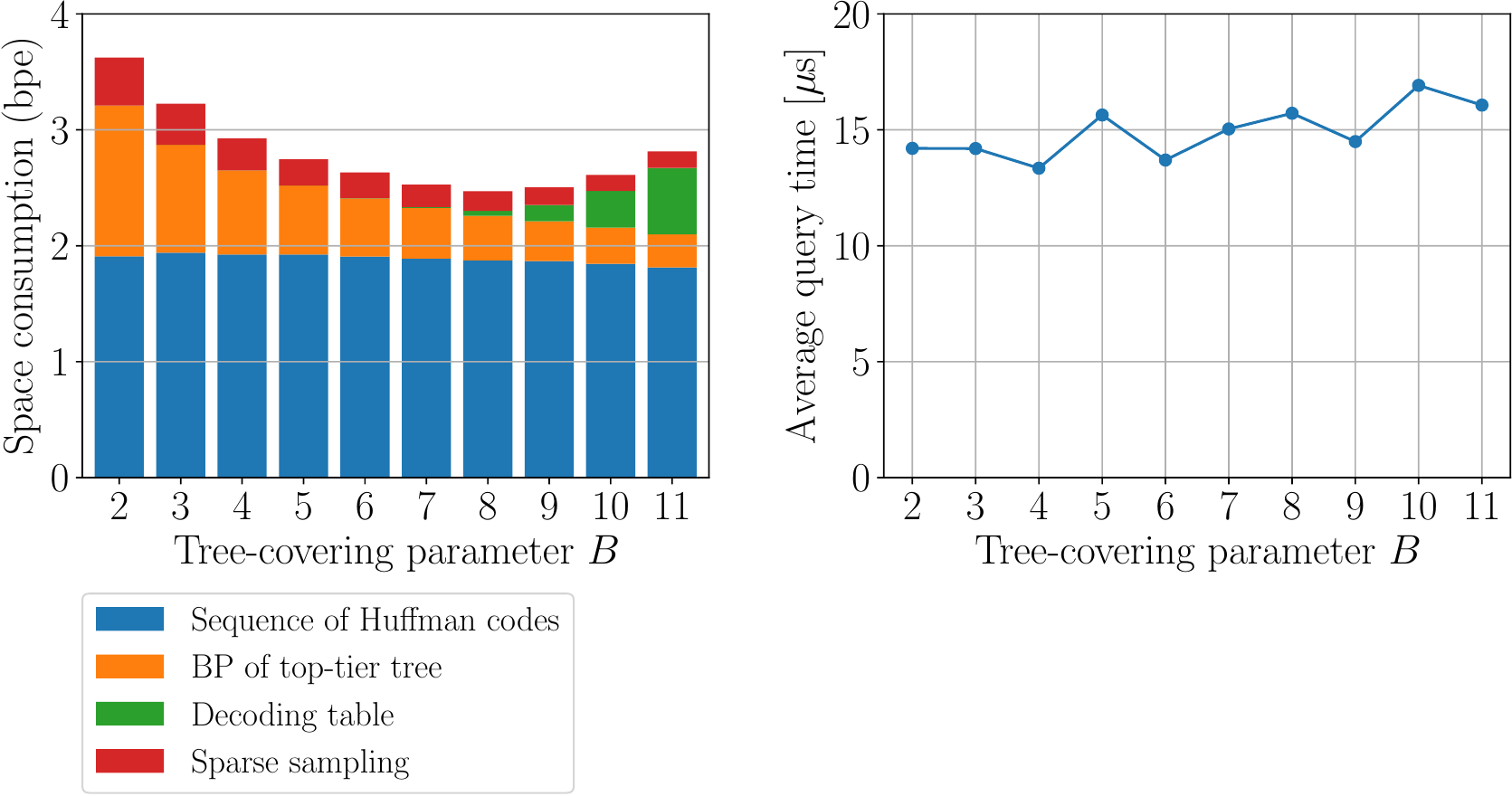}
    \caption{Optimizing the tree-covering parameter $B$ to minimize the space consumption of the RMQ data structure when using a Huffman code to encode micro trees. The array is a random permutation of length $10^9$. Space consumption is expressed as bits per element (bpe).}
    \label{fig:optimize_B_Huffman}
\end{figure}

We note that the similar choice of $B$ was effective when we changed the size $n$ from $10^4$ to $10^9$ by a factor of $10$. The nearest integer to $(\lg n) / 4$ experimentally minimized the space consumption for all the values of $n$ from $10^4$ to $10^8$. However, when $n = 10^9$, the space consumption was minimized by choosing $8$, not the nearest integer to $(\lg n) / 4 \sim 7.47$. 

Although not necessarily true in theory, the results show that the space consumption of the decoding table is negligible even when $B$ is slightly larger than $(\lg n) / 4$. There are two possible reasons for this. First, some BP sequences may not appear in the decoding table, thereby reducing the space consumption below the worst case. Also, due to the small value of $B$, the actual number of micro trees may be small compared to $2^{4B}$, the asymptotic evaluation.

\subsubsection{Sparse Sampling and Code Allocation}

We then consider optimizing the sparse-sampling parameter and the number of Huffman codes allocated in a single cell. We used the same constants for the sparse-sampling parameter and the number of Huffman codes in a cell and varied them from $1$ to $256$ by a factor of $2$. We created the RMQ data structure on a random permutation of length $10^9$ and prepared $10^6$ random queries, with the tree-covering parameter $B$ set to $8$.

Fig.\,\ref{fig:optimize_W} shows the space consumption and the average query time when varying the parameter of sparse sampling and the number of Huffman codes in a cell. Each point is annotated with the parameter. We also show the minimum required space consumption as a vertical dotted line: we only consider the total length of Huffman codes, the top-tier tree BP, and the micro-tree BP sequences, which spend $2.182 n$ bits in total. Choosing $16$ as the parameter seems a good balance between space consumption and query time. We also observe that it is difficult to keep space usage below $2n$ bits using Huffman codes.

\begin{figure}[hbtp]
    \centering
    \includegraphics[width=0.5\columnwidth]{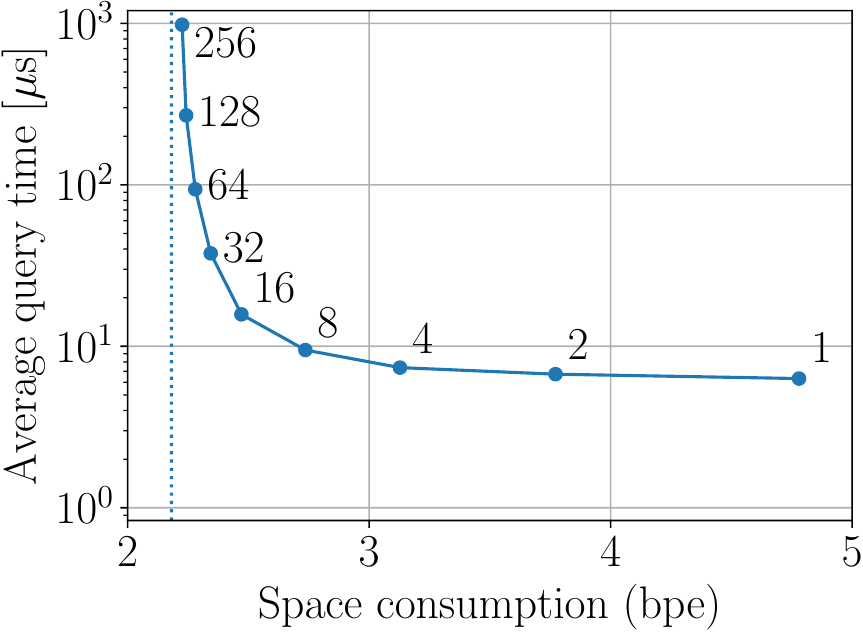}
    \caption{A trade-off between time and space when using a Huffman code. The number of Huffman codes in a single variable-length cell and the number of chunks bundled for sparse sampling are set to the same constant, which is changed from $2^0 = 1$ to $2^8 = 256$ and shown next to each point. The vertical dotted line represents the minimum required space when ignoring the time efficiency.}
    \label{fig:optimize_W}
\end{figure}

\subsection{Random RMQ Using Arithmetic Coding}

\subsubsection{Depth-First and Breadth-First Arithmetic Code}
\label{subsec:depth_first_breadth_first_comparison}

We compare a depth-first arithmetic code and a breadth-first arithmetic code. We also check if pruning for computing \ac{LCA}s is effective. We fixed the array length to $10^9$ and the tree-covering parameter $B$ to $512$.

We first compared the average query time for $10^6$ random queries. For both codes, it took \SI{4.4}{\micro\second} on average without pruning and \SI{4.3}{\micro\second} with pruning. This difference is slight, and we consider this is because the query width is almost always large and the difference in the decoding procedure scarcely affects query time.

Thus, we changed the query width from $2^1$ to $2^{29}$ and generated $10^5$ random queries for each query width. Fig.\,\ref{fig:depth_breadth} shows the average query time for each width. For all codes, smaller query widths result in longer query times, because decoding micro trees is often required. When using branch pruning, another possible reason for the long query times with small query widths is that the decoding process of a micro-tree code is less likely to terminate early in the process. Also for both codes, the branch pruning improves the query time where the query width is small. Without pruning, the breadth-first arithmetic code is slower than the depth-first arithmetic code, probably because converting sequences of left subtree sizes to BP sequences is slower with breadth-first order. However, when pruning is added, the breadth-first arithmetic code becomes faster than the depth-first arithmetic code for queries of small widths, as expected in the algorithm design. We conclude that the breadth-first arithmetic code with branch pruning is the best and use it in the subsequent performance evaluation of \ac{RMQ} data structures.

\begin{figure}[hbtp]
    \centering
    \includegraphics[width=0.6\columnwidth]{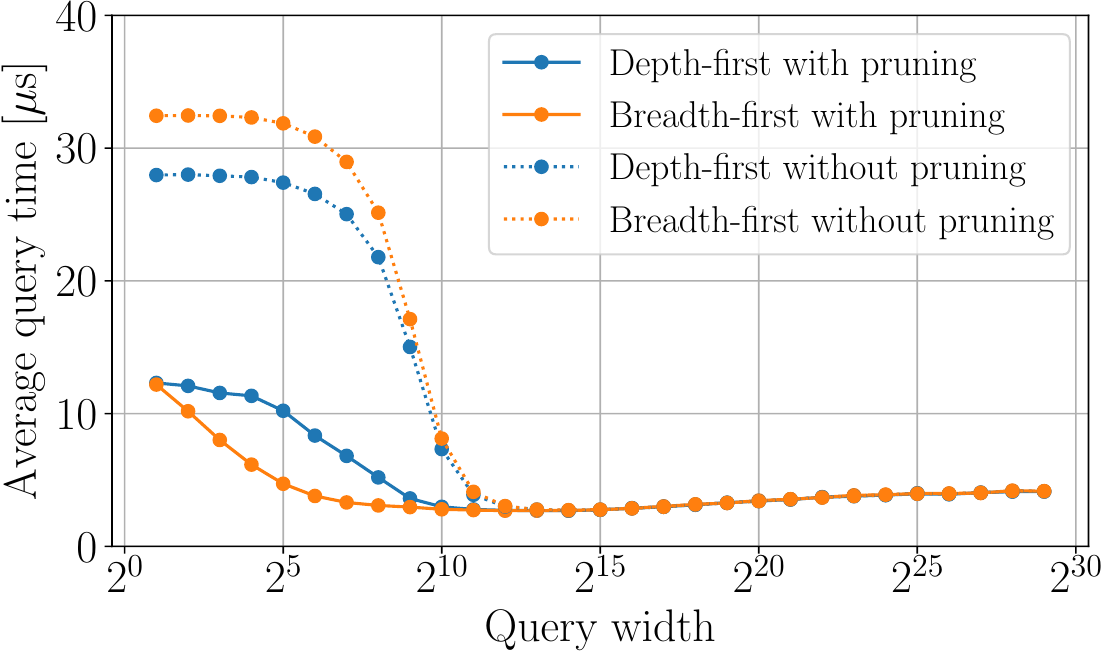}
    \caption{Average query times of several codes when varying query width and the code of micro trees. The array is a random permutation of length $10^9$. The tree-covering parameter $B$ is fixed to $512$.}
    \label{fig:depth_breadth}
\end{figure}

\subsubsection{Tree-Covering Parameter}

We consider optimizing the tree-covering parameter $B$ when using a breadth-first arithmetic code. The value of $B$ is taken as powers of two from $2$ to $2^{20} = 1048576$. The reason for taking $B$ as powers of two is that the number of bits needed to represent each micro-tree size and split rank is not redundant; when taking the logarithm with base two for the number of possible micro-tree sizes and split ranks, there is little gap between the logarithm and its ceiling.

Fig.\,\ref{fig:optimize_B_arithmetic} shows the space consumption and the average query times when using breadth-first arithmetic coding. Most importantly, the space consumption drops to below $2n$ bits when $B$ is $2^8 = 256$ or more, while the average query time is several times slower than the previous results of about a microsecond~\cite{Baumstark2017Practical, Ferrada2017ImprovedRMQ}. For example, if we choose $B$ as $2^8$, $2^9$, and $2^{10}$, it consumes $1.891 n$ bits, $1.822 n$ bits, and $1.784 n$ bits, respectively. We consider the reason for the decrease in average query time as $B$ increases to be that the queries on the top-tier tree probably account for most of the average query time; as $B$ increases, the size of the top-tier tree decreases, which speeds up the queries on the top-tier tree.

\begin{figure}[hbtp]
    \centering
    \includegraphics[width=0.95\columnwidth]{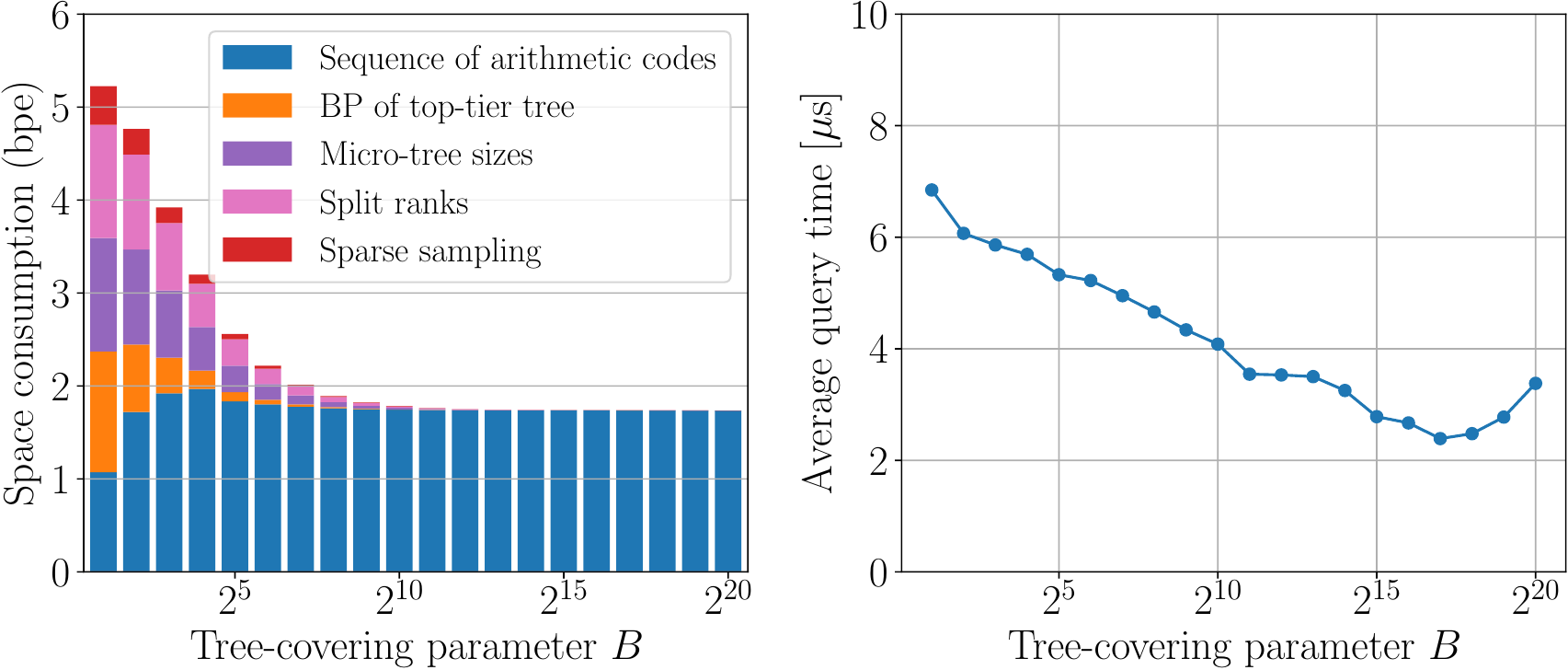}
    \caption{Optimizing the tree-covering parameter $B$ to minimize the space consumption of the RMQ data structure when using a breadth-first arithmetic code and branch pruning. The array is a random permutation of length $10^9$. Space consumption is expressed as bits per element (bpe).}
    \label{fig:optimize_B_arithmetic}
\end{figure}

Although both the space consumption and the average query time decrease as $B$ increases, interpretation of these results requires caution since random queries typically have large widths. In the steps of \ac{LCA} described in Prop.\,\ref{prop:lca_in_minimal}, the \ac{LCA} is often computed in Step 3, in which a micro-tree code is not decoded at all. Although it does not decode a micro-tree code in most cases and runs fast with random queries, we should also consider the query time when the code needs to be decoded. 

To estimate the worst-case query time, we measured the average time required to decode a random breadth-first arithmetic code: for $k = 1, \dots, 20$, we prepared $2^{25-k}$ random permutations of length $2^k$, created the Cartesian trees of the permutations, and measured the average time required to decode breadth-first arithmetic codes encoding the Cartesian trees. Fig.\,\ref{fig:arith_decode_time} shows the average time required to decode a micro tree of $n$ nodes. It took about \SI{0.05}{\micro\second} per node: a micro tree with $256$ nodes can be decoded in about \SI{10}{\micro\second} and $2048$ nodes in about \SI{100}{\micro\second}. Thus, if we choose $B = 128$, the maximum number of micro-tree nodes is about $256$, by which we can estimate the worst-case query time to be \SI{10}{\micro\second}; if $B = 1024$, the worst-case query time can be estimated as \SI{100}{\micro\second}. 

\begin{figure}[hbtp]
    \centering
    \includegraphics[width=0.5\columnwidth]{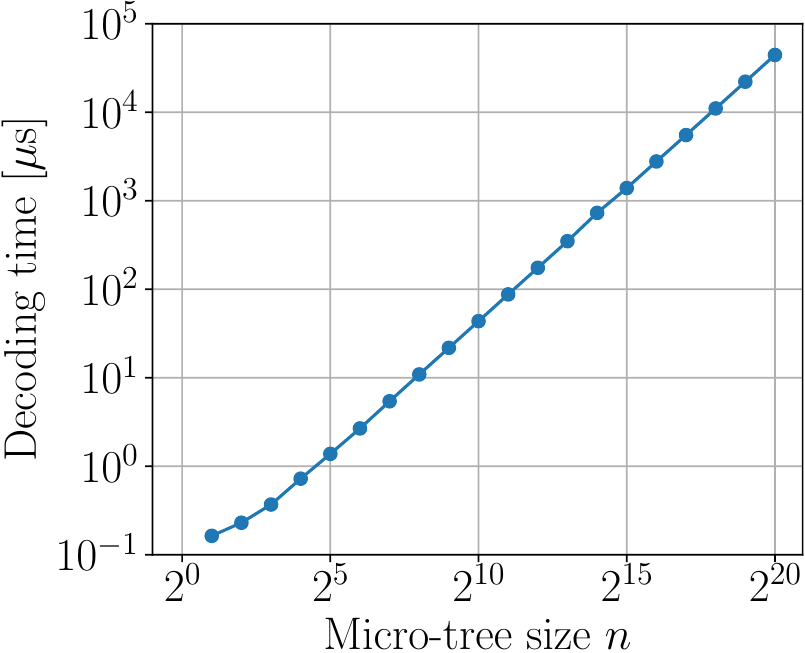}
    \caption{Average time required to decode a breadth-first arithmetic code of a micro tree with $n$ nodes.}
    \label{fig:arith_decode_time}
\end{figure}

The appropriate choice of $B$ depends on various aspects, such as the time allowed as the worst query time, the distribution of query widths, and the desired space consumption. We feel choosing $B$ between $2^6 = 64$ and $2^{10} = 1024$ is a good balance. We later discuss the average query time for each query width with these values of $B$.

\subsection{Random RMQ Comparison}

Here, we compared our implementations with previous implementations by Ferrada and Navarro~\cite{Ferrada2017ImprovedRMQ} and Baumstark et al.~\cite{Baumstark2017Practical} by benchmarking on random permutations. We varied the length of the permutation from $10^4$ to $10^9$ by a factor of $10$ and measured the space consumption and the average query time of the data structures. We prepared the RMQ data structure using a Huffman code with $B$ set to the integer that minimizes the space consumption, i.e., the nearest integer to $(\lg n) / 4$ for $n = 10^4, \dots, 10^8$ and $8$ for $n = 10^9$. We also created the RMQ data structures using a breadth-first arithmetic code with $B$ varying from $64$ to $1024$ by a factor of $2$. Fig.\,\ref{fig:change_n} shows the comparison of the space consumption and the average query time. The RMQ data structure using a breadth-first arithmetic code uses a nearly constant space per element, regardless of the number of elements. It is a natural consequence of the data-structure design, as the space complexity is almost linear to $n$ when fixing $B$. The data structure using a Huffman code is slow, and we consider that this is due to allocating $16$ codes to a single variable cell, which requires decoding several codes in the same cell to access a code.

\begin{figure}[tb]
    \centering
    \includegraphics[width=\columnwidth]{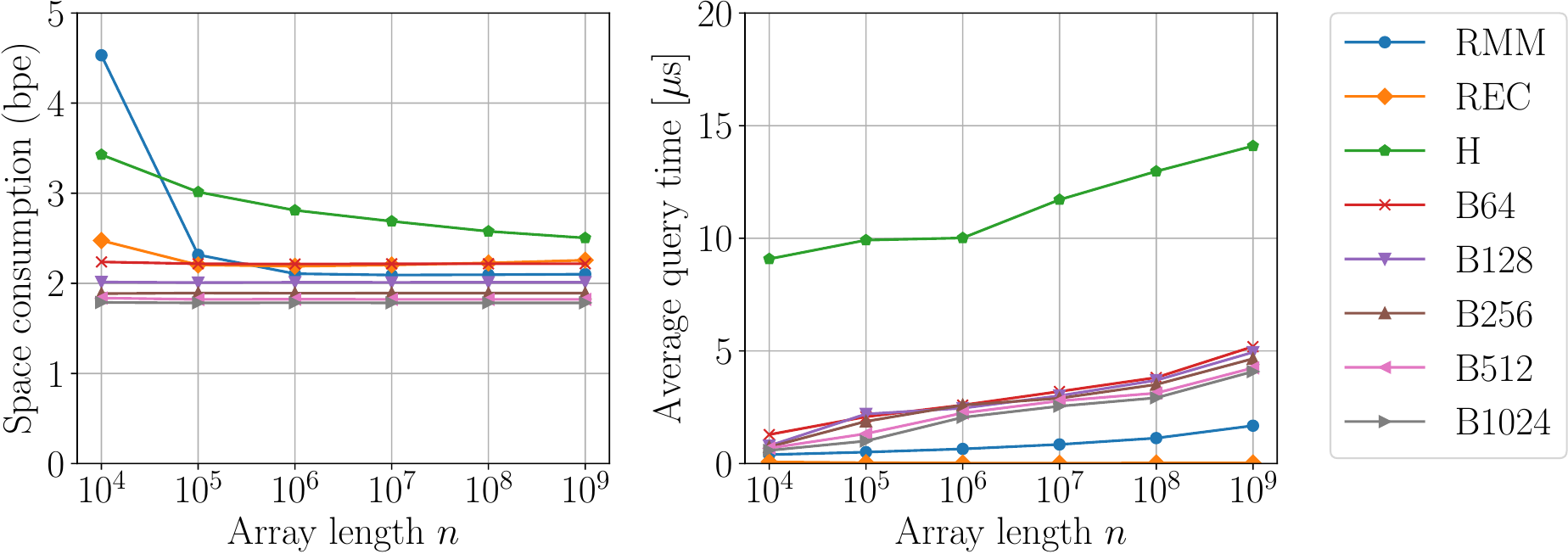}
    \caption{Comparison of the space consumption and the average query time when varying the array length $n$ from $10^4$ to $10^9$. In the legend, RMM refers to the previous implementation by Ferrada and Navarro~\cite{Ferrada2017ImprovedRMQ} and REC refers to the implementation by Baumstark et al.~\cite{Baumstark2017Practical}. The other labels indicate our implementations: H means that micro trees are encoded with Huffman codes, and B followed by an integer means that micro trees are encoded with a breadth-first arithmetic code with tree-covering parameter $B$ set to the shown integer.}
    \label{fig:change_n}
\end{figure}

We also fixed the length of the array as $n = 10^9$, varied the query width from $2^1$ to $2^{29}$ by a factor of $2$, and measured the query time for each query width. We prepared $10^5$ random queries for each width. Fig.\,\ref{fig:hist_time} shows the average query time of each query width. We observe that the graph shapes of breadth-first arithmetic codes are similar even if we vary the tree-covering parameter $B$. Also, we see that breadth-first arithmetic codes with $B \le 512$ work faster than Huffman codes while consuming less space. Although breadth-first arithmetic codes have longer query times when query widths are small, if the original array is available, we expect this disadvantage to be compensated by sequentially scanning the original array.

\begin{figure}[bt]
    \centering
    \includegraphics[width=0.7\columnwidth]{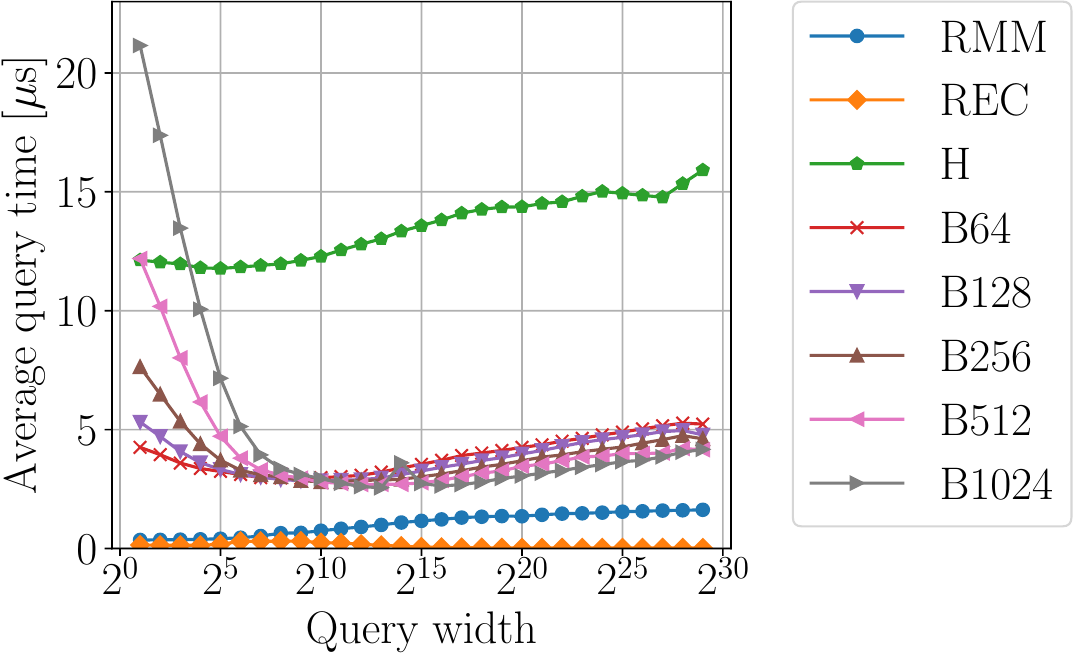}
    \caption{Comparison of the average query time when varying query width. The length of the permutation is set to $10^9$ and each point is an average of $10^5$ queries. The meaning of each label in the legend is described in the caption of Fig.\,\ref{fig:change_n}.}
    \label{fig:hist_time}
\end{figure}

\subsection{RMQ with Runs}
In this section, we evaluate the performance of the \ac{RMQ} data structure when applied to a permutation with increasing runs. Hypersuccinct binary trees encode the Cartesian tree of a permutation of length $n$ with $r$ increasing runs using $2 \lg \binom{n}{r} + \order(n)$ bits~\cite{Munro2021Hypersuccinct}.
We fixed the array length $n$ to $10^8$ and varied $B$ from $5$ to $100$ by steps of $5$. We generated an array with approximately $r'$ runs as follows:
\begin{enumerate}
    \item Shuffle the permutation.
    \item Split the permutation into $r'$ parts: different $r'$ splitting points are uniformly sampled from $n - 1$ possible splitting positions.
    \item Sort each part and concatenate them.
\end{enumerate}
The number of runs in the obtained permutation $r$ is possibly fewer than $r'$, but it works well when $r'$ is small.

\begin{figure}[tb]
    \centering
    \includegraphics[width=0.85\columnwidth]{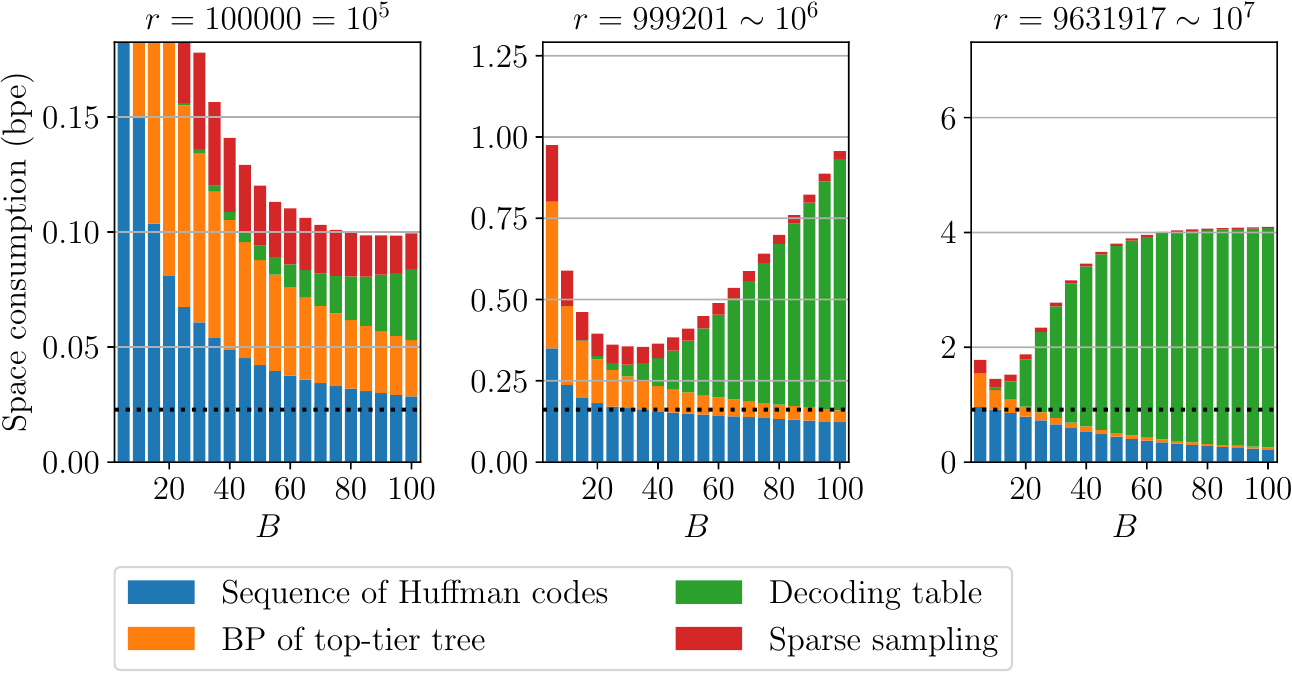}
    \caption{Space consumption breakdown when varying the number of increasing runs $r$ while fixing the array length to $n = 10^8$. Space consumption is expressed as bits per element (bpe). The dotted lines indicate $2 \lg \binom{n}{r} / n$, which is the asymptotical upper bound of the space consumption of the Huffman code sequence.} 
    \label{fig:rmq_runs}
\end{figure}

Setting $r' = 10^5$, $10^6$, and $10^7$, the generated permutation had $100000$, $999201$, and $9631917$ runs, respectively. We created the \ac{RMQ} data structures for these permutations and evaluated the memory performance. Fig.\,\ref{fig:rmq_runs} shows the results. The right plot of the figure shows that when $r \sim 0.1n$, i.e., when the average length of the run is about 10, the space consumption can be less than $2n$ bits: choosing $B = 10$ results in a space consumption of $1.449n$ bits. When $r \sim 0.01 n$, the space consumption takes its minimum value of $0.354n$ bits by choosing $B = 35$. When $r = 0.001n$, the space consumption takes its minimum value of $0.098n$ bits at $B = 95$. Although these values are a few times larger than $2 \lg \binom{n}{r}$, they are significantly smaller than $2n$. As for time efficiency, we prepared $10^6$ random queries and found that the average query time is \SI{14}{\micro\second}.

\subsection{LCP}
As a practical setting, we consider the numerical experiment by Baumstark et al.~\cite{Baumstark2017Practical}: \ac{DFS} on suffix trees by using RMQ data structures over the \ac{LCP} array of real-world texts retrieved from \textit{Pizza\&Chilli} corpus~\cite{Ferragina2009Compressed}.

The results are shown in Table~\ref{tab:suffix_tree_traversal}. Although there is no theoretical guarantee, our average-case optimal \ac{RMQ} data structures tailored to random permutations spend less than $2n$ bits for the DNA text. On the other hand, our RMQ data structures consume more space for other texts. As for query time, our RMQ implementations are considerably slower than previous implementations, probably due to the large number of queries with small query widths during the traversal.

\begin{table}[tb]
    \centering
    \caption{Average space consumption and query time of RMQ data structures over the LCP arrays of texts \textit{Pizza\&Chilli} corpus~\cite{Ferragina2009Compressed}. It simulates \ac{DFS} on the suffix trees by \ac{RMQ}s. 
    The meaning of each label in the leftmost column is described in the caption of Fig.\,\ref{fig:change_n}.}
    \label{tab:suffix_tree_traversal}
    \begin{tabularx}{\textwidth}{l*{8}{>{\raggedleft\arraybackslash}X}} \toprule
        RMQs & \multicolumn{4}{c}{Average space consumption (bpe)} & \multicolumn{4}{c}{Average query time [\si{\nano\second}]} \\ \cmidrule(rl){2-5} \cmidrule(rl){6-9}
	       & dblp  & dna   & english & sources & dblp  & dna   & english & sources \\ \midrule
	RMM   & 2.102 & 2.117 & 2.113   & 2.168   & 202   & 200   & 203     & 215     \\
	REC   & 2.229 & 2.230 & 2.231   & 2.242   & 61    & 61    & 63      & 68      \\
	H     & 2.402 & 2.428 & 2.490   & 2.574   & 10051 & 10047 & 10081   & 10388   \\
	B64   & 2.517 & 2.251 & 2.462   & 2.496   & 2324  & 2323  & 2260    & 2344    \\
	B128  & 2.336 & 2.051 & 2.300   & 2.340   & 2657  & 2690  & 2561    & 2640    \\
	B256  & 2.232 & 1.936 & 2.208   & 2.244   & 3301  & 3320  & 3155    & 3346    \\
	B512  & 2.174 & 1.872 & 2.161   & 2.192   & 4657  & 4642  & 4390    & 4676    \\
	B1024 & 2.143 & 1.835 & 2.136   & 2.169   & 7375  & 7397  & 6948    & 7443    \\
        \bottomrule
    \end{tabularx}
\end{table}

\subsection{Tree Navigational Queries}
\label{sec:tree_navigational_queries_result}

In this section, we evaluate the query times of the hypersuccinct binary trees. We implemented the mechanism discussed in Theorem~\ref{thm:minimal_design} and Remark~\ref{rem:actual_impl}, which supports functions in Alg.\,\ref{alg:navigation_on_bp} except for the first two functions \textproc{preorder} and \textproc{preorderselect}. 

To avoid decoding the same micro tree multiple times, a node is designed to eagerly evaluate the micro tree to which it belongs, i.e., it decodes and retains the micro tree when it is created as an instance. Thus, the branch pruning is not employed when computing LCAs. We encode micro trees with a Huffman code and a depth-first arithmetic code; a breadth-first arithmetic code is not included because decoding the code is slower than decoding a depth-first arithmetic code, as already shown in Fig.\,\ref{fig:depth_breadth}.

We prepared a random permutation of length $10^9$ and created the Cartesian tree of the permutation. We also generated $10^6$ random inputs for each query. Table~\ref{tab:navigation_times} summarizes the average query time with different codes for micro trees. Most of the operations finished in microseconds. The reason the \textproc{inorderselect} function was slow compared to the other operations is that the selected node eagerly decodes a micro tree to which it belongs. When using Huffman codes, \textproc{inorder} was a little slow because of slow random access to a micro tree by allocating several codes to a single variable-length cell. As discussed in Sec.\,\ref{subsubsec:minimal_design}, the \textproc{subtreesize} function internally calls \textproc{inorder} twice, thus it was also slow when using a Huffman code.

\begin{table}[tb]
    \centering
    \caption{Average query time with different codes for encoding micro trees. In the table, H indicates a Huffman code, and D followed by an integer indicates a depth-first arithmetic code with tree-covering parameter $B$ set to the shown integer.}
    \label{tab:navigation_times}
    \begin{tabularx}{\textwidth}{l*{6}{>{\raggedleft\arraybackslash}X}} \toprule
        Navigational operations & \multicolumn{6}{c}{Average query time with varying micro-tree codes [\si{\nano\second}]} \\ \cmidrule(rl){2-7}
        & H & D64 & D128 & D256 & D512 & D1024 \\ \midrule
	\textproc{inorder}             & 3442 & 676  & 625  & 617   & 583   & 560   \\
	\textproc{inorderselect}       & 5525 & 4780 & 8014 & 14336 & 26354 & 50607 \\
	\textproc{root}                & 852  & 594  & 612  & 576   & 481   & 529   \\
	\textproc{parent}              & 334  & 96   & 96   & 110   & 131   & 154   \\
	\textproc{leftchild}           & 174  & 86   & 90   & 105   & 128   & 157   \\
	\textproc{rightchild}          & 372  & 84   & 78   & 90    & 102   & 114   \\
	\textproc{isleaf}              & 525  & 117  & 107  & 112   & 124   & 124   \\
	\textproc{childlabel}          & 314  & 72   & 69   & 80    & 97    & 114   \\
	\textproc{leftmostdescendant}  & 277  & 104  & 103  & 115   & 134   & 153   \\
	\textproc{rightmostdescendant} & 614  & 324  & 344  & 372   & 411   & 445   \\
	\textproc{subtreesize}         & 7028 & 1547 & 1518 & 1554  & 1798  & 1627  \\
	\textproc{isancestor}          & 579  & 275  & 284  & 308   & 341   & 378   \\
	\textproc{lca}                 & 3686 & 2214 & 2318 & 2016  & 1968  & 1995  \\
        \bottomrule
    \end{tabularx}
\end{table}

\section{Conclusion}
\label{sec:conclusion}

In this paper, we proposed a simple representation of tree covering in the BP sequence for both ordinal trees and binary trees. Utilizing the representation, we presented several efficient designs of succinct data structures for trees. Our designs not only reflect the hierarchy but also isolate micro trees as a sequence in the data-structure design, thus enabling compressing micro trees with an arbitrary encoding. We believe the representation can be widely utilized for designing practically memory-efficient data structures based on tree covering. 

We also addressed the implementation of average-case optimal RMQ data structures with hypersuccinct trees. By leveraging our proposed scheme and optimizing the RMQ data structures, our RMQ implementations used less than $2n$ bits and processed queries in a practical time on several settings of the performance evaluation.

Future work includes the method for determining the value of $B$. We manually adjusted the tree-covering parameter $B$ in the present study, but it was unclear how to appropriately choose $B$ when applied to arrays with increasing runs.
Also, the original array may be accessible in some applications, and it would be interesting to design a variant of our work for that case. Implementation and performance evaluation of the mechanism in Theorem~\ref{thm:TC_index_modified} for binary and ordinal trees is another future issue.

\newpage
\bibliographystyle{plainurl}
\bibliography{bib2doi.bib}

\end{document}